%
\documentclass[runningheads]{llncs}

\usepackage[dvips]{graphics}
\usepackage[utf8]{inputenc} 
\usepackage{amsmath}
\usepackage[T1]{fontenc}    
\usepackage{url}            
\usepackage{booktabs}       
\usepackage{amsfonts}       
\usepackage{nicefrac}       
\usepackage{lipsum}
\usepackage{graphicx}
\usepackage{textcomp} 
\usepackage{listings}
\usepackage{xcolor}
\usepackage{comment}
\usepackage{mathpartir} 
\usepackage[normalem]{ulem}
\usepackage{tikz}
\usepackage{algpseudocode}
\usepackage{hyperref}       

\usepackage[pass]{geometry}

\usepackage{setspace}
\usepackage{pdfpages}
\usepackage[whole]{bxcjkjatype}
\usepackage{stmaryrd}

\newcommand{\bigast}{\mathop{\scalebox{1.5}{\raisebox{-0.2ex}{$*$}}}}%
\newcommand{\bigplus}{\mathop{\scalebox{1.5}{\raisebox{-0.2ex}{$+$}}}}%

\algnewcommand\algorithmicforeach{\textbf{for each}}
\algdef{S}[FOR]{ForEach}[1]{\algorithmicforeach\ #1\ \algorithmicdo}

\lstset{
  basicstyle=\ttfamily,
  columns=fullflexible,
}

\begin{document}
\title{Relative Completeness of Incorrectness Separation Logic
}

\author{Yeonseok Lee\thanks{\email{lee.yeonseok.x2@s.mail.nagoya-u.ac.jp}}  \and 
Koji Nakazawa\thanks{\email{knak@i.nagoya-u.ac.jp}}}

\authorrunning{Y. Lee and K. Nakazawa}

\institute{Nagoya University, Nagoya, Japan}
\maketitle            

\begin{abstract}
Incorrectness Separation Logic (ISL) is a proof system that is tailored specifically to resolve problems of under-approximation in programs that manipulate heaps, and it primarily focuses on bug detection. 
This approach is different from the over-approximation methods that are used in traditional logics such as Hoare Logic or Separation Logic. 
Although the soundness of ISL has been established, its completeness remains unproven. 
In this study, we establish relative completeness by leveraging the expressiveness of the weakest postconditions; expressiveness is a factor that is critical to demonstrating relative completeness in Reverse Hoare Logic. 
In our ISL framework, we allow for infinite disjunctions in disjunctive normal forms, where each clause comprises finite symbolic heaps with existential quantifiers. 
To compute the weakest postconditions in ISL, we introduce a canonicalization that includes variable aliasing.
\keywords{Separation Logic  \and Incorrectness Logic \and Completeness.}
\end{abstract}

\section{Introduction}
\subsection{Background}
Software verification is a crucial aspect of software engineering, as it ensures that software systems are reliable, safe, and secure. 
To verify software, we rely on formal methods, which are essentially a set of guidelines and tools for checking software.

One of the most important tools is Hoare Logic (HL) \cite{hoare1969axiomatic} which checks \textit{correctness} of programs.
Hoare Triples in the form of $\{P\} \ \mathbb{C} \ \{Q\}$ mean that for all states $s$ in precondition $P$, if running $\mathbb{C}$ on $s$ terminates in $s'$, then $s'$ is in postcondition $Q$.
Let us say that $\textsf{post}(\mathbb{C}, P)$ describes the set of states obtained by executing $\mathbb{C}$ from a state in $P$.
Then the Hoare triple $\{P\} \ \mathbb{C} \ \{Q\}$ means that $Q$ \textit{over-approximates} $\textsf{post}(\mathbb{C}, P)$ that is, $\textsf{post}(\mathbb{C}, P) \subseteq Q$.

Recently, there have been some interesting logics for checking \textit{incorrectness}, meaning the presence of bugs, which are different from HL.
Reverse Hoare Logic (RHL) \cite{de2011reverse} is one of these logics.
The triples of RHL take the form $[P] \ \mathbb{C} \ [Q]$\footnote{Note that in some literature, the notation $[P] \ \mathbb{C} \ [Q]$ is used for the Hoare triple, representing total correctness.
This notation asserts that if $\mathbb{C}$ is executed in a state that satisfies $P$, it always terminates, and the final state satisfies $Q$.
However, in this paper, we use square brackets in triples to signify that the triple is an Incorrectness logic triple \cite{o2019incorrectness}.}, which means that for all states $s'$ in $Q$, $s'$ can be reached by running $\mathbb{C}$ on some $s$ in $P$.
Here, $Q$ \textit{under-approximates} $\textsf{post}(\mathbb{C}, P)$, meaning that $\textsf{post}(\mathbb{C}, P) \supseteq Q$. 
It is important to note that the direction of inclusion is \textit{reversed}.
Let us explore the differences between HL and RHL using some simple examples.
\begin{itemize}
    \item $\{\top\} \ x := y \ \{ \top\}$ is valid, but $[\top] \ x := y \ [\top]$ is not.\\
    In RHL, $x=1 \land y=2$ satisfying $\top$ (postcondition) is unreachable from states satisfying
    $\top$ (precondition) after executing $x := y$.

    \item $\{\top\} \ x := y \ \{ 0<x<10 \land x=y\}$ is invalid, but 
    $[\top] \ x := y \ [ 0<x<10 \land x=y]$ is not. \\
    In HL, after $x := y$ is executed, $x=10 \land y=10$ can be reached from $\top$. However, this state does not satisfy $0<x<10 \land x=y$.

\end{itemize}

Additionally, Incorrectness Logic (IL), which also uses under-approximation, has been proposed \cite{o2019incorrectness}.
Triples in IL are similar to those in RHL, but IL includes an exit condition within postconditions: $[P] \ \mathbb{C} \ [\epsilon:Q]$, where $\epsilon$ is either \textit{ok} for normal termination or \textit{er} for erroneous termination.

Separation Logic (SL), which was built upon HL and introduced by Reynolds \cite{reynolds2002separation}, offers a framework for the modular proofs of programs that manipulate pointers.
It achieves this by allowing us to reason about disjoint parts of a heap using the \textit{separating conjunction} operator denoted as ``$*$''. 
For example, $P * R$
asserts that the heap can be split into two disjoint parts where $P$ and $R$ respectively hold.
A crucial aspect of SL is its \textsc{(Frame)} rule, which supports local reasoning.
\textsc{(Frame)} rule states the following. 
\[
\inferrule[]
{\{P\} \ \mathbb{C} \ \{Q\}}
{\{P * R\} \ \mathbb{C} \ \{Q * R\}}
\]
This rule states that if the program $\mathbb{C}$ does not affect the portion of memory described by $R$, then $R$ can coexist with the conditions $P$ and $Q$.
This feature improves the scalability of verification, especially the verification of complex programs.
Numerous verification tools have been built with SL as the foundation, with one of the most renowned being Infer by Meta \cite{calcagno2011infer,calcagno2015open}.

%
%
Raad et al. \cite{raad2020local} proposed Incorrectness Separation Logic (ISL), which includes heap manipulation like SL in the IL system. 
They demonstrate that the $\textsc{Frame}$ rule is not sound in the IL setting without the inclusion of a \textit{negative heap assertion} $x \not\mapsto$, which indicates  that $x$ exists in the heap domain and has been deallocated.
The following is an example from \cite{raad2020local}, where $x \mapsto
-$ denotes that the location $x$ is allocated.
\[
\inferrule*
{[x \mapsto -  ] \ \texttt{free($x$)} \ [\textsf{emp}] }
{[x \mapsto - * x \mapsto - ] \ \texttt{free($x$)} \ [\textsf{emp} * x \mapsto -]}
\]
The premise is valid, but the conclusion is not. 
Since these are ISL triples, the conclusion means that every state satisfying the postcondition can be reached from some state satisfying the precondition \(x \mapsto - * x \mapsto -\), which is unsatisfiable. 
Hence, the inference is unsound.

However, this case can be corrected by adding $x \not\mapsto$, as was demonstrated in \cite{raad2020local}.
\[
\inferrule*
{[x \mapsto -  ] \ \texttt{free($x$)} \ [x \not\mapsto ] }
{[x \mapsto - * x \mapsto - ] \ \texttt{free($x$)} \ [x \not\mapsto  * x \mapsto -]}
\]
In this case, both triples are valid.
The precondition and postcondition of the conclusion are false, which makes the triple trivially valid.
The semantics of $x \not\mapsto$ are similar to $x \uparrow$ in \cite{nakazawa2018cyclic,nakazawa2020spatial}, but they differ in whether $x$ is in the heap domain ($x \not\mapsto$) or not ($x \uparrow$).


\subsection{Relative Completeness of Under-approximation Systems}
In the context of proof systems, it is crucial to establish both soundness and completeness. 
Soundness ensures that all provable statements are true, while completeness ensures that all true statements can be proven. 
Hoare logic is known to be incomplete. 
Consider the Hoare triple \(\{\top\} \ \texttt{skip} \ \{A\}\), where \(\texttt{skip}\) is a command that does nothing. 
This triple is logically equivalent to the assertion \(A\) because the triple implies \(\top \rightarrow A\), which is equivalent to \(A\). 
Hence, if the assertion theory includes arithmetic, a complete proof system for Hoare logic would contradict Gödel's incompleteness theorems \cite{godel1931formal}.
Similarly, we can demonstrate the incompleteness of ISL, since the ISL triple \([A] \ \texttt{skip} \ [\top]\) is equivalent to $A$.

Cook suggests a concept called
\textit{relative completeness} \cite{cook1978soundness}. 
Relative completeness of Hoare logic means that if we had an oracle capable of checking the validity of entailment, then Hoare logic could prove all valid Hoare triples.
Relative completeness has also been proven in IL \cite{o2019incorrectness} and RHL \cite{de2011reverse}. 
In IL, semantic predicates are utilized.
In RHL, calculating the weakest postcondition is crucial to proving relative completeness. 
De Vries et al. demonstrated that their weakest postcondition could be
expressed using formulas with infinite disjunctions \cite{de2011reverse}.
This property is referred to as \textit{expressiveness}.
They then used this result to prove relative completeness.
However, in ISL \cite{raad2020local}, only soundness has been proven.

\subsection{Our Contribution: Proving the Relative Completeness of ISL by Weakest Postcondition Calculus}
This study establishes the relative completeness of ISL. 
This endeavor aims to preserve the relative completeness of RHL while expanding
it to incorporate exit conditions and heap manipulation within ISL.
Unlike IL \cite{o2019incorrectness} and the original ISL \cite{raad2020local}, 
our proof does not rely on semantic predicates and allows for infinite disjunctions, similar to those used in RHL \cite{de2011reverse}.
It is known that entailment checking for symbolic heaps with restricted predicates, and without arithmetic, is decidable \cite{berdine2005decidable,peltier2023testing,tatsuta2015separation}. 
Consequently, we can develop an ISL proof system that is complete, rather than relatively complete, 
because our system employs symbolic heaps with restricted predicates and no arithmetic.

We follow the proof method of de Vries et al. \cite{de2011reverse} to demonstrate the expressiveness of the weakest postconditions (Proposition \ref{prop: expressiveness of WPO calculus}). 
To calculate the weakest postconditions, we introduce a process called \emph{canonicalization}, where we perform case analysis on variables and use the disjunctive normal form as our syntax. 
Even if our system includes arithmetic, canonicalization for finite formulas remains feasible because we are only dealing with a finite number of cases.
However, for simplicity, we omit arithmetic in this paper.

\paragraph{Case Analysis.}
The function $\textsf{wpo}(P, \mathbb{C}, \epsilon)$, which stands for Weakest POstcondition, calculates the weakest postcondition given the input precondition $P$, program $\mathbb{C}$, and exit condition $\epsilon$.
Consider the following example.
\[ \textsf{wpo}(y \mapsto e, \texttt{free($x$)}, \textit{ok} )  = \ ??? \]
In the above situation, we cannot determine the weakest postcondition because we do not know whether $x$ is aliasing with $y$ or not. 
Consequently, we perform a case analysis to determine whether $x$ is
aliasing with $y$ or not.
For the case $x=y$, we have
\[ \textsf{wpo}(x = y \land y \mapsto e, \texttt{free($x$)} , \textit{ok}) =
(x = y   \land   y \not\mapsto ) .   \]
This case illustrates that if $y$ aliases with $x$ and a location mapped
by $y$ is allocated, then the \texttt{free($x$)} attempts to deallocate a
heap mapped by the location pointed to by $x$. As a result, this program
safely terminates, freeing the heap denoted by $x$ which aliases with
$y$. On the other hand, if $x\neq y$, we have
\[ \textsf{wpo}(x \neq y \land y \mapsto e, \texttt{free($x$)} , \textit{er}) = 
( x \neq y \land y \mapsto e) .    \]

In this case, $\texttt{free}(x)$ terminates with an error because there is no heap denoted by $x$; that is, there is no heap to be freed.
To address these situation, we conduct a case analysis that considers all the possibilities of aliasing. Further details are discussed in Section \ref{sec: body-wpo}.

\paragraph{Disjunctive Normal Form (DNF).}

We write every assertion in \emph{disjunctive normal form} (DNF), i.e.\ as
(possibly infinite) disjunctions of \emph{symbolic heaps}.  
Symbolic heaps are the restricted fragment that has been widely analysed in the
literature~\cite{berdine2005decidable,berdine2005symbolic,tatsuta2009completeness}.
A key algebraic property of the separating conjunction is that it distributes
over disjunction~\cite{reynolds2002separation}:
\[
 (P_1 \lor P_2) * R \;\equiv\; (P_1 * R) \lor (P_2 * R).
\]
Using this equivalence, any formula that mixes $*$ and $\lor$ can be
systematically transformed into a DNF whose clauses are symbolic
heaps \cite{echenim2020bernays,nguyen2007automated}.  
Once we are in DNF, the
\emph{case analysis} outlined in the previous paragraph can be carried out
\emph{clause-by-clause}, and the resulting weakest postcondition is obtained by
simply disjoining the individual results.

Consider a singleton heap assertion \(y \mapsto e\) and an arbitrary
command \(\mathbb{C}\).  
The aliasing between \(x\) and \(y\) gives
rise to two mutually exclusive cases:
\[
  y \mapsto e
  \iff
  (x = y \land y \mapsto e)
  \;\;\lor\;\;
  (x \neq y \land y \mapsto e).
\]
Because \(\textsf{wpo}\) is computed disjunctively on DNF, we obtain
\[
  \textsf{wpo}(y \mapsto e, \mathbb{C}, \epsilon)
  \;=\;
  \textsf{wpo}(x = y \land y \mapsto e, \mathbb{C}, \epsilon)
  \;\lor\;
  \textsf{wpo}(x \neq y \land y \mapsto e, \mathbb{C}, \epsilon).
\]

The remainder of this paper is organized as follows: Section \ref{sec: body-ISL} introduces the syntax and semantics of ISL. 
Section \ref{sec: body-wpo} presents the calculation of $\textsf{wpo}$ with canonicalization.
We demonstrate the relative completeness of our ISL 
in Section \ref{sec: body-completeness}. 
Finally, in Section \ref{sec: Related Work}, we discuss related work, and we conclude in Section \ref{sec: Conclusions and Future work}.

\section{Incorrectness Separation Logic}\label{sec: body-ISL}
We define the ISL, which is slightly changed
from the original one \cite{raad2020local}.

\subsection{Assertions}

First, we introduce assertions in ISL, which are (infinite) disjunctions
of existentially quantified symbolic heaps.  The quantifier-free
symbolic heaps consist of equalities ($x\approx y$) and inequalities
($x\not\approx y$) between variables and heap predicates ($x\mapsto t$ or $x\not\mapsto$)
connected by the separating conjunction $*$. The operators $\lor$,
$\land$, and $\lnot$ are not permitted in symbolic heaps.

We assume countably infinite set of variables {\normalfont\textsc{Var}}, and we use
the metavariables $x,y,z,\ldots$ for variables.

\begin{definition}[Assertions of ISL]
An assertion $P$ is defined as follows, where $I$ is finite or
 countably infinite index set.
A \textit{term} $t$ is either a variable or the constant {\normalfont\texttt{null}}.
\begin{align*}
    P ::= \ & \bigvee_{i \in I} \exists \overrightarrow{x_i} . \psi_i
    & {\text{Top-level}}\\
    \psi ::= \ &  \psi *  \psi   & {\text{Quantifier-free Symbolic Heaps}}\\
    \ &  | \ {\normalfont\textsf{emp}}  
    \ | \ x \mapsto t  \ | \ x \not\mapsto 
    & {\text{Atomic Spatial Formulas}}\\
    &| \ t \approx t' \ | \ t \not\approx t' & {\text{Atomic Pure Formulas}}\\
    t ::= \ & x \ | \ {\normalfont\texttt{null}} & {\text{Terms}}
\end{align*}
Here, we employ vector notation to represent a sequence, for instance,
$\overrightarrow{x_i}$ denotes the sequence $x_{i1}, \ldots, x_{in}$.
The notation $\normalfont\textsf{fv}(P)$ represents the set of free
variables in $P$.  We consider only assertions $P$ such that
$\normalfont\textsf{fv}(P)$ is finite.  
For $I=\emptyset$,
$\bigvee_{i\in I}\exists\overrightarrow{x_i}.\psi_i$ is denoted by
$\normalfont\textsf{false}$. 
We ignore the order of disjuncts in assertions and atomic formulas in symbolic heaps. 
Specifically, we identify the formulas $\psi_1 * \psi_2$ and $\psi_2 * \psi_1$ as being equivalent.
Additionally, we refer to a formula connected by $*$ with only atomic pure formulas as a \emph{pure formula}.
\end{definition}

Our language draws primarily from the conventions of traditional SL
\cite{reynolds2002separation}, IL \cite{o2019incorrectness}, RHL
\cite{de2011reverse}, and ISL \cite{raad2020local}.

For the semantics of the assertions, we assume the set {\normalfont\textsc{Val}} of
values and the set {\normalfont\textsc{Loc}} of locations such that
${\normalfont\textsc{Loc}}\subseteq{\normalfont\textsc{Val}}$. 
We also assume
$\textit{null}\in{\normalfont\textsc{Val}} \setminus {\normalfont\textsc{Loc}}$ for the denotation of the constant
$\texttt{null}$.
Additionally, $\bot \not\in {\normalfont\textsc{Val}}$ is employed to keep track of deallocated locations. 
The concept of using $\bot$ to signify deallocated locations was introduced by Raad et al. \cite{raad2020local}, specifically for negative heaps, expressed as $x \not\mapsto$.
Please note that $\bot$ does not refer to the undefined case of partial functions, as in some literature.

\begin{definition}[States, Stores, and Heaps]
A \emph{state} $\sigma \in {\normalfont\textsc{State}}$ is a pair $(s, h)$, 
where $s \in {\normalfont\textsc{Store}}$ is a \emph{store} and 
$h \in {\normalfont\textsc{Heap}}$ is a \emph{heap}.

\paragraph{Store.}
A store is a total function from ${\normalfont\textsc{Var}}$ to ${\normalfont\textsc{Val}}$.  
We extend $s$ to terms by defining $s(\normalfont\texttt{null}) = \textit{null}$.  
The updated store $s[x \mapsto v]$ is defined as:
\[
s[x \mapsto v](y) =
\begin{cases}
v & \text{if } y = x \\
s(y) & \text{otherwise}
\end{cases}
\]

\paragraph{Heap.}
A heap is a finite partial function from ${\normalfont\textsc{Loc}}$ to ${\normalfont\textsc{Val}} \cup \{ \bot \}$.  
The update $h[l \mapsto v]$ is defined analogously to stores.

Two heaps $h_1$ and $h_2$ are \emph{disjoint} if 
\[
\normalfont\textsf{dom}(h_1) \cap \normalfont\textsf{dom}(h_2) = \emptyset.
\]
If so, their composition $h_1 \circ h_2$ is defined by:
\[
(h_1 \circ h_2)(l) =
\begin{cases}
h_1(l) & \text{if } l \in \normalfont\textsf{dom}(h_1) \\
h_2(l) & \text{if } l \in \normalfont\textsf{dom}(h_2) \\
\text{undefined} & \text{otherwise}
\end{cases}
\]
The composition is undefined if $h_1$ and $h_2$ are not disjoint.

\paragraph{Positive Domain \(\normalfont\textsf{dom}_+(h)\).}
We define \(\normalfont\textsf{dom}_+(h)\) as the set of locations \(l\) such that \(\normalfont h(l) \in {\normalfont\textsc{Val}}\).  
Formally:
\[
l \in \normalfont\textsf{dom}_+(h) \;\; \text{iff} \;\; \normalfont h(l) \in {\normalfont\textsc{Val}}.
\]

\end{definition}

The semantics of the assertions of ISL is defined as follows.

\begin{definition}[Assertion semantics of ISL]\normalfont 
 We define the relation $(s,h)\models P$ as 
\begin{align*}
    (s,h) \models \bigvee_{i \in I} \exists \overrightarrow{x_i} . \psi_i &\Leftrightarrow
   (s,h) \models 
     \exists \overrightarrow{x_i} . \psi_i \text{ for some $i \in I$}, \\
    (s,h) \models \exists \overrightarrow{x_i}  . \psi_i  &\Leftrightarrow
    \exists \overrightarrow{v_i}  \in {\normalfont\textsc{Val}} .  
    (s [\overrightarrow{x_i} \mapsto \overrightarrow{v_i}] ,h) \models \psi_i, \\
    (s,h) \models \psi_1 * \psi_2 &\Leftrightarrow 
    \exists h_1,h_2. h = h_1 \circ h_2 \text{\ and\ }
    (s,h_1) \models \psi_1 \text{\ and\ } (s,h_2) \models \psi_2, \\ 
    (s,h) \models \textsf{emp} &\Leftrightarrow 
    \textsf{dom}(h) = \emptyset, \\
    (s,h) \models x \mapsto t &\Leftrightarrow 
    \textsf{dom}(h) = \{ s(x) \} \text{\ and\ } h(s(x))= s(t), \\
    (s,h) \models x \not\mapsto &\Leftrightarrow 
    \textsf{dom}(h) = \{ s(x) \} \text{\ and\ } h(s(x))= \bot, \\
    (s,h) \models t \approx t' &\Leftrightarrow 
    s(t) = s(t') \text{\ and\ } \textsf{dom}(h) = \emptyset, \\
    (s,h) \models t \not\approx t' &\Leftrightarrow 
    s(t) \neq s(t') \text{\ and\ } \textsf{dom}(h) = \emptyset,
 \end{align*}
where $s[\overrightarrow{x_i} \mapsto \overrightarrow{v_i}]$ means
$s[x_{i1} \mapsto v_{i1}] \ldots [x_{in} \mapsto v_{in}]$.
For a pure formula \(B\), \(s \models B\) means that \((s, \emptyset) \models B\) for the empty heap \(\emptyset\).
\end{definition}

\subsection{Program Language}

We recall the program language for ISL in \cite{raad2020local}.

\begin{definition}[Program Language $\mathbb{C}\in \textsc{Comm}$]\normalfont
\begin{align*}
    \mathbb{C} ::= \ &\texttt{skip} \ | \ x:=t \ | \ x:= \texttt{*} 
     \ | \ \texttt{assume($B$)} \ | \ \texttt{local $x $ in $\mathbb{C}$} \\
    &| \
    \mathbb{C}_1 ; \mathbb{C}_2 \ | \ \mathbb{C}_1 + \mathbb{C}_2 \ | \ \mathbb{C}^\star⋆ \\
    &| \  x:= \texttt{alloc()} \ | \ \texttt{free($x$)} \ | \ x:= [y] \ | \ [x]:= t \ | \ \texttt{error} \\
\end{align*}
Here, $B$ is a pure formula.
\end{definition}

The program language includes standard constructs such as \texttt{skip},
assignment $x := t$, nondeterministic assignment $x := \texttt{*}$
(where \texttt{*} represents a nondeterministically selected value),
assume statements \texttt{{assume}($B$)}, local variable
declarations \texttt{local $x$ in $\mathbb{C}$}, sequential composition
$\mathbb{C}_1 ; \mathbb{C}_2$, nondeterministic choice $\mathbb{C}_1 +
\mathbb{C}_2$, and loops $\mathbb{C}^\star$.
Additionally, it includes error statements \texttt{error} and
instructions for heap manipulation.

Instructions for heap manipulation encompass operations such as allocation, deallocation, lookup, and mutation. 
For instance, the instruction $x := \texttt{alloc()}$ reserves a new unused location on the heap and assigns it to the variable $x$. 
Conversely, \texttt{free($x$)} deallocates the memory location referred to by $x$. 
To read the contents of a heap location, $x := [y]$ is used, where the value stored at the location specified by $y$ is retrieved and assigned to $x$. 
In contrast, the heap mutation $[x]:= t$ involves replacing the data stored at the location indicated by $x$ with the value of $t$.

Note that deterministic choices ({\normalfont\texttt{if}}) and loops ({\normalfont\texttt{while}}) can be encoded using nondeterministic choices ($+$) and {\normalfont\texttt{assume}} statements \cite{raad2020local,o2019incorrectness}.
\begin{align*}
\texttt{if $B$ then $\mathbb{C}_1$ else $\mathbb{C}_2$} & \overset{\text{def}}{=}  
(\texttt{assume($B$)} ; \mathbb{C}_1) + ( \texttt{assume(!$B$)} ; \mathbb{C}_2) \\
\texttt{while($B$) } \mathbb{C} &\overset{\text{def}}{=} (\texttt{assume($B$)};\mathbb{C})^\star ;\texttt{assume(!$B$)}\\
\texttt{assert($B$)} &\overset{\text{def}}{=} (\texttt{assume(!$B$)}; \texttt{error}) + \texttt{assume($B$)} \\
x:=\texttt{malloc()} &\overset{\text{def}}{=} x:=\texttt{alloc()} + x:=\texttt{null}
\end{align*}
Here, we define \texttt{assume(!$B$)} for a pure formula $B =  b_1 * \ldots * b_n $, where each $b_i$ is an atomic pure formula. The definition is as follows:
\[
\texttt{assume(!$B$)} = \texttt{assume}(\lnot b_1) + \ldots + \texttt{assume}(\lnot b_n),
\]
where $\lnot b_i$ is defined as
$\lnot (t_i \approx {t_i}') = t_i \not\approx {t_i}'$ and 
$\lnot (t_i \not\approx {t_i}') = t_i \approx {t_i}'$.

We provide the denotational semantics of our programming languages.

\begin{definition}[Denotational semantics of ISL]\label{def:
 Denotational semantics of ISL} We define $\llbracket  \mathbb{C}\rrbracket  _\epsilon$
 for a program $\mathbb{C}$ and an \textit{exit condition}
 $\epsilon\in\{\textit{ok},\textit{er}\}$ as a binary relation on
 {\normalfont\textsc{State}}. We use the composition of binary relations: for
 binary relations $R_1$ and $R_2$, $R_1;R_2$ is the relation
 $\{(a,c)\mid \exists b.(a,b)\in R_1\text{\ and\ }(b,c)\in R_2\}$.

{\scriptsize\normalfont \begin{align*}
\llbracket   \texttt{skip} \rrbracket  _\epsilon &=
\begin{cases}
\{ (\sigma,\sigma) \ | \ \sigma \in {\normalfont\textsc{State}} \} & \epsilon= \textit{ok} \\
\emptyset &\epsilon= \textit{er} 
\end{cases} \\
\llbracket   x := t \rrbracket  _\epsilon &=
\begin{cases}
\{ ( (s,h),  (s[x \mapsto s(t)], h)   )  \} & \epsilon= \textit{ok} \\
\emptyset &\epsilon= \textit{er}
\end{cases}   \\ 
\llbracket   x := \texttt{*} \rrbracket  _\epsilon &=
\begin{cases}
\{ ( (s,h),  (s[x \mapsto v], h)   )  \ | \ v \in {\normalfont\textsc{Val}} \} & \epsilon= \textit{ok} \\
\emptyset &\epsilon= \textit{er}
\end{cases}   \\ 
\llbracket   \texttt{assume($B$)} \rrbracket  _\epsilon &=
\begin{cases}
\{ (\sigma,\sigma) \ | \  \sigma = (s,h) \land s \models B \} & \epsilon= \textit{ok} \\
\emptyset &\epsilon= \textit{er}
\end{cases}   \\ 
\llbracket   \texttt{error} \rrbracket  _\epsilon &=
\begin{cases}
\emptyset  & \epsilon= \textit{ok} \\
\{ (\sigma,\sigma) \ | \ \sigma \in {\normalfont\textsc{State}} \}&\epsilon= \textit{er}
\end{cases}   \\ 
\llbracket   \mathbb{C}_1; \mathbb{C}_2 \rrbracket  _\epsilon &=
  \begin{cases}
   \llbracket   \mathbb{C}_1\rrbracket  _\textit{ok};\llbracket   \mathbb{C}_2\rrbracket  _\textit{ok}
   & \epsilon= \textit{ok} \\
   \llbracket   \mathbb{C}_1\rrbracket  _\textit{er} \cup
   \llbracket   \mathbb{C}_1\rrbracket  _\textit{ok};\llbracket   \mathbb{C}_2\rrbracket  _\textit{er} 
   & \epsilon= \textit{er}
  \end{cases} \\
\llbracket   \texttt{local $x $ in $\mathbb{C}$} \rrbracket  _\epsilon &=
\{  (  (s ,h),  (s' ,h')   ) \ | \ \exists v , v' \in {\normalfont\textsc{Val}} . 
( (s [x \mapsto v ] ,h)  ,  (s' [x \mapsto v'] ,h')  ) 
\in \llbracket  \mathbb{C}   \rrbracket  _\epsilon    \land s(x) = s'(x) \} \\
\llbracket   \mathbb{C}_1 + \mathbb{C}_2 \rrbracket  _\epsilon &= 
\llbracket   \mathbb{C}_1  \rrbracket  _\epsilon \cup \llbracket   \mathbb{C}_2  \rrbracket  _\epsilon \\
\llbracket   \mathbb{C}^\star \rrbracket  _\epsilon &=  
\bigcup_{m\in \mathbb{N}} \llbracket   \mathbb{C}^m \rrbracket  _\epsilon \\
\llbracket   x := \texttt{alloc()} \rrbracket  _\epsilon &=
\begin{cases}
 \{ (  (s,h), (s[x\mapsto l], h[l\mapsto v]) )  \ | \ 
 v \in {\normalfont\textsc{Val}} \land ( l \notin \textsf{dom} (h) \lor h(l) = \bot)
 \}
 & \epsilon = \textit{ok} \\
 \emptyset & \epsilon = \textit{er}
\end{cases} \\
\llbracket  \texttt{free($x$)} \rrbracket  _\epsilon &= 
\begin{cases}
\{ (  \sigma, (s, h[s(x) \mapsto \bot]) )  \ | \ 
\sigma = (s,h) \land s(x) \in \textsf{dom}_+(h) \} & \epsilon= \textit{ok} \\
\{ (  \sigma, \sigma )  \ | \ 
\sigma = (s,h) \land s(x) \notin \textsf{dom}_+(h)  \}
 & \epsilon= \textit{er} 
\end{cases}\\
\llbracket   x := [y] \rrbracket  _\epsilon &= 
\begin{cases}
\{ (  \sigma, (s[x \mapsto h(s(y))] , h) )  \ | \ 
\sigma = (s,h) \land s(y) \in \textsf{dom}_+(h) \} & \epsilon= \textit{ok} \\
\{ (  \sigma, \sigma )  \ | \ 
\sigma = (s,h) \land s(y) \notin \textsf{dom}_+(h)  \}
 & \epsilon= \textit{er}
\end{cases}\\
\llbracket   [x] := t \rrbracket  _\epsilon &= 
\begin{cases}
\{ (  \sigma, (s , h [s(x) \mapsto s(t)] ) )  \ | \ 
\sigma = (s,h) \land  s(x) \in \textsf{dom}_+(h)  \} & \epsilon= \textit{ok} \\
\{ (  \sigma, \sigma )  \ | \ 
\sigma = (s,h) \land s(x) \notin \textsf{dom}_+(h)  \}
 & \epsilon= \textit{er} 
\end{cases}
\end{align*}
 } 
 Here, $\normalfont\mathbb{C}^0 = \texttt{skip}$ and $\mathbb{C}^{m+1} =
\mathbb{C};\mathbb{C}^{m}$.
\end{definition}

In~\cite{raad2020local,lee2024relative}, the denotational semantics defines the \textit{er} case for \texttt{free($x$)} 
in two cases:
\begin{itemize}
  \item {Null-free}: \(s(x) = \textit{null}\);
  \item {Double-free}: \(h(s(x)) = \bot \) (the location has already been deallocated).
\end{itemize}

We extend the error cases as follows:
\[
  \texttt{free($x$)}\ \text{fails whenever} \quad s(x) \notin \textsf{dom}_{+}(h).
\]
That is, the command succeeds when the address is allocated, and it signals an error otherwise. 
The existing ISL does not capture attempts to free a \emph{never-allocated but non-null} address as erroneous. 
However, in our semantics, such a case is detected: although \(x \neq \texttt{null}\), the location \(s(x)\) is not in \(\textsf{dom}_{+}(h)\), so an error is  raised.

The same principle applies to other heap-manipulating commands such as \(x := [y]\) and \([x] := t\): each of these succeeds only if the accessed address \(s(y)\) or \(s(x)\) lies in \(\textsf{dom}_{+}(h)\); otherwise, an error occurs.

\subsection{ISL Triples and Proof Rules}

The ISL triples are of the form $[P]\ \mathbb{C}\ [\epsilon: Q]$, and
their validity is defined as follows.

\begin{definition}[Validity of ISL Triples]
\label{def: Validity of ISL Triples}
    $$\models [P] \ \mathbb{C} \ [\epsilon: Q] \overset{\text{def}}{\iff}
    \forall \sigma'\models Q  , \exists \sigma  \models P. 
    (\sigma,\sigma') \in \llbracket   \mathbb{C} \rrbracket  _\epsilon$$
\end{definition}

Before presenting our ISL proof rules, we provide some additional definitions.
$\textsf{mod}(\mathbb{C})$ denotes the set of free variables modified by a program $\mathbb{C}$.
\begin{definition}[The set of variables modified by $\mathbb{C}$, $\normalfont\textsf{mod}(\mathbb{C})$]
\label{def: formal definition of mod(C)}
    \normalfont\begin{align*}
    \textsf{mod}(\texttt{skip}) &= \emptyset &
    \textsf{mod}(x := t) &= \{ x \} \\
    \textsf{mod}(x := \texttt{*}) &= \{ x \} &
    \textsf{mod}(\texttt{assume($B$)}) &= \emptyset \\
    \textsf{mod}(\texttt{error}) &= \emptyset &
    \textsf{mod}(x := \texttt{alloc()}) &= \{ x \} \\
    \textsf{mod}(\texttt{free($x$)}) &= \emptyset &
    \textsf{mod}(x := [y]) &= \{ x \} \\
    \textsf{mod}([x] := t) &= \emptyset    & \textsf{mod}(\texttt{local $x$ in $\mathbb{C}$}) &= \textsf{mod}(\mathbb{C}) \setminus \{ x \}
    \end{align*}
    \begin{align*}
    \textsf{mod}(\mathbb{C}_1 ; \mathbb{C}_2) &= 
    \textsf{mod}(\mathbb{C}_1) \cup \textsf{mod}(\mathbb{C}_2) \\
    \textsf{mod}(\mathbb{C}_1 + \mathbb{C}_2) &= 
    \textsf{mod}(\mathbb{C}_1) \cup \textsf{mod}(\mathbb{C}_2) \\
    \textsf{mod}(\mathbb{C}^\star) &=     \textsf{mod}(\mathbb{C}) \\
    \end{align*}
\end{definition}

The canonical forms of quantifier-free symbolic heaps with respect to a set
of variables $V$ are defined. In a canonical form, the case analysis for
all combinations of terms in $V\cup\{\texttt{null}\}$ is done.
As we will see in the next section, any symbolic heap can be transformed
to a disjunction of canonical forms.

\begin{definition}[Canonical forms]
    \begin{align*}
{\normalfont\textsf{CF}_\text{sh}} (V) =  \{ \psi & \mid  \text{$\psi$ is a quantifier free symbolic heap} \\
& \land 
\forall t,u \in V \cup \{{\normalfont\texttt{null}}\} . 
\text{$\psi$ contains either $t \approx u$ or $t \not\approx u$}
 \}        
    \end{align*}
If $\normalfont\psi \in {\normalfont\textsf{CF}_\text{sh}} (\textsf{fv}(\psi)\cup\textsf{fv}(\mathbb{C})) $, 
then $\psi$ is called \textit{canonical} for $\mathbb{C}$.
We also say that $\bigvee_{i \in I} \exists \overrightarrow{x_i} . \psi_i$ is canonical for $\mathbb{C}$ if, for all $i$, $\psi_i$ is canonical for $\mathbb{C}$.
\end{definition}

Regarding certain heap manipulation rules, such as \textsc{Free}, \textsc{Load}, and \textsc{Store}, 
we require that symbolic heaps be in canonical form as a premise.  
To formalize these rules, we introduce the following notation and concepts.

\paragraph{Atomic Membership and Aliases.}
Let \(\psi\) be a canonical quantifier-free symbolic heap of the form:
\[
  \psi = a_1 * a_2 * \cdots * a_k \quad (k \geq 0),
\]
where each \(a_i\) is an atomic formula. We distinguish between two classes of atomic formulas: spatial atoms \((\mathsf{emp},\ v \mapsto t,\ v \not\mapsto)\) and pure atoms \((t \approx u,\ t \not\approx u)\).

\noindent 
We write $\alpha \in \psi$ if and only if $\alpha$ occurs syntactically as one of the atomic formulas $a_1, \ldots, a_k$ in $\psi$.  
Likewise, $\alpha \notin \psi$ denotes the negation $\lnot(\alpha \in \psi)$.

The set of variables aliasing with $x$ in $\psi$ is defined as:
\[
  \mathsf{Aliases}(x, \psi) = \{\, v \in \textsc{Var} \mid  x \approx v \in \psi  \,\}
\]

\noindent
That is, $\mathsf{Aliases}(x, \psi)$ contains all variables that are provably equal to $x$ under the pure part of $\psi$.

\begin{definition}[ISL Proof Rules]
In Figure \ref{fig:generic proof rules of ISL} and \ref{fig: rules for
 variables and mutation}, we present our ISL proof rules, where the
 axioms of the form $[P]\ \mathbb{C}\
 [\textit{ok}:Q_1][\textit{er}:Q_2]$ means that both $[P]\ \mathbb{C}\
 [\textit{ok}:Q_1]$ and $[P]\ \mathbb{C}\ [\textit{er}:Q_2]$ can be
 deduced.
\end{definition}

Note that we adopt an infinitary syntax and incorporate an infinitary version of the \textsc{Disj} rule, following the approach taken in Reverse Hoare Logic (RHL) \cite{de2011reverse}.

\paragraph{Why We Restrict $\normalfont\textsc{Frame}$ to the `{ok}' Case.}
Regarding the \textsc{Frame} rule, we consider only the \emph{ok} case, which differs from the formulation in the existing ISL~\cite{raad2020local,lee2024relative}.
The reason for this restriction is that including the \emph{er} case in the \textsc{Frame} rule leads to a violation of soundness in our semantics. 

For example,  consider the following rule instance of \textsc{Frame} for \textit{er}:
\[
\inferrule[]
{[\textsf{emp}] \ \texttt{free($x$)} \  
[\textit{er}:  \textsf{emp}]}
{[\textsf{emp} * x \mapsto 1] \ \texttt{free($x$)} \  
[\textit{er}:  \textsf{emp} * x \mapsto 1 ]  }
\]

\noindent
Here, applying the \textsc{Frame} rule with $x \mapsto 1$ as the frame yields an invalid triple: although the upper (premise) triple is valid, the lower (conclusion) triple is not.
This breakdown illustrates that extending the \textsc{Frame} rule to the \emph{er} case compromises soundness in our semantics. Therefore, we restrict \textsc{Frame} to apply only to the \emph{ok} case.


\begin{figure}
\scriptsize
\begin{mathpar}
\inferrule[\textsc{Skip}]
{ \ }
{ [P] \ \texttt{skip} \ [\textit{ok}: P] \
[\textit{er}:\textsf{false}] }
\and

\inferrule[\textsc{Error}]
{ \ }
{ [ P ] \ \texttt{error()} \ [\textit{ok}:\textsf{false}] \
[\textit{er}: P]}
\and

\inferrule[\textsc{Seq1}]
{ [P] \ \mathbb{C}_1 \ 
[\textit{er}: Q] }
{ [P] \ \mathbb{C}_1;\mathbb{C}_2 \ 
[\textit{er}: Q] }
\and

\inferrule[\textsc{Seq2}]
{ [P] \ \mathbb{C}_1 \ [\textit{ok}:R] \ \ \ 
 [R] \ \mathbb{C}_2 \ [\epsilon:Q]
}
{ [P] \ \mathbb{C}_1;\mathbb{C}_2 \ 
[ \epsilon : Q] }
\and

\inferrule[\textsc{Loop zero}]
{ \ }
{ [P] \ \mathbb{C}^\star \ [\textit{ok}: P] \ [\textit{er}:\textsf{false}] }
\and

\inferrule[\textsc{Loop non-zero}]
{  [P] \ \mathbb{C}^\star;\mathbb{C} \ [\epsilon: Q]  }
{  [ P ] \ \mathbb{C}^\star \ [\epsilon: Q] }
\and

\inferrule[\textsc{Cons}]
{P' \models P \ \ \ 
 [P'] \ \mathbb{C} \ [\epsilon: Q']
\ \ \ Q \models Q'}
{ [P] \ \mathbb{C} \ [\epsilon: Q] }
\and

\inferrule[\textsc{Disj}]
{ [  P_i ] \ \mathbb{C} \
    [\epsilon:  Q_i  ]  \quad \text{for all $i \in I$}}
{ [   \bigvee_{i \in I} P_i ] \ \mathbb{C} \ 
[ \epsilon :  \bigvee_{i \in I} Q_i] }
\and

\inferrule[\textsc{Choice}]
{ [P] \ \mathbb{C}_1 \ [ \epsilon : Q] \ \ \ 
 [P] \ \mathbb{C}_2 \ [\epsilon:Q]
}
{ [P] \ \mathbb{C}_1 + \mathbb{C}_2 \ 
[ \epsilon : Q] }
\and

\inferrule[\textsc{Exist}]
{ [  \psi ] \ \mathbb{C}  \ 
[ \epsilon :   \varphi  ]  \ \ \  x \notin \textsf{fv}(\mathbb{C})  }
{ [  \exists  x  . \psi  ] \ \mathbb{C}  \ 
[ \epsilon :  \exists  x  . \varphi  ] }

\and

\inferrule[\textsc{Assign}]
{\ }
{ [ \psi ] \ x:=t \ [\textit{ok}:
\exists  x' . \psi [x := x'] * x \approx  t  [x := x']   ] \
[\textit{er}:\textsf{false}] }
\and

\inferrule[\textsc{Havoc}]
{ \ }
{ [ \psi ] \ x:=\texttt{*} \ [\textit{ok}: \exists x'. \psi [x := x'] ] \
[\textit{er}:\textsf{false}] }
\and

\inferrule[\textsc{Assume} \normalfont\text{($B$ is a pure formula)}]
{ \ }
{ [\psi] \ \texttt{assume($B$)} \ [\textit{ok}: \psi * B] \
[\textit{er}:\textsf{false}] }
\and

\inferrule*[lab=\textsc{Local}]
{ [ \psi   ] \ \mathbb{C}  \ 
[ \epsilon :  \varphi]   \ \ \ 
{x \notin \textsf{fv}(\psi)}
}
{ [  \psi  ] 
\ \texttt{local $x $ in $\mathbb{C}$}  \ 
[ \epsilon : \exists  x. \varphi] }

\and

\inferrule[\textsc{FrameOk}]
{ [ \psi] \ \mathbb{C} \ [ \textit{ok} :  \varphi] \ \ \ 
\textsf{mod}(\mathbb{C}) \cap
\textsf{fv}( \phi) = \emptyset }
{ [ \psi * \phi] \ \mathbb{C} \ 
[ \textit{ok} : \varphi * \phi  ] }

\end{mathpar}

\caption{Proof Rules of ISL  }
    \label{fig:generic proof rules of ISL}
\end{figure}

\begin{figure}
\begin{mathpar}

\inferrule[\textsc{Alloc1}]
{ \ }
{ [\psi] \ x:=\texttt{alloc()} \ 
\left[ \textit{ok}:  
\exists x' , v .   \psi[x := x'] * x \mapsto v  \right] \ 
[\textit{er}:\textsf{false}]}
\and

\inferrule[\textsc{Alloc2}]
{ \ }
{ [   \psi * y \not\mapsto] \ x:=\texttt{alloc()} \ 
[\textit{ok}:  
\exists x' , v .  x \mapsto v *  x \approx y * \psi [x:=x'] ] \ 
[\textit{er}:\textsf{false}]}
\and

\inferrule[\textsc{Free}]
{ \psi \in \textsf{CF}_\text{sh} (\textsf{fv}(\psi) \cup \{ x \}) \\
y \in \textsf{Aliases}(x, \psi) \\ 
\psi = \psi' *  y \mapsto t 
}
{ [ \psi ] \ \texttt{free($x$)} \ 
[\textit{ok}:  \psi' *  y \not\mapsto  ] \ 
[\textit{er}:\textsf{false}]
}
\and

\inferrule[\textsc{FreeEr}]
{
\psi \in \textsf{CF}_{\text{sh}}(\textsf{fv}(\psi) \cup \{x\}) \\
\forall v \in \textsf{Aliases}(x, \psi). \ (v \mapsto t) \notin \psi 
}
{
[\psi] \ \texttt{free($x$)} \ 
[\textit{ok}: \textsf{false}] \ 
[\textit{er}: \psi]
}

\and

\inferrule[\textsc{Load}]
{ \psi \in \textsf{CF}_\text{sh} (\textsf{fv}(\psi) \cup \{ x,y \}) \\
z \in \textsf{Aliases}(y, \psi) \\ 
\psi = \psi' *  z \mapsto t}
{ [\psi ] \ x := [y] \ 
[\textit{ok}: \exists x'. \psi [x := x'] * x \approx (t [x := x'])   ]
\ 
[\textit{er}:\textsf{false}]}
\and

\inferrule[\textsc{LoadEr}]
{ \psi \in \textsf{CF}_\text{sh} (\textsf{fv}(\psi) \cup \{ x,y \}) 
\\
\forall v \in \textsf{Aliases}(y, \psi). \ (v \mapsto t) \notin \psi
}
{ [ \psi] \ x := [y] \  
[\textit{ok}:\textsf{false}] \
[\textit{er}: \psi ]}
\and

\inferrule[{\normalfont\textsc{Store}}]
{ \psi \in \textsf{CF}_\text{sh} (\textsf{fv}(\psi) \cup \{ x,t \}) \\
z \in \textsf{Aliases}(x, \psi) \\ 
\psi = \psi' *  z \mapsto t' }
{ [ \psi ] \  [x] := t \ 
[\textit{ok}: \psi' *  z \mapsto t  ] \ 
[\textit{er}:\textsf{false}]}
\and

\inferrule[\textsc{StoreEr}]
{ \psi \in \textsf{CF}_\text{sh} (\textsf{fv}(\psi) \cup \{ x,t \}) 
\\
\forall v \in \textsf{Aliases}(x, \psi). \ (v \mapsto t') \notin \psi
}
{ [ \psi ] \  [x] := t \ 
[\textit{ok}:\textsf{false}] \
[\textit{er}: \psi ]}

\end{mathpar}

\caption{Rules for Heap Manipulation   }
    \label{fig: rules for variables and mutation}
\end{figure}

\section{Weakest Postcondition}\label{sec: body-wpo}
In this section, we describe the definition of the weakest postcondition
and introduce a function $\textsf{wpo}(P,\mathbb{C},\epsilon)$ that computes 
formulas representing weakest postconditions.
Prior to defining $\textsf{wpo}$, we introduce the canonicalization,
which transforms a formula into a canonical formula.

\subsection{Canonicalization}\label{subsec: Canonicalization}
In canonicalization, the function \(\textsf{CA}\) performs case analysis on all possible equalities and inequalities within assertions.
For a quantifier-free
symbolic heap $\psi$ and a program $\mathbb{C}$,
$\textsf{CA}(\psi,\mathbb{C})$ returns the result of case analysis for
$\psi$ with respect to the free variables in $\psi$ and $\mathbb{C}$.

 \begin{definition}[Case Analysis $\normalfont{\textsf{CA}}$]\label{def: Case Analysis for all possible aliasing and non-aliasing}
The function $\normalfont\textsf{CA}$, taking a symbolic heap $\psi$ and
  a command $\mathbb{C}$ as inputs, is defined as follows.
\[
  {\normalfont\textsf{CA}}(\psi, \mathbb{C}) =
  \{\pi * \psi \mid
  \pi \in \Pi ({\normalfont \textsf{fv}(\psi) \cup \textsf{fv}(\mathbb{C})}) \land \text{$(\pi * \psi)$ is satisfiable}
  \},
\]
where $\normalfont\Pi$ is defined as
\begin{align*}
\normalfont\Pi (V) = \{ (\bigast_{(t,s)\in S} t \approx s) * (\bigast_{(t,s)\not\in S} t \not\approx s) \ | \  &  S \subseteq ( V \cup\{{\normalfont\texttt{null}}\})^2
 \\
&\land S \text{\ is reflexive and symmetric}
\}
\end{align*}
for a set $V$ of variables.
\end{definition}

To illustrate the canonicalization function, we compute
\[
  \mathsf{CA}\bigl(y \mapsto t,\; \texttt{free}(x)\bigr).
\]
Omitting the tautological equalities
$x \approx x$ and $y \approx y$ for brevity, 
the result is
\begin{align*}
\mathsf{CA}(y \mapsto t,\;\texttt{free}(x)) = \Bigl\{\;
& x \approx y * y \not\approx \texttt{null} * x \not\approx \texttt{null}
   * y \mapsto t,                       &&\text{(1)} \\[2pt]
& x \not\approx y * y \not\approx \texttt{null} * x \approx \texttt{null}
   * y \mapsto t,                       &&\text{(2)} \\[2pt]
& x \not\approx y * y \not\approx \texttt{null} * x \not\approx \texttt{null}
   * y \mapsto t                        &&\text{(3)} \;\Bigr\},
\end{align*}

\noindent
where unsatisfiable combinations (e.g.\
$
x \approx y * y \approx \texttt{null}
* x \approx \texttt{null} * y \mapsto t
$) are discarded.

We define the canonicalization function.

\begin{definition}[Canonicalization $\normalfont\textsf{cano}$]
Suppose that $P = \bigvee_{i \in I} \exists \overrightarrow{x_i} . \psi_i$ and 
 $\normalfont  \overrightarrow{x_i} \notin \textsf{fv}(\mathbb{C})$ 
 for each $i \in I$. We define
$$\normalfont
 \textsf{cano}(P,\mathbb{C}) =  
 \bigvee_{i \in I}
 \bigvee_{\varphi \in \textsf{CA}(\psi_i,\mathbb{C})}
 \exists \overrightarrow{x_i} .\varphi$$
\end{definition}

The result of this function is canonical and equivalent to the input assertion.

\begin{lemma}\label{lma:equivalence of normalization}
For all $P$ and $\mathbb{C}$, $\normalfont\textsf{cano}(P,\mathbb{C})$
 is canonical for $\mathbb{C}$ and equivalent to $P$.
\end{lemma}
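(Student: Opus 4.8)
The lemma has two halves: (i) $\textsf{cano}(P,\mathbb{C})$ is canonical for $\mathbb{C}$, and (ii) it is equivalent to $P$. I would prove each half by reducing it to a statement about a single quantifier-free symbolic heap $\psi_i$ and the set $\textsf{CA}(\psi_i,\mathbb{C})$, since $\textsf{cano}$ is defined by disjoining the results clause-by-clause and prefixing the original existential quantifiers $\exists\overrightarrow{x_i}$. Both the notion ``canonical for $\mathbb{C}$'' and the assertion semantics $\models$ commute with disjunction and with the (fixed) existential prefixes, so the two halves follow from: (i$'$) every $\varphi\in\textsf{CA}(\psi_i,\mathbb{C})$ lies in $\textsf{CF}_{\text{sh}}(\textsf{fv}(\varphi)\cup\textsf{fv}(\mathbb{C}))$; and (ii$'$) $\psi_i \equiv \bigvee_{\varphi\in\textsf{CA}(\psi_i,\mathbb{C})}\varphi$ (as assertions, i.e.\ over all states).

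\textbf{Canonicity (i$'$).}
Fix $\psi = \psi_i$ and write $V = \textsf{fv}(\psi)\cup\textsf{fv}(\mathbb{C})$. Each element of $\textsf{CA}(\psi,\mathbb{C})$ has the form $\pi * \psi$ with $\pi \in \Pi(V)$. By the definition of $\Pi(V)$, $\pi$ contains, for every pair $(t,u)\in (V\cup\{\texttt{null}\})^2$, exactly one of $t\approx u$ or $t\not\approx u$ (namely $t\approx u$ iff $(t,u)\in S$). Hence for every $t,u \in V\cup\{\texttt{null}\}$ the heap $\pi*\psi$ contains $t\approx u$ or $t\not\approx u$. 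Since also $\textsf{fv}(\pi*\psi)\subseteq V$ (the pure atoms of $\pi$ only mention terms from $V\cup\{\texttt{null}\}$, and $\textsf{fv}(\psi)\subseteq V$), we get $\textsf{fv}(\pi*\psi)\cup\textsf{fv}(\mathbb{C})\subseteq V$, and therefore $\pi*\psi \in \textsf{CF}_{\text{sh}}(\textsf{fv}(\pi*\psi)\cup\textsf{fv}(\mathbb{C}))$; that is, $\pi*\psi$ is canonical for $\mathbb{C}$. Applying this to every $i$ and every $\varphi\in\textsf{CA}(\psi_i,\mathbb{C})$ gives canonicity of $\textsf{cano}(P,\mathbb{C})$ for $\mathbb{C}$.

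\textbf{Equivalence (ii$'$).}
I would show $(s,h)\models\psi$ iff $(s,h)\models\pi*\psi$ for some $\pi\in\Pi(V)$ with $\pi*\psi$ satisfiable. For the right-to-left direction, $\pi$ is a pure formula, so by the semantics of $*$ and of pure atoms, $(s,h)\models\pi*\psi$ forces $h = \emptyset\circ h$ and $(s,h)\models\psi$; so this direction is immediate (and does not even need satisfiability of $\pi*\psi$, though the discarded clauses are vacuous anyway). For the left-to-right direction, suppose $(s,h)\models\psi$. Define $S = \{(t,u)\in (V\cup\{\texttt{null}\})^2 \mid s(t)=s(u)\}$; this $S$ is reflexive and symmetric, so it determines a $\pi\in\Pi(V)$, and $s\models\pi$ (the empty heap satisfies every pure atom of $\pi$ by construction of $S$). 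Then $(s,h)\models\pi*\psi$ by splitting $h = \emptyset\circ h$, and in particular $\pi*\psi$ is satisfiable, so $\pi*\psi\in\textsf{CA}(\psi,\mathbb{C})$. Hence $(s,h)\models\bigvee_{\varphi\in\textsf{CA}(\psi,\mathbb{C})}\varphi$. Lifting over the disjunction on $i$ and over the existential prefixes $\exists\overrightarrow{x_i}$ — using the hypothesis $\overrightarrow{x_i}\notin\textsf{fv}(\mathbb{C})$ so that the prefixes do not interfere with the freshly added pure atoms mentioning $\textsf{fv}(\mathbb{C})$ — yields $P \equiv \textsf{cano}(P,\mathbb{C})$.

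\textbf{Main obstacle.}
The conceptual content is routine; the only genuinely delicate point is bookkeeping around the existential prefixes. One must check that the variables named in the newly introduced pure atoms of $\pi$ are exactly those of $V = \textsf{fv}(\psi_i)\cup\textsf{fv}(\mathbb{C})$ and are \emph{not} captured by $\exists\overrightarrow{x_i}$ — this is where the standing assumption $\overrightarrow{x_i}\notin\textsf{fv}(\mathbb{C})$ is used, together with the convention that the bound variables $\overrightarrow{x_i}$ may be assumed fresh (in particular disjoint from $\textsf{fv}(\mathbb{C})$ and not renamed by $\textsf{CA}$). I would state this freshness convention explicitly and then the commutation of $\models$ with $\exists$ and $\bigvee$ closes the argument; a one-line remark also handles the infinite index set $I$, since $\bigvee_{i\in I}$ in both the syntax and the semantics ranges over the same $I$.
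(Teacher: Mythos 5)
Your proof is correct and follows essentially the same route as the paper's: canonicity by observing that every element of $\textsf{CA}(\psi_i,\mathbb{C})$ lies in $\textsf{CF}_{\text{sh}}(\textsf{fv}(\psi_i)\cup\textsf{fv}(\mathbb{C}))$, and equivalence by establishing the clause-level identity $\psi_i \equiv \bigvee_{\varphi\in\textsf{CA}(\psi_i,\mathbb{C})}\varphi$ and then commuting $\exists\overrightarrow{x_i}$ with the disjunctions. Your explicit construction of $\pi$ from the store is just an unrolled form of the paper's iterated tautology $\psi \equiv (\psi * t\approx u)\lor(\psi * t\not\approx u)$; the only slight slip is in your closing remark, since the occurrences of $\overrightarrow{x_i}$ in the new pure atoms \emph{are} (intentionally) captured by the restored quantifier prefix, which is harmless because the clause-level equivalence is quantifier-free and is preserved under quantification.
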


\begin{proof}
It is easy to see that every element in $\textsf{CA}(\psi,\mathbb{C})$
 is in $\textsf{CF}_{\rm sh}(\textsf{fv}(\psi)\cup \textsf{fv}(\mathbb{C}))$, and
 hence $\normalfont\textsf{cano}(P,\mathbb{C})$ is canonical for $\mathbb{C}$.

 Since every symbolic heap $\psi$ is equivalent to $(\psi * t\approx u)\lor
 (\psi * t\not\approx u)$ for any terms $t,u$, we have that $\psi$ is
 equivalent to $\bigvee_{\varphi\in\textsf{CA}(\psi,\mathbb{C})}\varphi$.
 Therefore, $\exists \overrightarrow{x_i}.\psi_i$ is equivalent to
 $\exists\overrightarrow{x_i}.\bigvee_{\varphi\in\textsf{CA}(\psi_i,\mathbb{C})}\varphi$,
 and also to $\bigvee_{\varphi\in\textsf{CA}(\psi_i,\mathbb{C})}\exists\overrightarrow{x_i}.\varphi$.
 Hence, $P$ is equivalent to $\normalfont\textsf{cano}(P,\mathbb{C})$.
\end{proof}

\subsection{Weakest Postcondition and $\normalfont\textsf{wpo}$}
We first define the weakest postcondition as follows. 
\begin{definition}[Weakest Postcondition]
The \emph{weakest postcondition} of a precondition $P$, a program $\mathbb{C}$,
 and an exit condition $\epsilon$ is defined as follows.
$$
\text{\normalfont{WPO}}\llbracket P,\mathbb{C},\epsilon\rrbracket   = \{ \sigma' \mid \exists \sigma. 
\sigma \models P \land
(\sigma,\sigma') \in \llbracket \mathbb{C}\rrbracket  _\epsilon \}
$$
\end{definition}

Next, we define the calculation of the weakest postcondition ($\textsf{wpo}$) as follows.

 \begin{definition}[$\normalfont\textsf{wpo}$]
\label{def: def of wpo}
$\normalfont \textsf{wpo}( P , \mathbb{C}, \epsilon)$ for an assertion
$P$ and $\normalfont \textsf{wpo}_\text{sh}( \psi , \mathbb{C}, \epsilon)$ for a
quantifier-free canonical symbolic heap $\psi$ are simultaneously defined as follows.
 $$\normalfont
\textsf{wpo}( P , \mathbb{C}, \epsilon) =  
\bigvee_{i \in I} \normalfont \exists \overrightarrow{x_i} . \textsf{wpo}_\text{sh} ( \psi_i,\mathbb{C}, \epsilon)  \ \  
 \normalfont\text{ \textit{(for} $\normalfont \textsf{cano} (P, \mathbb{C}) = \bigvee_{i \in I} \exists \overrightarrow{x_i} . \psi_i$\textit{)}}$$
The function $\normalfont\textsf{wpo}_\text{sh}$ is defined in Figure
  \ref{fig: wpo for clauses}.
These are inductively defined in $\mathbb{C}$.
 \end{definition}

Note that infinite disjunctions are needed only when we calculate $\textsf{wpo}$ of $\mathbb{C}^\star$. 
Therefore, if we assume a bound of the number of iterations of $\mathbb{C}^\star$ with a finite number, i.e., we use $\mathbb{C}^n$ ($n \in \mathbb{N}$) instead of $\mathbb{C}^\star$, 
then we can define $\textsf{wpo}$ with finite formulas.

\begin{figure}
\normalfont
\scriptsize
\noindent
\resizebox{1.0\textwidth}{!}{%
  \begin{minipage}{\linewidth}
\begin{align*}
\hline \\
\textsf{wpo}_\text{sh}  (\psi ,\texttt{skip},\epsilon) &= 
\begin{cases}
\psi & (\epsilon= \textit{ok}) \\
\textsf{false} & (\epsilon= \textit{er})
\end{cases}   \\
\textsf{wpo}_\text{sh}  ( \psi ,\texttt{error},\epsilon) &= 
\begin{cases}
\textsf{false}  &  (\epsilon= \textit{ok} ) \\
\psi & (\epsilon= \textit{er})
\end{cases} \\
\textsf{wpo}_\text{sh} ( \psi , \texttt{local $x $ in $\mathbb{C}$} ,\epsilon) &=
\bigvee_{j \in J} \exists   x'' , \overrightarrow{x_j}  . \varphi_j  \\
& \text{\Big(where $\textsf{wpo} (  \psi [x:=x']  ,  \mathbb{C}  ,\epsilon)
[x:=x''] [x' := x] = \bigvee_{j \in J} \exists \overrightarrow{x_j} . \varphi_j $
for fresh $x', x''$\Big) 
}\\
\textsf{wpo}_\text{sh} (\psi,\texttt{assume($B$)},\epsilon) &=
\begin{cases}
\psi * B  & (\epsilon= \textit{ok}) \\
\textsf{false} &( \epsilon= \textit{er})
\end{cases} \\
\textsf{wpo}_\text{sh} (\psi,\mathbb{C}_1;\mathbb{C}_2,\textit{ok}) &= 
\textsf{wpo}
(\textsf{wpo}_\text{sh} (\psi ,\mathbb{C}_1, \textit{ok}), \mathbb{C}_2,\textit{ok}) 
\\
\textsf{wpo}_\text{sh} (\psi,\mathbb{C}_1;\mathbb{C}_2,\textit{er}) &= 
 \textsf{wpo}_\text{sh}(\psi,\mathbb{C}_1, \textit{er})  \lor
\textsf{wpo}
(\textsf{wpo}_\text{sh} (\psi,\mathbb{C}_1, \textit{ok}), \mathbb{C}_2,\textit{er})
\\
\textsf{wpo}_\text{sh} (\psi,\mathbb{C}^\star,\textit{ok}) &= 
\bigvee_{n \in \mathbb{N}} \Upsilon (n) 
\text{ \Big(where $\Upsilon$ is defined by $\Upsilon (0) =\psi$  and 
$\Upsilon (n+1) = \textsf{wpo} (\Upsilon (n),\mathbb{C},\textit{ok})$\Big)}\\
\textsf{wpo}_\text{sh} (\psi,\mathbb{C}^\star,\textit{er}) &= 
\bigvee_{n \in \mathbb{N}} \textsf{wpo} (\Upsilon (n), \mathbb{C}, \textit{er}) \\
\textsf{wpo}_\text{sh} (\psi ,\mathbb{C}_1 + \mathbb{C}_2,\epsilon) &= 
\textsf{wpo}_\text{sh}(\psi ,\mathbb{C}_1,\epsilon) \lor \textsf{wpo}_\text{sh}(\psi,\mathbb{C}_2,\epsilon) \\
\\ \hline \\
\textsf{wpo}_\text{sh} (  \psi , x := t, \epsilon) &= 
\begin{cases}
\exists  x' .  \psi [x := x']  * x \approx  t  [x := x']   & (\epsilon=\textit{ok} )\\
\textsf{false} & (\epsilon= \textit{er})
\end{cases} \\
\textsf{wpo}_\text{sh} (\psi ,x:=\texttt{*},\epsilon) &=
\begin{cases}
\exists x'. \psi [x := x'] & (\epsilon=\textit{ok}) \\
\textsf{false} & (\epsilon= \textit{er})
\end{cases} \\
\textsf{wpo}_\text{sh} (\psi ,x:=\texttt{alloc()},\epsilon) &= 
\begin{cases}
\exists x'.  
(  \psi[x := x'] * x \mapsto - ) \lor
&( \epsilon=\textit{ok}) \\
 \bigvee_{j=1}^n \exists x'.   ( 
\left(\bigast_{i=1}^n y_i \not\mapsto \right) [y_j \not\mapsto := y_j \mapsto - ] *
x \approx y_j * \psi' [x := x'] ) & \\
\big(\text{where } \psi = \left(\bigast_{i=1}^n y_i \not\mapsto \right) * \psi',\; \text{and } \not\mapsto \text{ does not occur in } \psi' \big)
 & \\
\textsf{false} & (\epsilon= \textit{er}) \\
\end{cases}  \\
\textsf{wpo}_\text{sh} (\psi , \texttt{free($x$)},\textit{ok}) &=
\begin{cases}
 \psi' * y \not\mapsto  
&  \text{(if $y \in \textsf{Aliases}(x, \psi)$  and
$\psi = \psi' *  y \mapsto t $)}  \\
\textsf{false}  &   \text{(otherwise)}
\end{cases} \\
\textsf{wpo}_\text{sh} ( \psi , \texttt{free($x$)},\textit{er}) &=
\begin{cases}
 \psi & 
 \text{(if $\forall v \in \textsf{Aliases}(x, \psi). \ (v \mapsto t) \notin \psi$)} 
\\ 
 \textsf{false} &  \text{(otherwise)}
\end{cases} \\
\textsf{wpo}_\text{sh} ( \psi ,  x:=[y] ,\textit{ok}) &=
\begin{cases}
\exists x'. \psi [x := x'] * x \approx (t [x := x']) &
\text{(if $z \in \textsf{Aliases}(y, \psi)$ and
$\psi = \psi' *  z \mapsto t$)}  \\
\textsf{false} & \text{(otherwise)}
\end{cases} \\
\textsf{wpo}_\text{sh} ( \psi , x:=[y]  ,\textit{er}) &=
\begin{cases}
\psi 
& 
 \text{(if $\forall v \in \textsf{Aliases}(y, \psi). \ (v \mapsto t) \notin \psi $)} \\
\textsf{false} & \text{(otherwise)}
\end{cases} \\
\textsf{wpo}_\text{sh} ( \psi ,  [x]:=t ,\textit{ok}) &=
\begin{cases}
\psi' *  z \mapsto t
& \text{(if $z \in \textsf{Aliases}(x, \psi)$ and 
$\psi = \psi' *  z \mapsto t'$)} \\
\textsf{false} & \text{(otherwise)}
\end{cases} \\
\textsf{wpo}_\text{sh} ( \psi , [x]:=t  ,\textit{er}) &=
\begin{cases}
\psi 
& 
 \text{(if $\forall v \in \textsf{Aliases}(x, \psi). \ (v \mapsto t') \notin \psi$)} 
\\ 
\textsf{false} & \text{(otherwise)}
\end{cases} \\
\\ \hline
\end{align*}
\end{minipage}
}
\caption{$\normalfont\textsf{wpo}_\text{sh}(\psi, \mathbb{C} , \epsilon)$
for a quantifier-free symbolic heap $\psi$}
\label{fig: wpo for clauses}
\end{figure}

\section{Proving Relative Completeness}\label{sec: body-completeness}

In this section, we prove soundness and relative completeness of ISL.

Soundness is proved in a standard way.

\begin{proposition}[Soundness of ISL]\label{lma: local soundness}
\begin{center}
For any $P,\mathbb{C},\epsilon,Q$, if 
$\vdash  [P] \ \mathbb{C} \ [\epsilon: Q]$, then
$\models [P] \ \mathbb{C} \ [\epsilon: Q]$.
\end{center}
\end{proposition}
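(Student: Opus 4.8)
The plan is to prove the statement by induction on the derivation of $\vdash [P]\ \mathbb{C}\ [\epsilon: Q]$, showing for each proof rule of Figures~\ref{fig:generic proof rules of ISL} and~\ref{fig: rules for variables and mutation} that, whenever its premises are semantically valid (the induction hypotheses), the conclusion is valid in the sense of Definition~\ref{def: Validity of ISL Triples}. In every case I unfold validity to its pointwise form: given a state $\sigma'$ satisfying the postcondition, produce a state $\sigma$ satisfying the precondition with $(\sigma,\sigma')\in\llbracket \mathbb{C}\rrbracket_\epsilon$. So the work per rule is to fix such a $\sigma'$, apply the induction hypotheses (which hand back witness states for the premises), and reassemble the required $\sigma$.

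The structural rules are immediate from the relational semantics. \textsc{Skip}, \textsc{Error}, \textsc{Assume}, \textsc{Assign}, \textsc{Havoc}, and \textsc{Alloc1} follow by directly computing $\llbracket \mathbb{C}\rrbracket_\epsilon$; \textsc{Cons} and the (infinitary) \textsc{Disj} are purely logical manipulations of $\models$; \textsc{Choice} uses $\llbracket \mathbb{C}_1+\mathbb{C}_2\rrbracket_\epsilon=\llbracket \mathbb{C}_1\rrbracket_\epsilon\cup\llbracket \mathbb{C}_2\rrbracket_\epsilon$; \textsc{Seq1} and \textsc{Seq2} use the definition of relational composition together with the two clauses of $\llbracket \mathbb{C}_1;\mathbb{C}_2\rrbracket_\epsilon$; and \textsc{Loop zero}/\textsc{Loop non-zero} rest on the elementary facts $\llbracket \mathbb{C}^0\rrbracket_\epsilon=\llbracket \texttt{skip}\rrbracket_\epsilon$ and $\llbracket \mathbb{C}^\star;\mathbb{C}\rrbracket_\epsilon\subseteq\llbracket \mathbb{C}^\star\rrbracket_\epsilon$ (proved by reassociating the iterated composition and observing that an error at iteration $k$ of $m{+}1$ is absorbed into $\llbracket \mathbb{C}^\star\rrbracket_\epsilon$), so that under-approximation validity is only preserved when the relation grows.

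For \textsc{Exist}, \textsc{Local}, and \textsc{FrameOk} I first establish a few standard lemmas about $\llbracket \cdot\rrbracket_\epsilon$, each by an easy induction on $\mathbb{C}$: (i) the truth of an assertion depends only on the store's values at its free variables; (ii) an $\textit{ok}$-execution preserves the store on variables outside $\textsf{mod}(\mathbb{C})$ and never shrinks the heap domain; (iii) if $x\notin\textsf{fv}(\mathbb{C})$ then $((s,h),(s',h'))\in\llbracket \mathbb{C}\rrbracket_\epsilon$ implies $((s[x\mapsto v],h),(s'[x\mapsto v],h'))\in\llbracket \mathbb{C}\rrbracket_\epsilon$ for every $v$; and (iv) a frame property for $\textit{ok}$-executions: if $((s,h_1),(s',h_1'))\in\llbracket \mathbb{C}\rrbracket_{\textit{ok}}$ and a heap $h_2$ is disjoint from both $h_1$ and $h_1'$, then $((s,h_1\circ h_2),(s',h_1'\circ h_2))\in\llbracket \mathbb{C}\rrbracket_{\textit{ok}}$. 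With (i) and the store part of (ii), \textsc{Local} follows by instantiating the local-variable semantics; \textsc{Exist} uses (iii) together with the fact that $\exists x.\psi$ is insensitive to the value of $x$; and \textsc{FrameOk} uses (iv) with $h_2$ the frame heap, where the domain-monotonicity part of (ii) supplies the disjointness hypothesis of (iv) and (i)+(ii) ensure the frame assertion $\phi$ still holds in the reconstructed precondition state. The restriction of the frame rule to the $\textit{ok}$ case is essential here, exactly as the \texttt{free} counterexample already indicates: lemma (iv) genuinely fails for $\llbracket \cdot\rrbracket_{\textit{er}}$.

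The remaining cases are the heap-manipulation axioms \textsc{Free}, \textsc{FreeEr}, \textsc{Load}, \textsc{LoadEr}, \textsc{Store}, \textsc{StoreEr}, and \textsc{Alloc2}, handled by a direct but somewhat tedious reconstruction: from a state $\sigma'$ satisfying the postcondition we read off, from the symbolic-heap structure, the cell the command touched and rebuild $\sigma$. Here the canonical-form side conditions $\psi\in\textsf{CF}_\text{sh}(\cdots)$ do the real work: since every pair of relevant terms is compared by $\approx$ or $\not\approx$ inside $\psi$, the set $\textsf{Aliases}(x,\psi)$ determines precisely which locations coincide with $s'(x)$ and hence whether $s'(x)\in\textsf{dom}_+(h')$, which is what lets the ``$\textit{er}$'' axioms conclude the command faults and the ``$\textit{ok}$'' axioms conclude it succeeds. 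I expect the main obstacle to be \textsc{FrameOk}: getting lemma (iv) and the heap-domain-monotonicity lemma exactly right in the presence of $x:=\texttt{alloc()}$, where the freshly allocated location must be shown disjoint from the frame — which holds only because the postcondition heap splits into genuinely disjoint pieces. By contrast, the case analysis for the heap axioms, though lengthy, is routine once the canonicity bookkeeping is in place.
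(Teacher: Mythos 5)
Your proposal is correct and follows essentially the same route as the paper: local soundness rule by rule, with heap monotonicity, a frame-preservation property for \emph{ok}-executions, invariance of $\llbracket\mathbb{C}\rrbracket_\epsilon$ under updates to variables not in $\textsf{fv}(\mathbb{C})$, and the observation that canonicity of $\psi$ lets $\textsf{Aliases}(x,\psi)$ decide whether $s(x)\in\textsf{dom}_+(h)$ (the paper's Lemmas~\ref{lma:Heap-Monotonicity}, \ref{lma: Frame-Preservation Property}, \ref{lma: quantified variables do not affect to operational semantics}, and \ref{lma:canonical_implies_allocated_alias}) doing exactly the work you assign to your lemmas (i)--(iv). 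The only cosmetic difference is that the paper does not reprove the non-heap rules, citing their established soundness in prior work, whereas you sketch them directly.
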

\begin{proof}
 It follows from local soundness of each inference rule.
 The proof is in Section \ref{appendix: proof of soundness}.
\end{proof}

For relative completeness, we will show the following two
propositions.  First, we prove expressiveness of the weakest
postconditions (Proposition \ref{prop: expressiveness of WPO calculus}),
that is, $\textsf{wpo}(P,\mathbb{C},\epsilon)$ exactly represents
$\textsf{WPO}\llbracket P,\mathbb{C},\epsilon\rrbracket $. Secondly, we prove that the
specification whose postcondition is $\textsf{wpo}$ can be derived in
ISL (Proposition \ref{prop: for all p, c, epsilon, we have wpo}).

To prove expressiveness, we show the following lemmas.

\begin{lemma}\label{lma: quantifier is free in and out of WPO}
If $x \notin \normalfont\textsf{fv}(\mathbb{C})$, then the following holds for any 
$(s', h')$, $\psi$, $\mathbb{C}$ and $\epsilon$.
    $$
(s', h') \in \text{\normalfont{WPO}}\llbracket \exists x. \psi,\mathbb{C},\epsilon\rrbracket  \iff
\exists v \in {\normalfont\textsc{Val}} . 
(s' [x \mapsto v] , h') \in \text{\normalfont{WPO}}\llbracket \psi,\mathbb{C},\epsilon\rrbracket  
    $$
\end{lemma}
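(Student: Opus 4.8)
The plan is to unfold the definition of $\mathrm{WPO}\llbracket-,\mathbb{C},\epsilon\rrbracket$ on both sides and reduce the statement to the already-established assertion semantics of the existential quantifier, using the fact that $x$ does not occur in $\mathbb{C}$ to move the witness value $v$ in and out of the precondition's store. Concretely, $(s',h')\in\mathrm{WPO}\llbracket\exists x.\psi,\mathbb{C},\epsilon\rrbracket$ means there is some $\sigma=(s,h)$ with $\sigma\models\exists x.\psi$ and $(\sigma,(s',h'))\in\llbracket\mathbb{C}\rrbracket_\epsilon$. By the assertion semantics, $\sigma\models\exists x.\psi$ is equivalent to the existence of $v\in\textsc{Val}$ with $(s[x\mapsto v],h)\models\psi$.

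First I would prove the forward direction. Given such $s,h,v$, I want to exhibit a state in $\mathrm{WPO}\llbracket\psi,\mathbb{C},\epsilon\rrbracket$ whose store is $s'[x\mapsto v]$ for a suitable $v$. The key observation is that the denotational semantics of every command treats the variable $x$ uniformly when $x\notin\textsf{mod}(\mathbb{C})$, and more relevantly here, when $x\notin\textsf{fv}(\mathbb{C})$ the relation $\llbracket\mathbb{C}\rrbracket_\epsilon$ is invariant under changing the value of $x$ in both components simultaneously: if $((s_1,h_1),(s_2,h_2))\in\llbracket\mathbb{C}\rrbracket_\epsilon$ then $((s_1[x\mapsto w],h_1),(s_2[x\mapsto w'],h_2))\in\llbracket\mathbb{C}\rrbracket_\epsilon$ under appropriate constraints — in fact, since $x$ is not read by $\mathbb{C}$ (not free) and not written (not free implies not modified), we get $((s_1[x\mapsto w],h_1),(s_2[x\mapsto w],h_2))\in\llbracket\mathbb{C}\rrbracket_\epsilon$ for any $w$, with the value of $x$ simply carried along. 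I would state this as an auxiliary sublemma (easily proved by induction on $\mathbb{C}$ using Definition~\ref{def: Denotational semantics of ISL}) and then apply it: from $(s[x\mapsto v],h)\models\psi$ and $((s,h),(s',h'))\in\llbracket\mathbb{C}\rrbracket_\epsilon$ we get $((s[x\mapsto v],h),(s'[x\mapsto v],h'))\in\llbracket\mathbb{C}\rrbracket_\epsilon$, witnessing $(s'[x\mapsto v],h')\in\mathrm{WPO}\llbracket\psi,\mathbb{C},\epsilon\rrbracket$, which is exactly the right-hand side.

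For the converse direction, I would run essentially the same argument in reverse: suppose $(s'[x\mapsto v],h')\in\mathrm{WPO}\llbracket\psi,\mathbb{C},\epsilon\rrbracket$, so there is $(s,h)\models\psi$ with $((s,h),(s'[x\mapsto v],h'))\in\llbracket\mathbb{C}\rrbracket_\epsilon$. Applying the sublemma to replace $x$ by $s'(x)$ in the second component — noting $s'[x\mapsto v][x\mapsto s'(x)]=s'$ — and by the corresponding value in the first component, I obtain a source state $(s[x\mapsto s'(x)],h)$ which still satisfies $\psi$ only if I am careful; more simply, I take the source state as is: $(s,h)\models\psi$ implies $(s,h)\models\exists x.\psi$ trivially, and $((s,h),(s',h'))\in\llbracket\mathbb{C}\rrbracket_\epsilon$ follows from the sublemma applied to the second component. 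Hence $(s',h')\in\mathrm{WPO}\llbracket\exists x.\psi,\mathbb{C},\epsilon\rrbracket$.

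The main obstacle is isolating and proving the auxiliary sublemma that $\llbracket\mathbb{C}\rrbracket_\epsilon$ commutes with updates to a variable not occurring in $\mathbb{C}$ — in particular getting the formulation right for the \texttt{local} construct, where a bound variable is shadowed, and for the heap-manipulating commands, where one must check that the store value of $x$ genuinely plays no role in the heap-domain side conditions (it does not, precisely because $x\notin\textsf{fv}(\mathbb{C})$). Once that sublemma is in hand, the lemma itself is a routine unfolding of the two $\mathrm{WPO}$ definitions and the semantic clause for $\exists$.
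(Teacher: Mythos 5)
Your overall strategy is the one the paper uses: unfold $\text{WPO}$ on both sides, apply the semantic clause for $\exists$, and rely on an auxiliary sublemma (proved by induction on $\mathbb{C}$) stating that for $x\notin\textsf{fv}(\mathbb{C})$ the relation $\llbracket\mathbb{C}\rrbracket_\epsilon$ is preserved under simultaneously overwriting $x$ with the same value in both stores. Your forward direction is exactly the paper's argument and is correct. (Your universally quantified formulation of the sublemma is in fact the one that is needed; the paper states its version with an existential over the value but then uses it for a specific value.)

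The gap is in your converse direction, in the ``more simply'' variant you ultimately settle on. The claim that $((s,h),(s',h'))\in\llbracket\mathbb{C}\rrbracket_\epsilon$ ``follows from the sublemma applied to the second component'' is not licensed and is in general false: the sublemma only permits updating $x$ in \emph{both} stores simultaneously, and for $x\notin\textsf{fv}(\mathbb{C})$ the semantics forces the initial and final stores to agree on $x$ (this is the paper's Lemma~\ref{lma: quantified variables have same store value}). From $((s,h),(s'[x\mapsto v],h'))\in\llbracket\mathbb{C}\rrbracket_\epsilon$ you therefore get $s(x)=v$, so $((s,h),(s',h'))$ can lie in the relation only if $v=s'(x)$, which need not hold. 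The correct route is the first one you sketch and then abandon: apply the sublemma with the value $s'(x)$ to obtain $((s[x\mapsto s'(x)],h),(s'[x\mapsto v][x\mapsto s'(x)],h'))=((s[x\mapsto s'(x)],h),(s',h'))\in\llbracket\mathbb{C}\rrbracket_\epsilon$, and resolve your worry about the modified source state by noting that it need only satisfy $\exists x.\psi$, not $\psi$ itself --- which it does, with witness $s(x)$, since $(s[x\mapsto s'(x)])[x\mapsto s(x)]=s$. This is precisely how the paper's proof closes the converse direction.
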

\begin{proof}
The proof is in Appendix \ref{appendix's subsec: proof of quantifier is free in and out of WPO}.
\end{proof}

\begin{lemma}\label{lma: from bigvee to bigcup}
$$
\text{\normalfont{WPO}}\llbracket  \bigvee_{i \in I}  P ,\mathbb{C},\epsilon \rrbracket  \iff 
\bigcup_{i \in I} \text{\normalfont{WPO}}\llbracket  P,\mathbb{C},\epsilon\rrbracket 
$$
\end{lemma}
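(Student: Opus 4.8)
The plan is to prove the set equality by a short chain of logical equivalences obtained by unfolding the definition of $\text{\normalfont{WPO}}$ together with the assertion semantics of disjunction (reading the left-hand side as $\bigvee_{i\in I} P_i$, so that the right-hand side is $\bigcup_{i\in I}\text{\normalfont{WPO}}\llbracket P_i,\mathbb{C},\epsilon\rrbracket$). The whole argument is carried out pointwise, for a fixed but arbitrary state $\sigma' = (s',h')$, and the only manipulation with any content is the commutation of two existential quantifiers.

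Concretely, I would start from the left-hand side and unfold the definition of the weakest postcondition: $\sigma' \in \text{\normalfont{WPO}}\llbracket \bigvee_{i\in I} P_i, \mathbb{C}, \epsilon\rrbracket$ holds iff there is a state $\sigma$ with $\sigma \models \bigvee_{i\in I} P_i$ and $(\sigma, \sigma') \in \llbracket \mathbb{C}\rrbracket_\epsilon$. Then I would rewrite $\sigma \models \bigvee_{i\in I} P_i$ using the semantic clause for disjunction, namely that it holds iff $\sigma \models P_i$ for some $i \in I$. At this point the left-hand side reads: there exist $\sigma$ and $i\in I$ such that $\sigma \models P_i$ and $(\sigma,\sigma') \in \llbracket \mathbb{C}\rrbracket_\epsilon$.

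The key step is then to push the quantifier $\exists i \in I$ outward. Since the conjunct $(\sigma,\sigma') \in \llbracket \mathbb{C}\rrbracket_\epsilon$ does not mention $i$ and $\sigma \models P_i$ does not mention any other $j$, this statement is equivalent to: there is $i\in I$ such that there is $\sigma$ with $\sigma \models P_i$ and $(\sigma,\sigma') \in \llbracket \mathbb{C}\rrbracket_\epsilon$. Recognizing the inner part as the definition of $\sigma' \in \text{\normalfont{WPO}}\llbracket P_i, \mathbb{C}, \epsilon\rrbracket$, the condition becomes: $\sigma' \in \text{\normalfont{WPO}}\llbracket P_i, \mathbb{C}, \epsilon\rrbracket$ for some $i\in I$, which is exactly $\sigma' \in \bigcup_{i\in I}\text{\normalfont{WPO}}\llbracket P_i, \mathbb{C}, \epsilon\rrbracket$. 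Since $\sigma'$ was arbitrary, the two sets coincide.

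The hard part, such as it is, is essentially nonexistent: the lemma is a purely definitional rearrangement, and the one nontrivial-looking move — swapping $\exists \sigma$ and $\exists i$ — is valid simply because neither binder lies in a scope on which the other depends. The countability (or even finiteness) of $I$ plays no role, as the equivalence is established state by state; the degenerate case $I = \emptyset$ is also covered, since $\bigvee_{i\in\emptyset}P_i = \textsf{false}$ and $\text{\normalfont{WPO}}\llbracket\textsf{false},\mathbb{C},\epsilon\rrbracket = \emptyset = \bigcup_{i\in\emptyset}\text{\normalfont{WPO}}\llbracket P_i,\mathbb{C},\epsilon\rrbracket$.
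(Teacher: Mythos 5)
Your proposal is correct and follows essentially the same route as the paper's proof in the appendix: unfold the definition of $\text{\normalfont{WPO}}$, apply the semantic clause for disjunction, distribute the conjunction with $(\sigma,\sigma')\in\llbracket\mathbb{C}\rrbracket_\epsilon$ over the disjunction, and commute $\exists\sigma$ with the indexed disjunction to obtain the union. No gaps.
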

\begin{proof}
The proof is in Appendix \ref{appendix's subsec: proof of bigvee to bigcup}.
\end{proof}

\begin{lemma}\label{lma: sensei's one shot lemma}
Suppose $\normalfont \textsf{cano} (P, \mathbb{C}) = \bigvee_{i \in I} \exists \overrightarrow{x_i} . \psi_i$.
If for all $i \in I$, $(s',h')$ and $\epsilon$, 
$\normalfont
(s', h') \models \textsf{wpo}_\text{sh} (\psi_i, \mathbb{C}, \epsilon) \iff 
(s', h') \in \text{WPO} \llbracket  \psi_i ,  \mathbb{C}, \epsilon  \rrbracket $,
then for any $(s',h')$ and $\epsilon$, 
$\normalfont
\text{\textnormal{(}}s', h'\text{\textnormal{)}} \models \textsf{wpo} (P, \mathbb{C}, \epsilon) \iff 
\text{\textnormal{(}}s', h'\text{\textnormal{)}} \in \text{WPO} \llbracket  P ,  \mathbb{C}, \epsilon  \rrbracket  $ holds.
\end{lemma}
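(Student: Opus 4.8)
The plan is to unwind both sides of the claimed equivalence using the definitions of $\textsf{wpo}$, $\textsf{cano}$, and $\text{WPO}$, and then reduce everything to the hypothesis about each $\psi_i$ together with the two auxiliary lemmas already proved: Lemma~\ref{lma: quantifier is free in and out of WPO} (pushing existentials in and out of $\text{WPO}$) and Lemma~\ref{lma: from bigvee to bigcup} (distributing $\text{WPO}$ over disjunctions). The key structural fact I would invoke first is Lemma~\ref{lma:equivalence of normalization}: since $\textsf{cano}(P,\mathbb{C})$ is equivalent to $P$, and $\text{WPO}$ is defined purely semantically (it only inspects which states satisfy the precondition), we have $\text{WPO}\llbracket P,\mathbb{C},\epsilon\rrbracket = \text{WPO}\llbracket \bigvee_{i\in I}\exists\overrightarrow{x_i}.\psi_i,\mathbb{C},\epsilon\rrbracket$. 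So from the outset I can replace $P$ by its canonical form on the semantic side.

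\medskip

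First I would handle the left-hand side. By Definition~\ref{def: def of wpo}, $\textsf{wpo}(P,\mathbb{C},\epsilon) = \bigvee_{i\in I}\exists\overrightarrow{x_i}.\textsf{wpo}_\text{sh}(\psi_i,\mathbb{C},\epsilon)$, so $(s',h') \models \textsf{wpo}(P,\mathbb{C},\epsilon)$ holds iff there is some $i\in I$ and some witness vector $\overrightarrow{v_i}$ with $(s'[\overrightarrow{x_i}\mapsto\overrightarrow{v_i}],h') \models \textsf{wpo}_\text{sh}(\psi_i,\mathbb{C},\epsilon)$, by the assertion semantics of existential quantification (applied componentwise along the vector). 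Next I would handle the right-hand side: by Lemma~\ref{lma: from bigvee to bigcup}, $(s',h') \in \text{WPO}\llbracket\bigvee_{i\in I}\exists\overrightarrow{x_i}.\psi_i,\mathbb{C},\epsilon\rrbracket$ iff there is some $i\in I$ with $(s',h')\in\text{WPO}\llbracket\exists\overrightarrow{x_i}.\psi_i,\mathbb{C},\epsilon\rrbracket$, and then by (iterated application of) Lemma~\ref{lma: quantifier is free in and out of WPO} — using the side condition $\overrightarrow{x_i}\notin\textsf{fv}(\mathbb{C})$, which holds because $\textsf{cano}$ is only applied when the bound variables avoid $\textsf{fv}(\mathbb{C})$ — this is equivalent to the existence of a vector $\overrightarrow{v_i}$ with $(s'[\overrightarrow{x_i}\mapsto\overrightarrow{v_i}],h')\in\text{WPO}\llbracket\psi_i,\mathbb{C},\epsilon\rrbracket$. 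Now the two sides match up index-by-index and witness-by-witness, and the hypothesis of the lemma — $(s'',h')\models\textsf{wpo}_\text{sh}(\psi_i,\mathbb{C},\epsilon) \iff (s'',h')\in\text{WPO}\llbracket\psi_i,\mathbb{C},\epsilon\rrbracket$ for every state $(s'',h')$ — closes the gap. Taking the disjunction/union over $i\in I$ finishes the argument.

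\medskip

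**The main obstacle** I anticipate is purely bookkeeping: Lemma~\ref{lma: quantifier is free in and out of WPO} is stated for a single existentially quantified variable, whereas here $\overrightarrow{x_i}$ is a vector, so I need a clean induction on the length of the vector to lift it, being careful that the intermediate states $s'[x_{i1}\mapsto v_{i1}]\cdots[x_{ik}\mapsto v_{ik}]$ appearing at each stage still have the remaining bound variables fresh for $\mathbb{C}$ — this follows because all of $\overrightarrow{x_i}$ is disjoint from $\textsf{fv}(\mathbb{C})$ and the $x_{ij}$ can be taken mutually distinct. A secondary subtlety is that Lemma~\ref{lma: from bigvee to bigcup} as displayed writes $\bigvee_{i\in I}P$ with a single $P$ (apparently a typo for $P_i$); I would read it in the intended form $\text{WPO}\llbracket\bigvee_{i\in I}P_i,\mathbb{C},\epsilon\rrbracket = \bigcup_{i\in I}\text{WPO}\llbracket P_i,\mathbb{C},\epsilon\rrbracket$ and apply it with $P_i := \exists\overrightarrow{x_i}.\psi_i$. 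Beyond that, everything is a direct chain of ``iff''s, so the proof should be short.
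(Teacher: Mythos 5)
Your proposal is correct and follows essentially the same route as the paper's proof: unfold $\textsf{wpo}$ via $\textsf{cano}$, push the existential vector through both sides using Lemma~\ref{lma: quantifier is free in and out of WPO}, distribute over the disjunction via Lemma~\ref{lma: from bigvee to bigcup}, and close with the pointwise hypothesis on each $\psi_i$. Your explicit appeal to Lemma~\ref{lma:equivalence of normalization} and your remarks on lifting the single-variable quantifier lemma to vectors are details the paper leaves implicit, but they do not change the argument.
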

\begin{proof}
\begin{align*}\normalfont
& \big( \ (s', h') \models \textsf{wpo}_\text{sh} (\psi_i, \mathbb{C}, \epsilon) \iff 
(s', h') \in \text{WPO} \llbracket  \psi_i ,  \mathbb{C}, \epsilon  \rrbracket  \ \big) \\
\Rightarrow \ & 
\big( \ \exists \overrightarrow{v_i} \in \textsc{Val} . 
(s' [ \overrightarrow{x_i} \mapsto \overrightarrow{v_i} ] , h') \models \textsf{wpo}_\text{sh} (\psi_i, \mathbb{C}, \epsilon) \iff \\
& (s' [ \overrightarrow{x_i} \mapsto \overrightarrow{v_i} ] , h') \in \text{WPO} \llbracket  \psi_i ,  \mathbb{C}, \epsilon  \rrbracket \ \big) \\
\Leftrightarrow \ & 
\big( \ (s'  , h') \models \exists \overrightarrow{x_i} . \textsf{wpo}_\text{sh} (\psi_i, \mathbb{C}, \epsilon) \iff 
(s'  , h') \in \text{WPO} \llbracket  \exists \overrightarrow{x_i} . \psi_i ,  \mathbb{C}, \epsilon  \rrbracket \ \big) \\
& \text{// Lemma \ref{lma: quantifier is free in and out of WPO}}\\
\Rightarrow \ & 
\big( \ (s'  , h') \models \bigvee_{i \in I}
\exists \overrightarrow{x_i} . \textsf{wpo}_\text{sh} (\psi_i, \mathbb{C}, \epsilon) \iff 
(s'  , h') \in 
\bigcup_{i \in I} \text{WPO} \llbracket  \exists \overrightarrow{x_i} . \psi_i ,  \mathbb{C}, \epsilon  \rrbracket \ \big)  \\
\Leftrightarrow \ & 
\big( \  (s', h') \models \textsf{wpo} (P, \mathbb{C}, \epsilon) \iff 
(s', h') \in \text{WPO} \llbracket  \bigvee_{i \in I} \exists \overrightarrow{x_i} . \psi_i ,  \mathbb{C}, \epsilon  \rrbracket \ \big) \\
& \text{// Lemma \ref{lma: from bigvee to bigcup}} \\
\Leftrightarrow \ & 
\big( \ (s', h') \models \textsf{wpo} (P, \mathbb{C}, \epsilon) \iff 
(s', h') \in \text{WPO} \llbracket  P ,  \mathbb{C}, \epsilon  \rrbracket \ \big)
\end{align*}
\end{proof}

\begin{lemma}\label{lma: expressiveness of atomic command}
$$ \forall \sigma' . 
\sigma' \in \text{\normalfont{WPO}}\llbracket  \psi,\mathbb{C},\epsilon\rrbracket 
\iff
\sigma' \models  \normalfont \textsf{wpo}_\text{sh} (\psi, \mathbb{C}, \epsilon) $$
\end{lemma}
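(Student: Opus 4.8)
The statement is proved by induction on the structure of the command $\mathbb{C}$, establishing for every quantifier-free canonical symbolic heap $\psi$ the equivalence
$\sigma' \in \textsf{WPO}\llbracket \psi,\mathbb{C},\epsilon\rrbracket \iff \sigma'\models\textsf{wpo}_\text{sh}(\psi,\mathbb{C},\epsilon)$.
For the compound commands ($\mathbb{C}_1;\mathbb{C}_2$, $\mathbb{C}_1+\mathbb{C}_2$, $\mathbb{C}^\star$, $\texttt{local }x\texttt{ in }\mathbb{C}$) the induction hypothesis is applied to the subcommands, and one needs the bridging lemmas already established: Lemma~\ref{lma: sensei's one shot lemma} converts the $\textsf{wpo}_\text{sh}$-level equivalence into the $\textsf{wpo}$-level equivalence (so that the recursive calls to $\textsf{wpo}(\cdot,\mathbb{C}_i,\epsilon)$ in Figure~\ref{fig: wpo for clauses} can be handled), while Lemma~\ref{lma: quantifier is free in and out of WPO} and Lemma~\ref{lma: from bigvee to bigcup} are what make the $\exists$ and $\bigvee$ in the clauses for \texttt{local} and $\mathbb{C}^\star$ commute with $\textsf{WPO}\llbracket\cdot\rrbracket$. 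The semantic identities $\llbracket\mathbb{C}_1;\mathbb{C}_2\rrbracket_\epsilon$, $\llbracket\mathbb{C}_1+\mathbb{C}_2\rrbracket_\epsilon$, $\llbracket\mathbb{C}^\star\rrbracket_\epsilon=\bigcup_m\llbracket\mathbb{C}^m\rrbracket_\epsilon$ from Definition~\ref{def: Denotational semantics of ISL} are matched clause-by-clause against the corresponding equations defining $\textsf{wpo}_\text{sh}$; for $\mathbb{C}^\star$ one shows by a side induction on $m$ that $\Upsilon(m)$ represents $\textsf{WPO}\llbracket\psi,\mathbb{C}^m,\textit{ok}\rrbracket$ and then takes the union.

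The base cases are the atomic commands. For \texttt{skip}, \texttt{error}, $x:=t$, $x:=\texttt{*}$, and $\texttt{assume}(B)$, one simply unfolds the definition of $\textsf{WPO}\llbracket\cdot\rrbracket$ against the denotational semantics and checks that the given formula (e.g.\ $\exists x'.\,\psi[x:=x']*x\approx t[x:=x']$ for assignment) captures exactly the set of reachable post-states; the substitution lemma for stores does the work. For the heap-manipulating commands $\texttt{free}(x)$, $x:=[y]$, $[x]:=t$ — and similarly the $\textit{er}$ variants — the canonicity hypothesis $\psi\in\textsf{CF}_\text{sh}(\textsf{fv}(\psi)\cup\textsf{fv}(\mathbb{C}))$ is essential: because $\psi$ decides, for each pair of terms in $\textsf{fv}(\psi)\cup\{x,y,\texttt{null}\}$, whether they are aliased, we know in $\psi$ precisely which syntactic atom (if any) describes the cell $s(x)$, so the membership test ``$z\in\textsf{Aliases}(x,\psi)$ and $\psi=\psi'*z\mapsto t$'' is equivalent to the semantic condition $s(x)\in\textsf{dom}_+(h)$, and the ``otherwise'' branch returning $\textsf{false}$ is justified by the fact that no state satisfying a canonical $\psi$ without such an atom can have $s(x)$ allocated. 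The $x:=\texttt{alloc()}$ case is the most delicate base case: allocation may either use a genuinely fresh location (first disjunct $\exists x'.\,\psi[x:=x']*x\mapsto -$) or reuse a previously deallocated cell recorded by some $y_j\not\mapsto$ atom (the $\bigvee_{j=1}^n$ disjunct), and one must check that these disjuncts together enumerate exactly the post-states in $\llbracket x:=\texttt{alloc()}\rrbracket_{\textit{ok}}$, again using canonicity to know which $\not\mapsto$ atoms could be aliased with the newly chosen address.

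The main obstacle I expect is the $\mathbb{C}^\star$ case combined with the ``canonical'' bookkeeping. The subtlety is that $\textsf{wpo}(\Upsilon(n),\mathbb{C},\epsilon)$ internally re-canonicalizes via $\textsf{cano}$ at each step, so one must be careful that $\Upsilon(n)$ (which is built by iterating $\textsf{wpo}$, not $\textsf{wpo}_\text{sh}$) genuinely tracks $\textsf{WPO}\llbracket\psi,\mathbb{C}^n,\textit{ok}\rrbracket$ across the re-canonicalization boundary; this is exactly where Lemma~\ref{lma:equivalence of normalization} (canonicalization preserves meaning) and Lemma~\ref{lma: sensei's one shot lemma} must be invoked together, and getting the quantifier scoping in the \texttt{local} clause to line up with Lemma~\ref{lma: quantifier is free in and out of WPO} requires attention to the freshness side-conditions on $x'$ and $x''$. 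A secondary source of friction is that the induction is genuinely simultaneous — $\textsf{wpo}_\text{sh}$ for a compound command calls $\textsf{wpo}$ on a strictly smaller command, so the induction must be set up on command size with both $\textsf{wpo}$ and $\textsf{wpo}_\text{sh}$ statements in the hypothesis, and Lemma~\ref{lma: sensei's one shot lemma} is the glue that passes between the two levels.
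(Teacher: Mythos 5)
Your plan matches the paper's own proof essentially step for step: structural induction on $\mathbb{C}$ with the IH quantified over all symbolic heaps, Lemma~\ref{lma: sensei's one shot lemma} (together with Lemmas~\ref{lma: quantifier is free in and out of WPO} and~\ref{lma: from bigvee to bigcup}) as the bridge from $\textsf{wpo}_\text{sh}$ to $\textsf{wpo}$ in the sequencing, \texttt{local}, and loop cases; the substitution lemma for the assignment-like base cases; canonicity to equate the syntactic alias test with $s(x)\in\textsf{dom}_+(h)$ in the heap-manipulating cases; the two-disjunct analysis of \texttt{alloc()}; and the reduction of $\mathbb{C}^\star$ to finite iterations $\mathbb{C}^m$ followed by a union over $m$. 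The difficulties you flag (re-canonicalization inside $\Upsilon(n)$, freshness in \texttt{local}, the simultaneous $\textsf{wpo}$/$\textsf{wpo}_\text{sh}$ induction) are exactly the points the paper's appendix handles, so the proposal is correct and takes the same route.
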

\begin{proof}
The proof is provided in Appendix \ref{appendix's subsec: proof of expressiveness of atomic command}, utilizing induction on $\mathbb{C}$.
\end{proof}

Proposition \ref{prop: expressiveness of WPO calculus} explains expressiveness concerning general command $\mathbb{C}$, i.e., the assertions derived by $\textsf{wpo}$ exactly describe the weakest postconditions.

\begin{proposition}[Expressiveness]\label{prop: expressiveness of WPO calculus}
\begin{center}
$\normalfont\forall \sigma'. \sigma' \in \text{\normalfont{WPO}}\llbracket P,\mathbb{C},\epsilon\rrbracket  \iff
\normalfont \sigma' \models 
\textsf{wpo} (
P ,\mathbb{C},\epsilon)$
\end{center}
\end{proposition}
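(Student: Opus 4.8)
The plan is to derive the proposition as an immediate corollary of the two expressiveness lemmas already established: Lemma~\ref{lma: expressiveness of atomic command} (clause-level expressiveness of $\textsf{wpo}_\text{sh}$) and Lemma~\ref{lma: sensei's one shot lemma} (which lifts a clause-level equivalence to the full disjunctive-normal-form level). First I would recall, by Definition~\ref{def: def of wpo}, that $\textsf{wpo}(P,\mathbb{C},\epsilon)$ is computed relative to $\textsf{cano}(P,\mathbb{C}) = \bigvee_{i\in I}\exists\overrightarrow{x_i}.\psi_i$, where by Lemma~\ref{lma:equivalence of normalization} this disjunction is equivalent to $P$ and every $\psi_i$ is a quantifier-free symbolic heap that is canonical for $\mathbb{C}$. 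Instantiating Lemma~\ref{lma: expressiveness of atomic command} at $\psi := \psi_i$ for each $i\in I$ gives: for all $(s',h')$ and all $\epsilon$, $(s',h')\models\textsf{wpo}_\text{sh}(\psi_i,\mathbb{C},\epsilon) \iff (s',h')\in\text{WPO}\llbracket\psi_i,\mathbb{C},\epsilon\rrbracket$. This is exactly the hypothesis required by Lemma~\ref{lma: sensei's one shot lemma}, and the conclusion of that lemma is literally the statement of the proposition, so the proof closes in one line once the two lemmas are in place. (Lemma~\ref{lma: sensei's one shot lemma} itself uses Lemma~\ref{lma: quantifier is free in and out of WPO} to move the existential quantifiers $\exists\overrightarrow{x_i}$ in and out of $\text{WPO}$, and Lemma~\ref{lma: from bigvee to bigcup} to turn the top-level disjunction into a union.)

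So at the level of this proposition there is no real obstacle; the substance lives entirely in Lemma~\ref{lma: expressiveness of atomic command}, whose proof proceeds by induction on the structure of $\mathbb{C}$, and that is where I expect the difficulty. The delicate cases are the heap-manipulating atomic commands — \texttt{free($x$)}, $x:=\texttt{alloc()}$, $x:=[y]$, and $[x]:=t$ — where one must verify that the aliasing case split recorded by canonicalization (does some $v\in\textsf{Aliases}(x,\psi)$ carry a points-to atom?) exactly separates the \textit{ok} branch from the \textit{er} branch of the denotational semantics, including the extended ``non-null but never-allocated'' error case, and that the substitutions $[x:=x']$ appearing in the $\textsf{wpo}_\text{sh}$ clauses faithfully model the corresponding store updates together with the existential witness for the old value of $x$.

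The second source of difficulty is the compound commands $\mathbb{C}_1;\mathbb{C}_2$ and $\mathbb{C}^\star$: in Figure~\ref{fig: wpo for clauses} their $\textsf{wpo}_\text{sh}$ clauses are phrased using $\textsf{wpo}$ applied to a strict subprogram, so the induction must actually be carried out as a \emph{simultaneous} induction in which the hypothesis already contains the $\textsf{wpo}$-level statement (this very proposition) for $\mathbb{C}_1$, $\mathbb{C}_2$, and $\mathbb{C}$, with Lemma~\ref{lma: sensei's one shot lemma} invoked at each level to pass from the just-established clause-level equivalence back up to the $\textsf{wpo}$-level one before feeding it into the next case. For $\mathbb{C}^\star$ there is in addition a genuinely infinitary step: one must match the infinite disjunction $\bigvee_{n\in\mathbb{N}}\Upsilon(n)$ (and $\bigvee_{n\in\mathbb{N}}\textsf{wpo}(\Upsilon(n),\mathbb{C},\textit{er})$) against the infinite union $\bigcup_{m\in\mathbb{N}}\llbracket\mathbb{C}^m\rrbracket_\epsilon$ that defines $\llbracket\mathbb{C}^\star\rrbracket_\epsilon$, which is precisely the point at which the infinitary syntax of assertions is essential.
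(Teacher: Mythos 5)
Your proposal is correct and follows essentially the same route as the paper: the paper's proof of this proposition is precisely the one-line combination of Lemma~\ref{lma: expressiveness of atomic command} (instantiated at each canonical clause $\psi_i$) with Lemma~\ref{lma: sensei's one shot lemma}, which in turn relies on Lemmas~\ref{lma: quantifier is free in and out of WPO} and~\ref{lma: from bigvee to bigcup}. Your remarks about where the real work lies (the induction on $\mathbb{C}$ in Lemma~\ref{lma: expressiveness of atomic command}, carried out simultaneously with the $\textsf{wpo}$-level statement for the sequencing and iteration cases) accurately reflect how the paper's appendix proceeds.
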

\begin{proof}
This proposition holds by applying Lemma \ref{lma: quantifier is free in and out of WPO}, \ref{lma: from bigvee to bigcup}, \ref{lma: sensei's one shot lemma} and \ref{lma: expressiveness of atomic command}.
\end{proof}

We show the following lemmas to prove the second proposition.

\begin{lemma}\label{lma: for all derivation if and then}
For any $P$, $\mathbb{C}$, $\epsilon$, let us say  $ \normalfont\textsf{cano}(P, \mathbb{C}) =   \bigvee_{i \in I} \exists \overrightarrow{x_i}. \psi_i$. 
Then, the following holds:
$$\normalfont 
\forall i \in I. \vdash [\psi_i] \ \mathbb{C} \ 
[ \epsilon :  
\textsf{wpo}_\text{sh} (\psi_i,\mathbb{C},\epsilon) ]  \ \Longrightarrow \ 
\vdash [P] \ \mathbb{C} \ 
[ \epsilon :  
\textsf{wpo}(P,\mathbb{C},\epsilon) ]$$
\end{lemma}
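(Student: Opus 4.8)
The plan is to obtain the derivation for $P$ by pure structural reasoning: the per-clause derivations $\vdash [\psi_i] \ \mathbb{C} \ [\epsilon : \textsf{wpo}_\text{sh}(\psi_i,\mathbb{C},\epsilon)]$ are handed to us as the hypothesis, so it only remains to glue them together with the \textsc{Exist}, \textsc{Disj}, and \textsc{Cons} rules; the program $\mathbb{C}$ itself is never re-examined.

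First I would fix $\epsilon$ and, for each $i \in I$, lift the given derivation through the existential quantifiers. Because the definition of $\textsf{cano}$ presupposes $\overrightarrow{x_i} \notin \textsf{fv}(\mathbb{C})$, the side condition of \textsc{Exist} is met, and applying \textsc{Exist} once per variable of the vector $\overrightarrow{x_i}$ gives
\[
\vdash [\,\exists \overrightarrow{x_i}.\, \psi_i\,] \ \mathbb{C} \ [\,\epsilon : \exists \overrightarrow{x_i}.\, \textsf{wpo}_\text{sh}(\psi_i,\mathbb{C},\epsilon)\,]
\]
for every $i \in I$. Next I would apply the infinitary \textsc{Disj} rule over the index set $I$, with $P_i := \exists \overrightarrow{x_i}.\, \psi_i$ and $Q_i := \exists \overrightarrow{x_i}.\, \textsf{wpo}_\text{sh}(\psi_i,\mathbb{C},\epsilon)$, obtaining
\[
\vdash \bigl[\, \bigvee_{i \in I} \exists \overrightarrow{x_i}.\, \psi_i \,\bigr] \ \mathbb{C} \ \bigl[\, \epsilon : \bigvee_{i \in I} \exists \overrightarrow{x_i}.\, \textsf{wpo}_\text{sh}(\psi_i,\mathbb{C},\epsilon) \,\bigr].
\]
By hypothesis the precondition here is exactly $\textsf{cano}(P,\mathbb{C})$, and by Definition \ref{def: def of wpo} the postcondition is exactly $\textsf{wpo}(P,\mathbb{C},\epsilon)$.

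Finally I would invoke Lemma \ref{lma:equivalence of normalization}, which gives $\textsf{cano}(P,\mathbb{C}) \equiv P$ and hence $\textsf{cano}(P,\mathbb{C}) \models P$. A single application of \textsc{Cons}, weakening the precondition from $\textsf{cano}(P,\mathbb{C})$ to $P$ while leaving the postcondition unchanged (so $Q = Q' = \textsf{wpo}(P,\mathbb{C},\epsilon)$ and the side condition $Q \models Q'$ is trivial), yields $\vdash [P] \ \mathbb{C} \ [\epsilon : \textsf{wpo}(P,\mathbb{C},\epsilon)]$, which is the goal.

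The only point needing care is the first step: when $\textsf{wpo}_\text{sh}(\psi_i,\mathbb{C},\epsilon)$ is itself a disjunction — as happens for $\mathbb{C}^\star$ and $\mathbb{C}_1 + \mathbb{C}_2$ — the postcondition of the premise is not a single symbolic heap, so \textsc{Exist} as literally stated does not immediately apply. This is handled by the equivalence $\exists x.\, \bigvee_j A_j \equiv \bigvee_j \exists x.\, A_j$: one first distributes the quantifier with an auxiliary \textsc{Disj}/\textsc{Cons} step (or, equivalently, reads \textsc{Exist} as permitting an arbitrary assertion in the postcondition). Hence this is a matter of bookkeeping rather than a genuine obstacle, and the lemma follows.
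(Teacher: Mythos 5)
Your proof is correct and follows essentially the same route as the paper's: the paper likewise assumes $\overrightarrow{x_i} \notin \textsf{fv}(\mathbb{C})$ and derives the conclusion by applying \textsc{Exist} to each clause, then the infinitary \textsc{Disj} over $I$, then \textsc{Cons} using the equivalence of $\textsf{cano}(P,\mathbb{C})$ and $P$ from Lemma~\ref{lma:equivalence of normalization}. Your closing remark about \textsc{Exist} when $\textsf{wpo}_\text{sh}$ is a disjunction is a reasonable bookkeeping observation that the paper leaves implicit, but it does not change the argument.
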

\begin{proof}
We can assume that $\overrightarrow{x_i} \notin \textsf{fv}(\mathbb{C})$ for each 
$i \in I$.
Then, $[ P ]  \ \mathbb{C} \ 
[ \epsilon : \textsf{wpo} (P, \mathbb{C}, \epsilon ) ] $ is derivable as follows.
\[ 
\inferrule*[right=\textsc{Exist}]
{
[  \psi_i ]  \ \mathbb{C} \ 
[ \epsilon :    \textsf{wpo}_\text{sh}(  \psi_i ,\mathbb{C}, \epsilon ) ]
\ \ \  ( \overrightarrow{x_i} \notin \textsf{fv}(\mathbb{C}) ) \ \ \text{for all $i \in I$}
}
{\inferrule*[right=\textsc{Disj}]
{ 
 [ \exists \overrightarrow{x_i}  .    \psi_i ]  \ \mathbb{C} \ 
[ \epsilon :   \exists \overrightarrow{x_i}  .  \textsf{wpo}_\text{sh}(  \psi_i ,\mathbb{C}, \epsilon ) ] 
\ \ \text{for all $i \in I$} }
{\inferrule*[right=\textsc{Cons}]
{[  \bigvee_{i \in I} \exists \overrightarrow{x_i}  .  \psi_i ]  \ \mathbb{C} \ 
[ \epsilon : \bigvee_{i \in I} \exists \overrightarrow{x_i}  .  \textsf{wpo}_\text{sh}(   \psi_i ,\mathbb{C}, \epsilon ) ] }
{[ P ]  \ \mathbb{C} \ 
[ \epsilon : \textsf{wpo} (P, \mathbb{C}, \epsilon ) ] } }}
\]
\end{proof}

\begin{lemma}\label{thm: for all p, c, epsilon, we have wpo}
For any $\psi,\mathbb{C},\epsilon$, we have
$\vdash [ \psi] \ \mathbb{C} \ 
[\normalfont \epsilon :  \textsf{wpo}_\text{sh} (\psi,\mathbb{C},\epsilon) ]$.
\end{lemma}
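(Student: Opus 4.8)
The plan is to prove this by structural induction on the command $\mathbb{C}$, with the derivation in each case mirroring exactly the recursive definition of $\textsf{wpo}_\text{sh}$ in Figure~\ref{fig: wpo for clauses}. For each syntactic form of $\mathbb{C}$, I would read off the corresponding clause in the definition of $\textsf{wpo}_\text{sh}$ and exhibit a proof-tree whose conclusion is $[\psi]\ \mathbb{C}\ [\epsilon:\textsf{wpo}_\text{sh}(\psi,\mathbb{C},\epsilon)]$, using the inductive hypothesis on subcommands where needed. A key preliminary observation is that $\psi$ is already canonical for $\mathbb{C}$ (this is the standing assumption on the arguments of $\textsf{wpo}_\text{sh}$), so the canonicity side-conditions on \textsc{Free}, \textsc{Load}, \textsc{Store} and their erroneous variants are automatically met.

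First I would dispatch the atomic non-heap cases. For \texttt{skip}, \texttt{error}, $x:=t$, $x:=\texttt{*}$, and $\texttt{assume($B$)}$, the axioms \textsc{Skip}, \textsc{Error}, \textsc{Assign}, \textsc{Havoc}, \textsc{Assume} literally produce the required postconditions (matching both the \textit{ok} and \textit{er} clauses, the latter being $\textsf{false}$). For the heap-atomic cases $\texttt{free($x$)}$, $x:=[y]$, $[x]:=t$, I would split on which branch of the $\textsf{wpo}_\text{sh}$ definition applies: if there is an aliasing witness $y\in\textsf{Aliases}(x,\psi)$ with $\psi=\psi'*y\mapsto t$, then \textsc{Free} (resp.\ \textsc{Load}, \textsc{Store}) applies and gives the \textit{ok} postcondition while the \textit{er} postcondition $\textsf{false}$ comes from the same axiom; otherwise, since $\psi$ is canonical, the hypothesis $\forall v\in\textsf{Aliases}(x,\psi).\,(v\mapsto t)\notin\psi$ of \textsc{FreeEr} (resp.\ \textsc{LoadEr}, \textsc{StoreEr}) holds, yielding the \textit{er} postcondition $\psi$ and \textit{ok} postcondition $\textsf{false}$. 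For $x:=\texttt{alloc()}$ I would use \textsc{Alloc1} in general, and when $\psi$ contains negative-heap atoms $y_i\not\mapsto$, combine \textsc{Alloc1} with repeated application of \textsc{Alloc2} and \textsc{Disj} to cover all the reallocation choices in the disjunction.

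For the compound cases I would use the corresponding structural rule together with the inductive hypothesis, but with one important subtlety: the inductive hypothesis gives a derivation about $\textsf{wpo}_\text{sh}$ applied to a \emph{symbolic heap}, whereas the recursive equations for $\mathbb{C}_1;\mathbb{C}_2$ and $\mathbb{C}^\star$ feed the (possibly disjunctive, existentially quantified) output of one $\textsf{wpo}$ call into another via the top-level $\textsf{wpo}$, not $\textsf{wpo}_\text{sh}$. To bridge this gap I would invoke Lemma~\ref{lma: for all derivation if and then}, which lifts clause-wise derivability of $\textsf{wpo}_\text{sh}$ to derivability of $\textsf{wpo}$ for an arbitrary assertion, together with \textsc{Cons} and Lemma~\ref{lma:equivalence of normalization} to re-canonicalize between steps. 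Concretely: \textsc{Choice} handles $\mathbb{C}_1+\mathbb{C}_2$; \textsc{Seq2} (and \textsc{Seq1} for the \textit{er} first-component branch) handles $\mathbb{C}_1;\mathbb{C}_2$ after lifting the IH on $\mathbb{C}_2$; \textsc{Local} plus \textsc{Exist} and the IH handles $\texttt{local $x$ in $\mathbb{C}$}$ (tracking the fresh-variable renamings $[x:=x'][x:=x''][x':=x]$); and for $\mathbb{C}^\star$, \textsc{Loop zero} gives the $n=0$ disjunct $\Upsilon(0)=\psi$, \textsc{Loop non-zero} with \textsc{Seq2} and the IH gives each $\Upsilon(n+1)$ disjunct, and \textsc{Disj} assembles the infinite disjunction $\bigvee_{n\in\mathbb{N}}\Upsilon(n)$.

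The main obstacle I anticipate is the $\mathbb{C}^\star$ case, specifically the bookkeeping needed to make the induction go through: $\Upsilon(n+1)=\textsf{wpo}(\Upsilon(n),\mathbb{C},\textit{ok})$ is defined in terms of the \emph{outer} $\textsf{wpo}$ on an assertion that is not itself a single symbolic heap, so a naive structural induction on $\mathbb{C}$ alone is not obviously well-founded here. The clean way around this is to treat the statement for \emph{all} assertions and commands as a single simultaneous claim — i.e.\ prove $\vdash[P]\ \mathbb{C}\ [\epsilon:\textsf{wpo}(P,\mathbb{C},\epsilon)]$ for arbitrary $P$ by combining Lemma~\ref{lma: for all derivation if and then} with the clause-wise statement, and use that packaged form as the induction hypothesis when unfolding the loop body. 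A secondary nuisance is verifying the freshness and modified-variable side-conditions ($x\notin\textsf{fv}(\mathbb{C})$ for \textsc{Exist}/\textsc{Local}, and the substitution gymnastics for $x:=t$, $x:=[y]$, $\texttt{local}$), but these are routine given that $\textsf{wpo}_\text{sh}$ is defined using explicitly fresh primed variables.
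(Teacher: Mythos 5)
Your proposal is correct and follows essentially the same route as the paper: induction on $\mathbb{C}$, direct application of the axioms for the atomic cases (splitting on the aliasing witness for the heap commands), and use of Lemma~\ref{lma: for all derivation if and then} to lift the clause-wise induction hypothesis to assertion-level $\textsf{wpo}$ in the sequencing, local, and loop cases. Your inline treatment of $\mathbb{C}^\star$ via \textsc{Loop zero}, \textsc{Loop non-zero} with \textsc{Seq2}, and an infinitary \textsc{Disj} is exactly the paper's derivation of its \textsc{Backwards Variant} rule, and your resolution of the well-foundedness worry (the recursive $\textsf{wpo}$ calls are on the loop body, a strict subcommand, packaged with Lemma~\ref{lma: for all derivation if and then}) matches what the paper does.
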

\begin{proof}
We prove this lemma by induction on $\mathbb{C}$.
Here, we only prove the case of $\textsc{Choice}$ and leave proofs of the other cases in Appendix \ref{appendix: proof of thm for all we have wpo}.

\noindent
\resizebox{0.9\textwidth}{!}{%
  \begin{minipage}{\linewidth}
\[
\inferrule*[right=\textsc{Disj}]
{
\inferrule*[right=Choice]
{
    \stackrel{\text{(Induction Hypothesis)}}%
{[\psi]  \ \mathbb{C}_1  \ 
[\epsilon : \textsf{wpo}_\text{sh} (\psi, \mathbb{C}_1  , \epsilon ) ]}
}
{[\psi]  \ \mathbb{C}_1 + \mathbb{C}_2 \ 
[\epsilon : \textsf{wpo}_\text{sh} (\psi, \mathbb{C}_1  , \epsilon ) ]} 
\ \ \ 
\inferrule*[right=Choice]
{
    \stackrel{\text{(Induction Hypothesis)}}%
{[\psi]  \ \mathbb{C}_2  \ 
[\epsilon : \textsf{wpo}_\text{sh} (\psi, \mathbb{C}_2  , \epsilon ) ]}
}
{[\psi]  \ \mathbb{C}_1 + \mathbb{C}_2 \ 
[\epsilon : \textsf{wpo}_\text{sh} (\psi, \mathbb{C}_2  , \epsilon ) ]}  }
{\inferrule*[right=\normalfont\text{By the definition}]
{ [\psi]  \ \mathbb{C}_1 + \mathbb{C}_2 \ 
[\epsilon : \textsf{wpo}_\text{sh} (\psi, \mathbb{C}_1  , \epsilon ) \lor
\textsf{wpo}_\text{sh} (\psi,   \mathbb{C}_2 , \epsilon )] }
{ [\psi]  \ \mathbb{C}_1 + \mathbb{C}_2 \ 
[\epsilon : \textsf{wpo}_\text{sh} (\psi, \mathbb{C}_1 + \mathbb{C}_2 , \epsilon )] }} 
\]
  \end{minipage}
}
\end{proof}

\begin{proposition}\label{prop: for all p, c, epsilon, we have wpo}
For any $P,\mathbb{C},\epsilon$,
$\normalfont\vdash [P] \ \mathbb{C} \ 
[ \epsilon :  
\textsf{wpo}(P,\mathbb{C},\epsilon) ]$
\end{proposition}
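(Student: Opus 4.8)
The proposition follows by combining the two lemmas just proved. The plan is as follows.

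First, write $\textsf{cano}(P,\mathbb{C}) = \bigvee_{i\in I}\exists\overrightarrow{x_i}.\psi_i$; by $\alpha$-renaming the bound variables we may assume $\overrightarrow{x_i}\notin\textsf{fv}(\mathbb{C})$ for every $i\in I$. By Lemma~\ref{lma:equivalence of normalization}, each clause $\psi_i$ is canonical for $\mathbb{C}$, so each $\psi_i$ is a legitimate argument to $\textsf{wpo}_\text{sh}$ and to the heap-manipulation rules that demand canonical symbolic heaps.

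Next, apply Lemma~\ref{thm: for all p, c, epsilon, we have wpo} to each clause: for every $i\in I$ and every $\epsilon$ we obtain a derivation
\[
\vdash [\psi_i] \ \mathbb{C} \ [\epsilon: \textsf{wpo}_\text{sh}(\psi_i,\mathbb{C},\epsilon)].
\]
Finally, feed this family of derivations into Lemma~\ref{lma: for all derivation if and then}, whose hypothesis is exactly ``$\vdash [\psi_i]\ \mathbb{C}\ [\epsilon:\textsf{wpo}_\text{sh}(\psi_i,\mathbb{C},\epsilon)]$ for all $i\in I$''. Its conclusion is $\vdash [P]\ \mathbb{C}\ [\epsilon:\textsf{wpo}(P,\mathbb{C},\epsilon)]$, which is the desired statement. (Internally, Lemma~\ref{lma: for all derivation if and then} stitches the clause derivations together by an application of \textsc{Exist} to each $\exists\overrightarrow{x_i}$, an infinitary \textsc{Disj} over $I$, and a final \textsc{Cons} to recognise the resulting disjunction as $\textsf{wpo}(P,\mathbb{C},\epsilon)$ and $P$ as $\bigvee_{i}\exists\overrightarrow{x_i}.\psi_i$, which is equivalent to $P$ by Lemma~\ref{lma:equivalence of normalization}.)

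There is no real obstacle remaining at this level: the genuine work has already been discharged in Lemma~\ref{thm: for all p, c, epsilon, we have wpo}, whose induction on $\mathbb{C}$ — in particular the $\mathbb{C}^\star$ case, which needs the infinitary \textsc{Disj} rule to match the infinite disjunction $\bigvee_{n\in\mathbb{N}}\Upsilon(n)$ produced by $\textsf{wpo}_\text{sh}$ — is where the care is needed. Here we only need to check that the side conditions of \textsc{Exist} and \textsc{Disj} are met, which is immediate from the assumption $\overrightarrow{x_i}\notin\textsf{fv}(\mathbb{C})$ and the fact that $I$ may be countably infinite.
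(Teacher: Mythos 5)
Your proposal is correct and takes essentially the same route as the paper: the paper's proof is a one-line combination of Lemma~\ref{thm: for all p, c, epsilon, we have wpo} (applied clause-by-clause) with Lemma~\ref{lma: for all derivation if and then}, which is exactly what you do. Your extra remarks about canonicality, $\alpha$-renaming, and the internal \textsc{Exist}/\textsc{Disj}/\textsc{Cons} structure merely unfold what those lemmas already contain.
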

\begin{proof}
By leveraging Lemma \ref{lma: for all derivation if and then} and 
Lemma \ref{thm: for all p, c, epsilon, we have wpo},  this proposition is established.
\end{proof}
By Proposition \ref{prop: expressiveness of WPO calculus}
and Proposition \ref{prop: for all p, c, epsilon, we have wpo},
we prove relative completeness of ISL.

\begin{theorem}[Relative Completeness]
For any $P,\mathbb{C},\epsilon,Q$, if 
$\models [P] \ \mathbb{C} \ [\epsilon: Q]$, then
$\vdash  [P] \ \mathbb{C} \ [\epsilon: Q]$.
\end{theorem}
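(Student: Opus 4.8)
The plan is to derive the theorem directly from the two main technical results already in place, namely Expressiveness (Proposition~\ref{prop: expressiveness of WPO calculus}) and the derivability of the weakest-postcondition specification (Proposition~\ref{prop: for all p, c, epsilon, we have wpo}), glued together by the \textsc{Cons} rule. The intuition is that $\textsf{wpo}(P,\mathbb{C},\epsilon)$ denotes the \emph{strongest} assertion that may occupy the postcondition slot of a valid ISL triple starting from $P$; consequently, every valid postcondition $Q$ must be entailed by it, and strengthening the postcondition from $\textsf{wpo}(P,\mathbb{C},\epsilon)$ down to $Q$ is precisely what the consequence rule of an under-approximation logic permits.

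Concretely, I would proceed in three steps. First, unfold the hypothesis $\models [P]\ \mathbb{C}\ [\epsilon:Q]$ by Definition~\ref{def: Validity of ISL Triples}: for every $\sigma'\models Q$ there is some $\sigma\models P$ with $(\sigma,\sigma')\in\llbracket\mathbb{C}\rrbracket_\epsilon$; but this is literally the statement that every model of $Q$ belongs to $\text{WPO}\llbracket P,\mathbb{C},\epsilon\rrbracket$. Second, apply Proposition~\ref{prop: expressiveness of WPO calculus}, which equates membership in $\text{WPO}\llbracket P,\mathbb{C},\epsilon\rrbracket$ with satisfaction of $\textsf{wpo}(P,\mathbb{C},\epsilon)$; chaining the two yields the semantic entailment $Q\models\textsf{wpo}(P,\mathbb{C},\epsilon)$. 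Third, invoke Proposition~\ref{prop: for all p, c, epsilon, we have wpo} to get $\vdash [P]\ \mathbb{C}\ [\epsilon:\textsf{wpo}(P,\mathbb{C},\epsilon)]$, and then apply \textsc{Cons} with the trivial entailment $P\models P$ on the left and the entailment $Q\models\textsf{wpo}(P,\mathbb{C},\epsilon)$ just obtained on the right:
\[
\inferrule[\textsc{Cons}]
{P \models P \ \ \ [P]\ \mathbb{C}\ [\epsilon : \textsf{wpo}(P,\mathbb{C},\epsilon)] \ \ \ Q \models \textsf{wpo}(P,\mathbb{C},\epsilon)}
{[P]\ \mathbb{C}\ [\epsilon : Q].}
\]

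Because the substantive work has already been carried out, I do not expect any genuine obstacle in this last step; the only point that warrants care is the \emph{direction} of the inclusion in Definition~\ref{def: Validity of ISL Triples}. Under-approximation validity says $Q$ is contained in the set of reachable states, so $Q$ must \emph{entail} (lie below) the weakest postcondition rather than follow from it --- this is the ``reversed'' inclusion stressed in the introduction, and it is exactly what makes $\textsf{wpo}$, rather than some weakest \emph{pre}condition, the right quantity to compute, and what makes \textsc{Cons} applicable in the orientation needed. For bookkeeping it is also worth recording that the infinitary disjunctions allowed in our assertion language cause no difficulty here: Proposition~\ref{prop: expressiveness of WPO calculus} is stated for arbitrary assertions $P$, and the derivation behind Proposition~\ref{prop: for all p, c, epsilon, we have wpo} already relies on the infinitary \textsc{Disj} rule, so no finiteness hypothesis beyond $\textsf{fv}$ being finite is required.
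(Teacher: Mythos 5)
Your proposal is correct and follows exactly the paper's own argument: unfold validity to get $Q \models \textsf{wpo}(P,\mathbb{C},\epsilon)$ via Proposition~\ref{prop: expressiveness of WPO calculus}, then combine Proposition~\ref{prop: for all p, c, epsilon, we have wpo} with \textsc{Cons}. No differences worth noting.
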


\begin{proof}
 Since $[P] \ \mathbb{C} \ [\epsilon: Q]$ is valid, we have $\forall
 \sigma'\models Q.\sigma'\in\textsf{WPO}\llbracket P,\mathbb{C},\epsilon\rrbracket $
 by the definition of the validity of ISL triples. By Proposition
 \ref{prop: expressiveness of WPO calculus}, we have $\forall
 \sigma'\models Q.\sigma'\models\textsf{wpo}(P,\mathbb{C},\epsilon)$, which is equivalent to  $Q\models \textsf{wpo}(P,\mathbb{C},\epsilon)$.

 Then, $[P]\ \mathbb{C}\ [Q]$ is derivable in ISL as follows.
\[
\inferrule*[right=\textsc{Cons}]
  {%
    \stackrel{\text{(By Proposition \ref{prop: for all p, c, epsilon, we have wpo})}}%
             {\vdash [P]\,\mathbb{C}\,[\epsilon:\textsf{wpo}(P,\mathbb{C},\epsilon)]}
    \and
    \stackrel{\text{(By Proposition \ref{prop: expressiveness of WPO calculus})}}%
             {Q \models \textsf{wpo}(P,\mathbb{C},\epsilon)}
  }
  {%
    \vdash [P]\,\mathbb{C}\,[\epsilon:Q]
  }
\]
\end{proof}

\section{Related Work}\label{sec: Related Work}
\subsubsection{Proving relative completeness for graph manipulation.}
Poskitt et al. demonstrate soundness and relative completeness for an under-approximation program logic for a graph manipulation language using the calculation of the weakest postconditions \cite{poskitt2023monadic}. 
Therein, they prove relative completeness of an extensional logic with semantic predicates.
This setting is similar to that of IL \cite{o2019incorrectness} in that there is no need to prove expressiveness. 
Relative completeness of the intensional logic, which does not allow for semantic predicates and requires proof of expressiveness, remains unknown.

\subsubsection{Unifying correctness and incorrectness.}

Efforts to integrate correctness and incorrectness reasoning within a unified program logic are demonstrated by the work of Bruni et al. \cite{bruni2023correctness,bruni2021logic}.
They introduce the Local Completeness Logic, which imposes constraints on the rule of consequence to thereby guarantee the recoverability of an over-approximation of states that are reachable from the postcondition. Additionally, they propose a concept of local completeness that ensures that no false alarms are generated relative to some fixed input.

Similarly, Maksimović et al. \cite{maksimovic2023exact} propose Exact SL, whose operational semantics is complete; their aim is to unify correctness and incorrectness considerations. 
However, it encounters challenges that stem from the limitations of the rule of consequence.
Exact SL introduces explicit specifications of both ``\textit{ok}'' and ``\textit{er}'' cases in postconditions, and it has advantages in addressing   correctness (over-approximation) and incorrectness (under-approximation) of programs. 
However, this approach introduces a trade-off, as it involves a more restrictive coverage of proof rules, such as the rule of consequence, which is applicable only to equivalent formulas.

Outcome logic \cite{zilberstein2023outcome} and 
Outcome SL \cite{zilberstein2024outcome} stand out in unifying over and under-approximate reasoning for heap-manipulating and probabilistic programs. 
Outcome Logic combines and generalizes standard over-approximate Hoare triples with forward under-approximate triples. 
In \cite{zilberstein2024relatively},   relative completeness of Outcome Logic is discussed.

\section{Conclusions and Future work}\label{sec: Conclusions and Future work}
In this study, we have proven   relative completeness of ISL. 
Our objective was to maintain   relative completeness of RHL while expanding to include exit conditions and heap manipulation within ISL. 
The calculation of the weakest postconditions in our ISL and the demonstration of its expressiveness are key to this proof.  
However, this requires a trade-off by allowing for infinite disjunctions, as is done in the proof of RHL.

For our future work, we are exploring the possibility of demonstrating
relative completeness for ISL within a finite syntax extended by some arithmetic theory.
Although infinitary syntax aligns with our research goals to prove relative completeness, it may not be the best fit for constructing a practical
automatic theorem prover.

\subsubsection*{Acknowledgments}
We would like to express our gratitude to Professor Shoji Yuen for providing thoughtful guidance on our research. We also extend our thanks to Professor Hiroyuki Seki and Professor Yuichi Kaji from our research group for their valuable remarks. Additionally, we are grateful to the three anonymous referees of APLAS 2024 and an expert reviewer for their insightful comments and suggestions.

This work was supported by JSPS KAKENHI Grant Number JP22K11901. Furthermore, Yeonseok Lee is financially supported by TMI, one of the WISE programs established by MEXT Japan, as well as by JST SPRING under Grant Number JPMJSP2125. The authors also thank the Interdisciplinary Frontier Next-Generation Researcher Program of the Tokai Higher Education and Research System.

\newpage

%
%
%
\bibliographystyle{splncs04}

\bibliography{bib}

\begin{thebibliography}{10}
\providecommand{\url}[1]{\texttt{#1}}
\providecommand{\urlprefix}{URL }
\providecommand{\doi}[1]{https://doi.org/#1}

\bibitem{berdine2005decidable}
Berdine, J., Calcagno, C., O'Hearn, P.W.: A decidable fragment of separation
  logic. In: FSTTCS 2004: Foundations of Software Technology and Theoretical
  Computer Science: 24th International Conference, Chennai, India, December
  16-18, 2004. Proceedings 24. pp. 97--109. Springer (2005)

\bibitem{berdine2005symbolic}
Berdine, J., Calcagno, C., O'Hearn, P.W.: Symbolic execution with separation
  logic. In: Programming Languages and Systems: Third Asian Symposium, APLAS
  2005, Tsukuba, Japan, November 2-5, 2005. Proceedings 3. pp. 52--68. Springer
  (2005)

\bibitem{bruni2021logic}
Bruni, R., Giacobazzi, R., Gori, R., Ranzato, F.: A logic for locally complete
  abstract interpretations. In: 2021 36th Annual ACM/IEEE Symposium on Logic in
  Computer Science (LICS). pp. 1--13. IEEE (2021)

\bibitem{bruni2023correctness}
Bruni, R., Giacobazzi, R., Gori, R., Ranzato, F.: A correctness and
  incorrectness program logic. Journal of the ACM  \textbf{70}(2),  1--45
  (2023)

\bibitem{calcagno2011infer}
Calcagno, C., Distefano, D.: Infer: An automatic program verifier for memory
  safety of c programs. In: NASA Formal Methods Symposium. pp. 459--465.
  Springer (2011)

\bibitem{calcagno2015open}
Calcagno, C., Distefano, D., O'Hearn, P.: Open-sourcing facebook infer:
  Identify bugs before you ship. Facebook Engineering Blog (2015)

\bibitem{cook1978soundness}
Cook, S.A.: Soundness and completeness of an axiom system for program
  verification. SIAM Journal on Computing  \textbf{7}(1),  70--90 (1978)

\bibitem{de2011reverse}
De~Vries, E., Koutavas, V.: Reverse {Hoare} {Logic}. In: International
  Conference on Software Engineering and Formal Methods. pp. 155--171. Springer
  (2011)

\bibitem{echenim2020bernays}
Echenim, M., Iosif, R., Peltier, N.: The {Bernays-Sch{\"o}nfinkel-Ramsey} class
  of separation logic with uninterpreted predicates. ACM Transactions on
  Computational Logic (TOCL)  \textbf{21}(3),  1--46 (2020)

\bibitem{godel1931formal}
G{\"o}del, K.: {\"U}ber formal unentscheidbare s{\"a}tze der principia
  mathematica und verwandter systeme i. Monatshefte f{\"u}r mathematik und
  physik  \textbf{38},  173--198 (1931)

\bibitem{hoare1969axiomatic}
Hoare, C.A.R.: An axiomatic basis for computer programming. Communications of
  the ACM  \textbf{12}(10),  576--580 (1969)

\bibitem{lee2024relative}
Lee, Y., Nakazawa, K.: Relative completeness of incorrectness separation logic.
  In: Asian Symposium on Programming Languages and Systems. pp. 264--282.
  Springer (2024)

\bibitem{maksimovic2023exact}
Maksimovi{\'c}, P., Cronj{\"a}ger, C., L{\"o}{\"o}w, A., Sutherland, J.,
  Gardner, P.: Exact separation logic: Towards bridging the gap between
  verification and bug-finding. In: 37th European Conference on Object-Oriented
  Programming (ECOOP 2023). Schloss Dagstuhl-Leibniz-Zentrum f{\"u}r Informatik
  (2023)

\bibitem{nakazawa2018cyclic}
Nakazawa, K., Tatsuta, M., Kimura, D., Yamamura, M.: Cyclic theorem prover for
  separation logic by magic wand. In: ADSL 18 (First Workshop on Automated
  Deduction for Separation Logics) (2018)

\bibitem{nakazawa2020spatial}
Nakazawa, K., Tatsuta, M., Kimura, D., Yamamura, M.: Spatial factorization in
  cyclic-proof system for separation logic. Computer Software  \textbf{37}(1),
  1\_125--1\_144 (2020)

\bibitem{nguyen2007automated}
Nguyen, H.H., David, C., Qin, S., Chin, W.N.: Automated verification of shape
  and size properties via separation logic. In: International Workshop on
  Verification, Model Checking, and Abstract Interpretation. pp. 251--266.
  Springer (2007)

\bibitem{o2019incorrectness}
O'Hearn, P.W.: Incorrectness logic. Proceedings of the ACM on Programming
  Languages  \textbf{4}(POPL),  1--32 (2019)

\bibitem{peltier2023testing}
Peltier, N.: Testing the satisfiability of formulas in separation logic with
  permissions. In: TABLEAUX 2023 32nd International Conference on Automated
  Reasoning with Analytic Tableaux and Related Methods. Springer (2023)

\bibitem{poskitt2023monadic}
Poskitt, C.M., Plump, D.: Monadic second-order incorrectness logic for gp 2.
  Journal of Logical and Algebraic Methods in Programming  \textbf{130},
  100825 (2023)

\bibitem{raad2020local}
Raad, A., Berdine, J., Dang, H.H., Dreyer, D., O'Hearn, P., Villard, J.: Local
  reasoning about the presence of bugs: Incorrectness separation logic. In:
  Computer Aided Verification: 32nd International Conference, CAV 2020, Los
  Angeles, CA, USA, July 21--24, 2020, Proceedings, Part II 32. pp. 225--252.
  Springer (2020)

\bibitem{reynolds2002separation}
Reynolds, J.C.: Separation logic: A logic for shared mutable data structures.
  In: Proceedings 17th Annual IEEE Symposium on Logic in Computer Science. pp.
  55--74. IEEE (2002)

\bibitem{tatsuta2009completeness}
Tatsuta, M., Chin, W.N., Al~Ameen, M.F.: Completeness of pointer program
  verification by separation logic. In: 2009 Seventh IEEE International
  Conference on Software Engineering and Formal Methods. pp. 179--188. IEEE
  (2009)

\bibitem{tatsuta2015separation}
Tatsuta, M., Kimura, D.: Separation logic with monadic inductive definitions
  and implicit existentials. In: Programming Languages and Systems: 13th Asian
  Symposium, APLAS 2015, Pohang, South Korea, November 30-December 2, 2015,
  Proceedings 13. pp. 69--89. Springer (2015)

\bibitem{zilberstein2024relatively}
Zilberstein, N.: A relatively complete program logic for effectful branching.
  arXiv preprint arXiv:2401.04594  (2024)

\bibitem{zilberstein2023outcome}
Zilberstein, N., Dreyer, D., Silva, A.: Outcome logic: A unifying foundation
  for correctness and incorrectness reasoning. Proceedings of the ACM on
  Programming Languages  \textbf{7}(OOPSLA1),  522--550 (2023)

\bibitem{zilberstein2024outcome}
Zilberstein, N., Saliling, A., Silva, A.: Outcome separation logic: Local
  reasoning for correctness and incorrectness with computational effects.
  Proceedings of the ACM on Programming Languages  \textbf{8}(OOPSLA1),
  276--304 (2024)

\end{thebibliography}

\appendix
\newpage
\newgeometry{left=1in, right=1in, top=1in, bottom=1in}
\section{Proof of Proposition \ref{lma: local soundness}}
\label{appendix: proof of soundness}
The proof rules in our system that do not manipulate the heap are adopted from the original ISL paper \cite{raad2020local} or from Reverse Hoare Logic (RHL) \cite{de2011reverse}, 
where their local 
soundness has already been established. 
Therefore, we do not reprove soundness for these rules. 
However, for the rules that do involve heap manipulation, we provide proofs of  soundness in this paper. 
This is necessary because our semantics for heap manipulation differ from those in \cite{raad2020local}.
Before proving soundness (Proposition \ref{lma: local soundness}), we introduce several lemmas.

\subsection{Lemmas  supporting Proposition \ref{lma: local soundness}}

We recall the heap monotonicity property from the original ISL formulation~\cite{raad2020local}.
\begin{lemma}[Heap Monotonicity~{\cite{raad2020local}}]
\label{lma:Heap-Monotonicity}
Let \( \mathbb{C} \) be any command, and suppose that
\[
((s,h), (s',h')) \in \llbracket \mathbb{C} \rrbracket_{\epsilon}
\quad \text{for } \epsilon \in \{\textit{ok}, \textit{er} \}.
\]
Then,
\[\normalfont
\textsf{dom}(h) \subseteq \textsf{dom}(h').
\]
\end{lemma}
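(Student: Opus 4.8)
The plan is to prove the statement by \emph{structural induction on the command} $\mathbb{C}$, reading off each case directly from the denotational semantics of Definition~\ref{def: Denotational semantics of ISL}. The property being propagated through the induction is exactly that every pair $((s,h),(s',h'))\in\llbracket\mathbb{C}\rrbracket_\epsilon$ satisfies $\textsf{dom}(h)\subseteq\textsf{dom}(h')$, for both $\epsilon\in\{\textit{ok},\textit{er}\}$.

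First I would dispatch the base cases by inspection. The commands \texttt{skip}, $x:=t$, $x:=\texttt{*}$, \texttt{assume($B$)}, $x:=[y]$ (both exit conditions), and the \textit{er}-branches of \texttt{free($x$)} and $[x]:=t$ are all restrictions of the identity relation on states, so $h=h'$ and the domains are equal; the \textit{ok}-branch of \texttt{error} and the \textit{er}-branches of the heap-free commands are the empty relation, so the claim is vacuous. For the heap-touching primitives I argue directly: in the \textit{ok}-case of $x:=\texttt{alloc()}$ we have $h'=h[l\mapsto v]$, so $\textsf{dom}(h')=\textsf{dom}(h)\cup\{l\}\supseteq\textsf{dom}(h)$; in the \textit{ok}-cases of \texttt{free($x$)} and $[x]:=t$ we have $h'=h[s(x)\mapsto\bot]$ resp.\ $h'=h[s(x)\mapsto s(t)]$ with $s(x)\in\textsf{dom}_+(h)\subseteq\textsf{dom}(h)$, so in fact $\textsf{dom}(h')=\textsf{dom}(h)$.

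For the inductive cases: nondeterministic choice $\mathbb{C}_1+\mathbb{C}_2$ and \texttt{local $x$ in $\mathbb{C}$} reduce immediately to the induction hypothesis, since the former is a union of the two component relations and the latter uses precisely the heap-transformation of $\llbracket\mathbb{C}\rrbracket_\epsilon$ (only the store component is altered). Sequential composition $\mathbb{C}_1;\mathbb{C}_2$ is handled by transitivity of $\subseteq$: for $\epsilon=\textit{ok}$ there is an intermediate state $(s'',h'')$ with $\textsf{dom}(h)\subseteq\textsf{dom}(h'')\subseteq\textsf{dom}(h')$ by two applications of the hypothesis; for $\epsilon=\textit{er}$ the pair lies either in $\llbracket\mathbb{C}_1\rrbracket_{\textit{er}}$ (one application) or in $\llbracket\mathbb{C}_1\rrbracket_{\textit{ok}};\llbracket\mathbb{C}_2\rrbracket_{\textit{er}}$ (the same transitivity argument). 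Finally, for $\mathbb{C}^\star$ I use $\llbracket\mathbb{C}^\star\rrbracket_\epsilon=\bigcup_{m\in\mathbb{N}}\llbracket\mathbb{C}^m\rrbracket_\epsilon$ and an auxiliary induction on $m$: the case $m=0$ is \texttt{skip}, and $\mathbb{C}^{m+1}=\mathbb{C};\mathbb{C}^{m}$ is covered by the sequential-composition argument, with the outer hypothesis giving the claim for $\mathbb{C}$ and the inner one for $\mathbb{C}^m$.

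I do not expect any genuine obstacle here; the lemma is essentially a matter of bookkeeping over the semantic clauses. The only two points that deserve a line of care are (i) recording that the heap-update notation $h[l\mapsto v]$, defined by analogy with store update, \emph{adds} $l$ to the domain when $l\notin\textsf{dom}(h)$ — which is exactly what makes allocation domain-monotone — and (ii) threading the nested induction on the iteration count $m$ so that the sequential-composition case is reused rather than reproved.
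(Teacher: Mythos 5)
Your proof is correct. The paper itself gives no proof of this lemma --- it is merely recalled from the original ISL paper \cite{raad2020local} --- and your structural induction over the denotational semantics (identity or empty relations for the non-heap and error branches, explicit domain computations for \texttt{alloc}, \texttt{free}, and \texttt{store}, transitivity for sequencing, and a nested induction on the iteration count for $\mathbb{C}^\star$) is exactly the standard argument one would write; it also correctly covers this paper's modified semantics, where the \emph{er}-branches of the heap commands are identity relations and \texttt{alloc} may reuse a $\bot$-location (in which case $\textsf{dom}(h')=\textsf{dom}(h)$ rather than a strict extension, which your $\textsf{dom}(h)\cup\{l\}$ computation already handles).
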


\begin{lemma}[Frame-Preservation Property]
\label{lma: Frame-Preservation Property}
Let \( \mathbb{C} \) be any command, and suppose that
\[
((s,h), (s',h')) \in \llbracket \mathbb{C} \rrbracket_{\textit{ok}}
\quad \text{and} \quad h_r \text{ is disjoint from both } h \text{ and } h'.
\]
Then,
\[
((s, h \circ h_r), (s', h' \circ h_r)) \in \llbracket \mathbb{C} \rrbracket_{\textit{ok}}.
\]
\end{lemma}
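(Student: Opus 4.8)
The plan is to prove the Frame-Preservation Property by structural induction on the command $\mathbb{C}$, mirroring the inductive structure of the denotational semantics in Definition \ref{def: Denotational semantics of ISL}. For each base-case command I would unfold the definition of $\llbracket\mathbb{C}\rrbracket_{\textit{ok}}$, observe how it acts on a heap, and verify that the same transition is available when the heap is enlarged by a disjoint $h_r$. The commands split into two groups. First, the heap-agnostic commands ($\texttt{skip}$, $x:=t$, $x:=\texttt{*}$, $\texttt{assume}(B)$; $\texttt{error}$ has empty $\textit{ok}$-semantics so is vacuous): these only touch the store, leaving the heap literally unchanged, so if $((s,h),(s',h'))\in\llbracket\mathbb{C}\rrbracket_{\textit{ok}}$ then $h=h'$ and the transition $((s,h\circ h_r),(s',h'\circ h_r))$ is immediate from the same defining clause. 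Second, the heap-manipulating commands in their $\textit{ok}$ branch ($x:=\texttt{alloc()}$, $\texttt{free}(x)$, $x:=[y]$, $[x]:=t$): here I would use the key fact that the $\textit{ok}$ branch fires precisely when the accessed location lies in $\textsf{dom}_+(h)$ (for $\texttt{free}$, $x:=[y]$, $[x]:=t$) or the freshly chosen location lies outside $\textsf{dom}(h)$ or is $\bot$-marked in $h$ (for $\texttt{alloc()}$), and the resulting heap modification is local to that single location.

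For the compound commands I would invoke the induction hypothesis. For $\mathbb{C}_1;\mathbb{C}_2$ in the $\textit{ok}$ case, $\llbracket\mathbb{C}_1;\mathbb{C}_2\rrbracket_{\textit{ok}} = \llbracket\mathbb{C}_1\rrbracket_{\textit{ok}};\llbracket\mathbb{C}_2\rrbracket_{\textit{ok}}$, so there is an intermediate state $(s'',h'')$; I would apply the IH to the first leg to lift it to $h\circ h_r \rightsquigarrow h''\circ h_r$ and to the second leg to lift $h''\circ h_r \rightsquigarrow h'\circ h_r$, then recompose. This requires knowing that $h_r$ is disjoint from $h''$ as well, which follows from Heap Monotonicity (Lemma \ref{lma:Heap-Monotonicity}): $\textsf{dom}(h)\subseteq\textsf{dom}(h'')\subseteq\textsf{dom}(h')$, and since $h_r$ is disjoint from $h'$ it is disjoint from $h''$. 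The cases $\mathbb{C}_1+\mathbb{C}_2$ and $\texttt{local }x\texttt{ in }\mathbb{C}$ are straightforward applications of the IH (for $\texttt{local}$, the store-update witnesses $v,v'$ carry over unchanged, and the heap component is handled by the IH), and $\mathbb{C}^\star = \bigcup_{m}\llbracket\mathbb{C}^m\rrbracket_{\textit{ok}}$ reduces to a subsidiary induction on $m$ using the sequential-composition case.

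The main obstacle I anticipate is the $\texttt{alloc()}$ case. The $\textit{ok}$-semantics of $x:=\texttt{alloc()}$ picks a location $l$ with $l\notin\textsf{dom}(h)\lor h(l)=\bot$ and sets $h'=h[l\mapsto v]$. When we enlarge to $h\circ h_r$, a location $l$ that was legal for $h$ may now be occupied by $h_r$, so the \emph{same} transition need not be available — but the statement only requires $h_r$ to be disjoint from both $h$ and $h'=h[l\mapsto v]$, which forces $l\notin\textsf{dom}(h_r)$. Hence $l\notin\textsf{dom}(h\circ h_r)$ or $(h\circ h_r)(l)=h(l)=\bot$, so $l$ remains a legal choice for $h\circ h_r$, and $(h\circ h_r)[l\mapsto v] = h[l\mapsto v]\circ h_r = h'\circ h_r$ as required. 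The subtlety is purely in tracking these disjointness side-conditions carefully; once the hypotheses are read correctly, each case is a short unfolding. I would also remark that the property genuinely uses the $\textit{ok}$ restriction: in the $\textit{er}$ case of $\texttt{free}(x)$ the error fires because $s(x)\notin\textsf{dom}_+(h)$, but $s(x)$ could be in $\textsf{dom}_+(h_r)$, so the enlarged command would instead succeed — this is exactly the soundness failure of $\textsc{Frame}$ for $\textit{er}$ noted earlier in the paper, and it explains why Lemma \ref{lma: Frame-Preservation Property} is stated only for $\textit{ok}$.
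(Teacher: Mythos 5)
Your proof is correct. Note that the paper itself states Lemma~\ref{lma: Frame-Preservation Property} without any proof, so there is no argument of record to compare against; your structural induction on $\mathbb{C}$ is the natural way to discharge it, and you correctly identify and resolve the only two points of genuine delicacy. First, for $x:=\texttt{alloc()}$ the chosen location $l$ could in principle collide with $h_r$, but disjointness of $h_r$ from $h'=h[l\mapsto v]$ forces $l\notin\textsf{dom}(h_r)$, so $l$ remains admissible for $h\circ h_r$ and $(h\circ h_r)[l\mapsto v]=h'\circ h_r$. Second, for $\mathbb{C}_1;\mathbb{C}_2$ the induction hypothesis needs $h_r$ disjoint from the intermediate heap $h''$, which you obtain from Heap Monotonicity (Lemma~\ref{lma:Heap-Monotonicity}) via $\textsf{dom}(h'')\subseteq\textsf{dom}(h')$. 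The remaining cases (store-only commands, \texttt{free}/load/store locality, \texttt{local}, choice, and the inner induction on $m$ for $\mathbb{C}^\star$) are handled as you describe, and your closing observation that the \textit{er} branch of \texttt{free} genuinely breaks the property matches the paper's stated reason for restricting \textsc{Frame} to \textit{ok}.
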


\begin{lemma}
\label{lma:canonical_implies_allocated_alias}
Let $\normalfont \psi \in \textsf{CF}_\text{sh}(V)$ 
where $\normalfont \{x\} \cup \textsf{fv}(\psi) \subseteq V$. 
If $(s,h) \models \psi$ and $ \normalfont s(x) \in \textsf{dom}_+(h)$, 
then there exists a variable $\normalfont v \in \textsf{Aliases}(x, \psi)$ 
such that $v \mapsto t \in \psi$ for some term $t$.
\end{lemma}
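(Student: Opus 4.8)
The plan is to unfold the definitions of $\models$ and $\textsf{dom}_+$ and use the canonical‑form hypothesis to ``find'' an explicit points‑to atom. Concretely, since $s(x)\in\textsf{dom}_+(h)$ there is some location $l=s(x)$ with $h(l)\in\textsc{Val}$, so in particular $\textsf{dom}(h)\neq\emptyset$. Writing $\psi=a_1*\cdots*a_k$ as a separating conjunction of atoms, the satisfaction $(s,h)\models\psi$ gives a splitting $h=h_1\circ\cdots\circ h_k$ with $(s,h_j)\models a_j$ for each $j$. The pure atoms and $\textsf{emp}$ contribute only empty heaps, so the location $l$ must be carried by one of the spatial atoms of the form $v\mapsto t$ or $v\not\mapsto$; that is, there is some $j$ with $(s,h_j)\models a_j$, $\textsf{dom}(h_j)=\{s(v)\}$, and $s(v)=l=s(x)$.

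Next I would rule out the negative‑heap case. If $a_j$ were $v\not\mapsto$, then $(s,h_j)\models v\not\mapsto$ means $h_j(s(v))=\bot\notin\textsc{Val}$; but $h_j(s(v))=h_j(l)=h(l)\in\textsc{Val}$ since $l\in\textsf{dom}_+(h)$ and the composition agrees with $h_j$ on $\textsf{dom}(h_j)$, a contradiction. Hence $a_j$ is of the form $v\mapsto t$ with $s(v)=s(x)$, so $v\mapsto t\in\psi$.

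It remains to show $v\in\textsf{Aliases}(x,\psi)$, i.e.\ $x\approx v\in\psi$. Here is where the canonical‑form hypothesis $\psi\in\textsf{CF}_\text{sh}(V)$ with $\{x\}\cup\textsf{fv}(\psi)\subseteq V$ is essential: since $x,v\in V\cup\{\texttt{null}\}$, the symbolic heap $\psi$ contains either $x\approx v$ or $x\not\approx v$. If it contained $x\not\approx v$, then $(s,h)\models\psi$ would entail $s(x)\neq s(v)$, contradicting $s(v)=s(x)$. Therefore $x\approx v\in\psi$, so $v\in\textsf{Aliases}(x,\psi)$, which completes the argument.

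The only mildly delicate point is the bookkeeping of the heap splitting: arguing carefully that a single fixed location $l$ lies in exactly one component $h_j$ of the decomposition, that this component is spatial and non‑empty, and that $h_j$ and $h$ agree on that location. This is routine given the disjointness conditions in the definition of $\circ$ and the associativity/commutativity of $*$ (which the paper already uses freely), so I do not expect a genuine obstacle — just the need to be explicit about which atom of $\psi$ is responsible for the location $s(x)$.
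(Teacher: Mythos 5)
Your proof is correct: the paper states Lemma~\ref{lma:canonical_implies_allocated_alias} without giving any proof, and your argument supplies exactly the routine reasoning it implicitly relies on (the location $s(x)$ must be carried by a single spatial atom of the splitting, the $v\not\mapsto$ case is excluded because $s(x)\in\textsf{dom}_+(h)$ forces $h(s(x))\in\textsc{Val}$, and canonicity forces $x\approx v\in\psi$ since $x\not\approx v\in\psi$ would contradict $s(v)=s(x)$). The bookkeeping points you flag (the unique component $h_j$ containing $s(x)$, and $h$ agreeing with $h_j$ there) are indeed immediate from the definition of $\circ$, so there is no gap.
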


\subsection{Proof of Proposition~\ref{lma: local soundness}}

From now on, we prove Proposition~\ref{lma: local soundness} for each of the proof rules.

\subsubsection{Case {$\normalfont \textsc{FrameOk}$}}  

We prove the following.

\medskip
\textit{For any assertions $\psi$, $\varphi$, and $\phi$, and any command $\mathbb{C}$, if}
\[
\models [\psi]\ \mathbb{C} \ [ok : \varphi]
\quad\textit{and}\quad
\mathsf{mod}(\mathbb{C}) \cap \mathsf{fv}(\phi) = \emptyset
\]
\textit{hold, then the framed triple}
\[
\models [\psi * \phi]\ \mathbb{C} \ [ok : \varphi * \phi]
\]
\textit{also holds.}

Let $\sigma' = (s_2, h_\varphi \circ h_\phi)$ be an arbitrary state such that
\begin{center}
$\sigma' \models \varphi * \phi$,
where
$(s_2, h_\varphi) \models \varphi$ and
$(s_2, h_\phi) \models \phi$.
\end{center}

From the assumption $\models [\psi] \ \mathbb{C} \ [ok : \varphi]$ and $(s_2, h_\varphi) \models \varphi$, 
the definition of semantic validity yields a state $(s_1, h_\psi)$ such that:
\begin{align}
(s_1, h_\psi) &\models \psi, \\
((s_1, h_\psi), (s_2, h_\varphi)) &\in \llbracket \mathbb{C} \rrbracket_{ok}. \label{eq:premise-trans}
\end{align}

Moreover, by the semantics of $\llbracket \mathbb{C} \rrbracket_{ok}$,
\[
\forall x \notin \mathsf{mod}(\mathbb{C}).\quad s_1(x) = s_2(x).
\]

Since $\mathsf{mod}(\mathbb{C}) \cap \mathsf{fv}(\phi) = \emptyset$, it follows that $s_1$ and $s_2$ agree on all variables in $\mathsf{fv}(\phi)$. That is,
$$\forall x \in \mathsf{fv}(\phi). \quad s_1(x) = s_2(x).$$
Thus, by the satisfaction semantics:
\[
(s_1, h_\phi) \models \phi.
\]

We have
$((s_1, h_\psi), (s_2, h_\varphi)) \in \llbracket \mathbb{C} \rrbracket_{ok}$
by (\ref{eq:premise-trans}).  
Then, by heap monotonicity (Lemma~\ref{lma:Heap-Monotonicity}),
\[
\textsf{dom}(h_\psi) \subseteq \textsf{dom}(h_\varphi).
\]
Since we have \( \textsf{dom}(h_\varphi) \cap \textsf{dom}(h_\phi) = \emptyset \) from  
\( \sigma' \models \varphi * \phi \),  
it follows that \( \textsf{dom}(h_\psi) \cap \textsf{dom}(h_\phi) = \emptyset \).  
Therefore, we have:
\[
(s_1,\, h_\psi \circ h_\phi) \models \psi * \phi.
\]

Combining \eqref{eq:premise-trans} with 
Frame-Preservation Property (Lemma \ref{lma: Frame-Preservation Property}), we obtain
\[
((s_1,\, h_\psi \circ h_\phi),\; (s_2,\, h_\varphi \circ h_\phi))
      \in \llbracket \mathbb{C} \rrbracket_{ok}.
\]

Hence, for any state
\(\sigma' = (s_2,\, h_\varphi \circ h_\phi)\) satisfying
\(\sigma' \models \varphi * \phi\),
we can construct the state
\(\sigma = (s_1,\, h_\psi \circ h_\phi)\) such that
\[
\sigma \models \psi * \phi
\quad\text{and}\quad
(\sigma,\, \sigma') \in \llbracket \mathbb{C} \rrbracket_{ok}.
\]

This completes the proof of local soundness.

\subsubsection{Case {$\normalfont \textsc{Alloc1}$}}   
\begin{align*}
    & \models [\psi] \ x:=\texttt{alloc()} \ 
\left[ \textit{ok}:  
\exists x'.   \psi[x := x'] * x \mapsto -  \right]  \\ 
\Leftrightarrow \ & 
\forall (s',h') \models \exists x'.   \psi[x := x'] * x \mapsto -  , 
\exists (s,h) \models \psi . 
( (s,h) , (s', h') ) \in \llbracket  x:=\texttt{alloc()} \rrbracket _\textit{ok}
\end{align*}
We show the triple is valid.
Let us say that $(s',h')$ is any state satisfying $\exists x'.   \psi[x := x'] * x \mapsto -$.
Then the following holds.
\begin{align*}
&  (s',h') \models \exists x'.   \psi[x := x'] * x \mapsto -  \\
\Leftrightarrow \ &
\exists v' . (s' [x' \mapsto v'] ,h') \models  \psi[x := x'] * x \mapsto -  \\
\Leftrightarrow \ &
\exists v' . (s' [x' \mapsto v'] , h' - ( s'(x) \mapsto h'(s'(x)) ) ) \models  \psi[x := x'] \\
& \text{ and }
(s' [x' \mapsto v'] , s'(x) \mapsto h'(s'(x)) ) \models  x \mapsto -  
\\
& \text{// \( h' - ( s'(x) \mapsto h'(s'(x)) ) \) is the same function as \( h' \), except that its domain no longer contains \( s'(x) \).} \\
& \text{// Since $x'$ is fresh, $s' [x' \mapsto v'] (x) = s'(x)$ holds.} \\
\Leftrightarrow \ &
\exists v' . (s' [x' \mapsto v'] [x \mapsto v'] , h' - ( s'(x) \mapsto h'(s'(x)) ) ) \models  \psi 
\text{ // By Lemma \ref{lma: Substitution for assignment}}\\
& \text{ and }
(s' [x' \mapsto v'] , s'(x) \mapsto h'(s'(x)) ) \models  x \mapsto -  \\
\Leftrightarrow \ &
\exists v' . (s' [x \mapsto v'] , h' - ( s'(x) \mapsto h'(s'(x)) ) ) \models  \psi 
\text{ // Since $x'$ is fresh}\\
& \text{ and }
(s' [x' \mapsto v'] , s'(x) \mapsto h'(s'(x)) ) \models  x \mapsto -  
\end{align*}
If we pick $ (s' [x \mapsto v'] , h' - ( s'(x) \mapsto h'(s'(x)) ) ) $ as $(s,h)$, then $( (s,h) , (s', h') ) \in \llbracket  x:=\texttt{alloc()} \rrbracket _\textit{ok}$ holds.
Because $s'(x) \notin \textsf{dom}(h)$ and $h'(s'(x)) \in \textsc{Val}$ holds.
$(s,h) \models \psi$ is straightforward by above.

The case for \textit{er} is straightforward since the postcondition of the case is $\textsf{false}$, making the triple trivially valid.

\subsubsection{Case {$\normalfont \textsc{Alloc2}$}}   
\begin{align*}
    & \models [   \psi * y \not\mapsto ] \ x:=\texttt{alloc()} \ 
[\textit{ok}:  
\exists x'.  x \mapsto - *  x \approx y * \psi [x:=x'] ]  \\
\Leftrightarrow \ & 
\forall (s',h') \models \exists x'.  x \mapsto - *  x \approx y * \psi [x:=x']  , 
\exists (s,h) \models  \psi * y \not\mapsto . 
( (s,h) , (s', h') ) \in \llbracket  x:=\texttt{alloc()} \rrbracket _\textit{ok}  
\end{align*}
We show the triple is valid.
Let us say that $(s',h')$ is any state satisfying $\exists x'.  x \mapsto - *  x \approx y * \psi [x:=x'] $.
Then the following holds.
\begin{align*}
&  (s',h') \models \exists x'.  x \mapsto - *  x \approx y * \psi [x:=x']   \\
\Leftrightarrow \ &
\exists v' . (s' [x' \mapsto v'] ,h') \models   x \mapsto - *  x \approx y * \psi [x:=x']  \\
\Rightarrow \ &
\exists v' . (s' [x' \mapsto v'] ,h' [s' [x' \mapsto v'](x) \mapsto \bot] ) \models  x \not\mapsto *  x \approx y * \psi [x:=x'] \\
\Leftrightarrow \ &
\exists v' . (s' [x' \mapsto v'] ,h' [s' (x) \mapsto \bot] ) \models 
y \not\mapsto *  x \approx y * \psi [x:=x'] \\
& \text{// By $ x \approx y $, and $s' [x' \mapsto v'](x)  = s'(x)$ since $x'$ is fresh.} \\
\Rightarrow \ &
\exists v' . (s' [x' \mapsto v'] ,h' [s' (x) \mapsto \bot] ) \models 
y \not\mapsto *  \psi [x:=x'] 
\text{ and }
s' [x' \mapsto v'] (x) =  s' [x' \mapsto v'] (y)  \\
\Leftrightarrow \ &
\exists v' . (s'  [x \mapsto v'] ,h' [s' (x) \mapsto \bot] ) \models 
y \not\mapsto *  \psi 
\text{ and }
s'  (x) =  s'  (y)  \\
& \text{// Lemma \ref{lma: Substitution for assignment} and $x'$ is fresh} 
\end{align*}
If we pick $ (s'  [x \mapsto v'] ,h' [s' (x) \mapsto \bot] )  $ as $(s,h)$, then $( (s,h) , (s', h') ) \in \llbracket  x:=\texttt{alloc()} \rrbracket _\textit{ok}$ holds.
Because $h(s'(x)) = \bot $ and $h'(s'(x)) \in \textsc{Val}$ holds.
$(s,h) \models y \not\mapsto * \  \psi$ is straightforward by above.

The case for \textit{er} is straightforward since the postcondition of the case is $\textsf{false}$, making the triple trivially valid.

\subsubsection{Case {$\normalfont \textsc{Free}$} }   
\begin{align*}
    & \models [ \psi ] \ \texttt{free($x$)} \ 
[\textit{ok}:  \psi' *  y \not\mapsto  ]  \\
\Leftrightarrow \ & 
\forall (s',h') \models \psi' *  y \not\mapsto   , 
\exists (s,h) \models  \psi. 
( (s,h) , (s', h') ) \in \llbracket  \texttt{free($x$)} \rrbracket _\textit{ok} 
\end{align*}
We show the triple is valid.
Let us say that $(s',h')$ is any state satisfying $\psi' *  y \not\mapsto$.
Then the following holds.
\begin{align*}
&  (s',h') \models \psi' *  y \not\mapsto  \\
\Rightarrow \ & 
(s',h' [s'(x) \mapsto s'(t) ] ) \models \psi' *  y \mapsto  t \\
\Leftrightarrow \ & 
(s',h' [s'(x) \mapsto s'(t) ] ) \models \psi \ \text{ // By the assumption }
\end{align*}
We pick $ (s',h' [s'(x) \mapsto s'(t) ] )  $ as $(s,h)$. 
Then, $( (s,h) , (s', h') ) \in \llbracket  \texttt{free($x$)} \rrbracket _\textit{ok} $ holds, since $s(x) = s'(x) \in \textsf{dom}_+ (h)$.

The case for \textit{er} is straightforward since the postcondition of the case is $\textsf{false}$, making the triple trivially valid.

\subsubsection{Case {$\normalfont \textsc{FreeEr}$}} 
\begin{align*}
    & \models [ \psi ] \ \texttt{free($x$)} \ 
[\textit{er}:  \psi  ]  \\
\Leftrightarrow \ & 
\forall (s',h') \models \psi  , 
\exists (s,h) \models  \psi. 
( (s,h) , (s', h') ) \in \llbracket  \texttt{free($x$)} \rrbracket _\textit{er} 
\end{align*}

We show that this triple is valid.  
Let $(s',h')$ be an arbitrary state such that $(s',h')\models\psi$.  
By the premise, we know
\[
\psi \in \textsf{CF}_{\text{sh}}(\textsf{fv}(\psi)\cup\{x\})
\quad\text{and}\quad
\forall\,v\in\textsf{Aliases}(x,\psi).\;
(v \mapsto t) \notin \psi.
\]

We pick $(s',h')$ as $(s,h)$.
Then $(s,h)\models\psi$ by assumption.
If \( s(x) \in \textsf{dom}_{+}(h) \), there is a variable 
\( w \in \textsf{Aliases}(x, \psi) \) such that \( w \mapsto t \in \psi \)
by Lemma \ref{lma:canonical_implies_allocated_alias}.  
This contradicts the assumption.  
Hence, \( s(x) \notin \textsf{dom}_{+}(h) \).

Therefore,
\(
((s,h),(s',h')) \in \llbracket\texttt{free}(x)\rrbracket_\textit{er},
\)
and the triple is valid.

As in the previous cases, the \emph{ok} branch is trivial due to the
postcondition being \textsf{false}.

\subsubsection{Case {$\normalfont \textsc{Load}$}}   
\begin{align*}
    & \models [ \psi ] \ x := [y] \ 
[\textit{ok}:  \exists x' .  \psi [x:=x'] * x \approx (t [x:=x'] ) ]  \\
\Leftrightarrow \ & 
\forall (s',h') \models
\exists x' .  \psi [x:=x'] * x \approx (t [x:=x'] )  , 
\exists (s,h) \models  \psi. 
( (s,h) , (s', h') ) \in \llbracket  x := [y] \rrbracket _\textit{ok} 
\end{align*}
We show that this triple is valid.
Let us say that $(s', h') $ is any state satisfying $ \exists x' .  \psi [x:=x'] * x \approx (t [x:=x'] ) $.
Then the following holds.
\begin{align*}
 & (s',h') \models
\exists x' .  \psi [x:=x'] * x \approx (t [x:=x'] ) \\
\Leftrightarrow \ & 
\exists v' . (s' [x' \mapsto v'] ,h') \models
 \psi [x:=x'] * x \approx (t [x:=x'] ) \\
\Rightarrow \ & 
\exists v'  . (s' [x' \mapsto v'] ,h') \models
 \psi [x:=x'] \text{ and } 
 s' [x' \mapsto v'] ( x)  = s' [x' \mapsto v'] (t [x:=x'] ) \\
\Leftrightarrow \ & 
\exists v'  . (s' [x' \mapsto v'] [x \mapsto v'] ,h') \models \psi  \text{ and } 
 s' [x' \mapsto v'] ( x)  = s' [x' \mapsto v'] [x \mapsto v'] (t  )   \text{ // By Lemma \ref{lma: Substitution for assignment}}\\
\Leftrightarrow \ & 
\exists v'  . (s' [x \mapsto v'] ,h') \models \psi  \text{ and } 
 s'  ( x)  = s'  [x \mapsto v'] (t  ) \ \ \  \cdots \ \text{($\spadesuit$)} \text{ // Since $x'$ is fresh}
\end{align*}
It is enough to show that if we pick $(s,h)$ as $(s'[x \mapsto v'], h')$, then 
$((s,h), (s',h')) \in \llbracket  x := [y] \rrbracket _\textit{ok} $ holds.
By $(s,h) \models \psi$, $h(s(y)) = h(s(z)) = s(t) = s'[x \mapsto v'] (t)$ holds.
Consequently, by $s'  ( x)  = s'  [x \mapsto v'] (t  )$ in ($\spadesuit$),
$s[x \mapsto h(s(y))] = s'[x \mapsto v'] [x \mapsto s'[x \mapsto v'] (t) ]
 = s'[x \mapsto s'(x)] = s' $ holds. 
Therefore, $((s,h), (s',h')) \in \llbracket  x := [y] \rrbracket _\textit{ok} $.

The case for \textit{er} is straightforward since the postcondition of the case is $\textsf{false}$, making the triple trivially valid.

\subsubsection{Case {$\normalfont\textsc{LoadEr}$}}

Similar to the case of $\textsc{FreeEr}$.

\subsubsection{Case {$\normalfont \textsc{Store}$}}  
\begin{align*}
    & \models [ \psi ] \ [x] := t \ 
[\textit{ok}:  \psi'  * z \mapsto t  ]  \\
\Leftrightarrow \ & 
\forall (s',h') \models \psi'  * z \mapsto t    , 
\exists (s,h) \models  \psi. 
( (s,h) , (s', h') ) \in \llbracket  [x] := t \rrbracket _\textit{ok} 
\end{align*}
We show the triple is valid.
Let us say that $(s',h')$ is any state satisfying $\psi' *  z \mapsto t$.
Then the following holds.
\begin{align*}
&  (s',h') \models \psi' *  z \mapsto t  \\
\Rightarrow \ & 
(s',h' [s'(x) \mapsto s'(t') ] ) \models \psi' *  z \mapsto  t' \\
\Leftrightarrow \ & 
(s',h' [s'(x) \mapsto s'(t') ] ) \models \psi \ \text{ // By the premise}
\end{align*}
We pick $ (s',h' [s'(x) \mapsto s'(t') ] )  $ as $(s,h)$, 
then $( (s,h) , (s', h') ) \in \llbracket  [x] := t  \rrbracket _\textit{ok} $ holds.
Since $(s,h) \models \psi$, $s(x) \in \textsf{dom}_+ (h)$ holds.

The case for \textit{er} is straightforward since the postcondition of the case is $\textsf{false}$, making the triple trivially valid.

\subsubsection{Case {$\normalfont \textsc{StoreEr}$}}

Similar to the case of $\textsc{FreeEr}$.


\section{Proof of Lemma \ref{lma: quantifier is free in and out of WPO}, 
\ref{lma: from bigvee to bigcup} and \ref{lma: expressiveness of atomic command}}
\label{appendix: proof of lemma (expressiveness of atomic command)}

Lemma \ref{lma: expressiveness of atomic command} is the following.
$$ \forall \sigma' . 
\sigma' \in \text{\normalfont{WPO}}\llbracket   \psi,\mathbb{C},\epsilon\rrbracket 
\iff
\sigma' \models  \normalfont \textsf{wpo}_\text{sh} (\psi, \mathbb{C}, \epsilon)  $$
We prove this lemma by induction on $\mathbb{C}$.
In the proof, we occasionally represent $\sigma$ as $(s, h)$ and $\sigma'$ as $(s', h')$, where $s, s' \in \textsc{Store}$ and $h, h' \in \textsc{Heap}$.
Before proving this lemma, we introduce several lemmas.

\subsection{Lemmas related to Lemma \ref{lma: quantifier is free in and out of WPO}, 
\ref{lma: from bigvee to bigcup} and \ref{lma: expressiveness of atomic command}} 
\label{appendix's subsec: other lemmas related to proof of expressiveness of atomic command}

\begin{lemma}[Substitution]\label{lma: Substitution for assignment} 
    $$s,h \models  \psi [x:=t] 
    \iff s[x \mapsto s(t)], h \models  \psi $$
\end{lemma}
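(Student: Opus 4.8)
The plan is to prove the Substitution Lemma by structural induction on the quantifier-free symbolic heap $\psi$. Because symbolic heaps contain no binding constructs, the substitution $\psi[x:=t]$ is simply the textual replacement of every occurrence of the variable $x$ by the term $t$ throughout $\psi$; consequently there are no capture issues and the induction is routine. The one ingredient that makes everything go through is the term-level substitution identity: for every term $u$,
\[
  s\bigl(u[x:=t]\bigr) \;=\; s[x \mapsto s(t)](u),
\]
which I would verify first by the three cases $u \equiv x$ (both sides equal $s(t)$, and $x[x:=t] \equiv t$), $u$ a variable distinct from $x$ (both sides equal $s(u)$, and $u[x:=t] \equiv u$), and $u \equiv \texttt{null}$ (both sides equal $\textit{null}$, since $\texttt{null}[x:=t] \equiv \texttt{null}$ and stores are extended by $s(\texttt{null}) = \textit{null}$).

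With this identity in hand, the base cases are immediate. For $\psi = \emp$ the substitution is the identity and both sides of the claimed equivalence assert $\textsf{dom}(h) = \emptyset$. For the remaining atoms — $u \mapsto u'$, $u \not\mapsto$, $u \approx u'$, $u \not\approx u'$ — I would unfold the assertion semantics on each side and observe that the store enters only through the values it assigns to the terms occurring in the atom, while the heap-shape side conditions ($\textsf{dom}(h)=\{\cdot\}$, $h(\cdot)=\cdot$ or $\bot$, $\textsf{dom}(h)=\emptyset$) are independent of which store is used. Rewriting each quantity $s((\cdot)[x:=t])$ to $s[x \mapsto s(t)](\cdot)$ via the term-level identity then matches the two sides verbatim.

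For the only inductive case, $\psi = \psi_1 * \psi_2$, note that $\psi[x:=t] = \psi_1[x:=t] * \psi_2[x:=t]$. Unfolding $*$ on the left-hand side yields the existence of a splitting $h = h_1 \circ h_2$ with $(s,h_1) \models \psi_1[x:=t]$ and $(s,h_2) \models \psi_2[x:=t]$; applying the induction hypothesis to each conjunct replaces these by $(s[x\mapsto s(t)],h_1) \models \psi_1$ and $(s[x\mapsto s(t)],h_2) \models \psi_2$, which is precisely the unfolding of $(s[x\mapsto s(t)],h) \models \psi_1 * \psi_2$; every step is an equivalence, so the chain runs in both directions. There is no genuinely difficult step here: the only things to watch are the bookkeeping of the term-level substitution identity and the observation that the separating conjunction quantifies only over heap splittings, not over stores, so the induction hypothesis applies uniformly to both conjuncts.
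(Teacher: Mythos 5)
Your proof is correct. The paper states this substitution lemma without giving any proof at all, so there is nothing to compare against; your structural induction on $\psi$, resting on the term-level identity $s(u[x:=t]) = s[x \mapsto s(t)](u)$ and the observation that the semantics of $*$ quantifies only over heap splittings (so the induction hypothesis applies to both conjuncts under the same store), is exactly the standard argument one would supply, and every step is a genuine equivalence as you note. One small remark: the paper occasionally invokes this lemma on formulas containing an existential quantifier (e.g.\ in the expressiveness proof for $x:=t$, where it is applied to $\exists x'.\,\psi[x:=x'] * x \approx t[x:=x']$); covering those uses would require one additional inductive case for $\exists$ with the usual freshness side condition ($x' \not\equiv x$ and $x' \notin \textsf{fv}(t)$), but as the statement is phrased over quantifier-free symbolic heaps $\psi$, your proof establishes precisely what is claimed.
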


\begin{lemma}\label{lma: quantified variables do not affect to operational semantics}
If $\normalfont x \notin \textsf{fv}(\mathbb{C})$, the following holds.
$$
( (s,h) , (s',h')  ) \in \llbracket   \mathbb{C} \rrbracket _\epsilon \Longrightarrow 
\exists v \in \textsc{Val} . 
( (s [x \mapsto v] ,h) , (s'[x \mapsto v],h')  ) \in \llbracket   \mathbb{C} \rrbracket _\epsilon
$$
\end{lemma}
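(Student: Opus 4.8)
The plan is to argue by induction on the structure of $\mathbb{C}$, unfolding the denotational semantics of Definition~\ref{def: Denotational semantics of ISL} case by case. In fact I would prove the slightly stronger claim in which the value is universally quantified: for \emph{every} $v \in \textsc{Val}$, if $((s,h),(s',h')) \in \llbracket\mathbb{C}\rrbracket_\epsilon$ then $((s[x\mapsto v],h),(s'[x\mapsto v],h')) \in \llbracket\mathbb{C}\rrbracket_\epsilon$; the lemma then follows by instantiating $v$ with any value. This strengthening is genuinely needed, because in the sequential-composition case one has to feed \emph{the same} $v$ to both halves of the composition, which the bare existential form would not supply. Throughout I would use two elementary facts: store updates at distinct variables commute, $s[x\mapsto v][y\mapsto w] = s[y\mapsto w][x\mapsto v]$ whenever $y \neq x$; and since the free variables of every term $t$ and pure formula $B$ occurring in $\mathbb{C}$ are contained in $\textsf{fv}(\mathbb{C})$, and hence differ from $x$, we have $s(t) = s[x\mapsto v](t)$ and $s \models B \iff s[x\mapsto v] \models B$.

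First I would dispatch the base and atomic cases. For $\texttt{skip}$ and $\texttt{error}$ the relation is the identity or $\emptyset$; for $\texttt{assume}(B)$ the state is preserved and the truth of $B$ is insensitive to $x$. For $y := t$, $y := \texttt{*}$, $y := \texttt{alloc}()$, $\texttt{free}(y)$, $y := [z]$, and $[y] := t$, the hypothesis $x \notin \textsf{fv}(\mathbb{C})$ gives $x \notin \{y,z\}\cup\textsf{fv}(t)$, so the store is altered (if at all) only at a variable other than $x$, the heap is read or written only at locations determined by the $s$-values of variables other than $x$, and any nondeterministic choice (the value supplied by $\texttt{*}$, the fresh location chosen by $\texttt{alloc}$) can be taken to be the same on both sides; the conclusion then follows by commuting the $[x\mapsto v]$-update past the update performed by the command. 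Next, $\mathbb{C}_1 + \mathbb{C}_2$ is a case split on the union followed by the induction hypothesis; $\mathbb{C}_1 ; \mathbb{C}_2$ threads the intermediate state through the induction hypotheses for $\mathbb{C}_1$ and $\mathbb{C}_2$, treating the two sub-cases of the $\textit{er}$ clause separately; and $\mathbb{C}^\star = \bigcup_{m} \llbracket\mathbb{C}^m\rrbracket_\epsilon$ reduces to the $\texttt{skip}$ and sequential-composition cases by a secondary induction on $m$, using $\textsf{fv}(\mathbb{C}^m) \subseteq \textsf{fv}(\mathbb{C})$.

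The hardest case will be $\texttt{local}\ y\ \texttt{in}\ \mathbb{C}'$, where the bound variable $y$ may coincide with $x$. Its semantics relates $(s,h)$ to $(s',h')$ exactly when there exist $w,w'$ with $((s[y\mapsto w],h),(s'[y\mapsto w'],h')) \in \llbracket\mathbb{C}'\rrbracket_\epsilon$ and $s(y) = s'(y)$, whereas $x \notin \textsf{fv}(\mathbb{C}')\setminus\{y\}$ only yields $x \notin \textsf{fv}(\mathbb{C}')$ in the case $x \neq y$. If $y = x$, the outer binding $[x\mapsto v]$ is immediately overwritten by the inner bindings $[y\mapsto w]$ and $[y\mapsto w']$, and the side condition degenerates to $v = v$, so the required triple is literally the hypothesis and no appeal to the induction hypothesis is needed. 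If $y \neq x$, then $x \notin \textsf{fv}(\mathbb{C}')$ and the induction hypothesis for $\mathbb{C}'$ gives $((s[y\mapsto w][x\mapsto v],h),(s'[y\mapsto w'][x\mapsto v],h')) \in \llbracket\mathbb{C}'\rrbracket_\epsilon$; commuting the two updates and observing $s[x\mapsto v](y) = s(y) = s'(y) = s'[x\mapsto v](y)$ re-establishes both the membership and the side condition, which is all that is required. The remaining work — commuting chains of store updates and bookkeeping the $\textit{ok}/\textit{er}$ sub-cases — is routine.
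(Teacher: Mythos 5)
Your proposal is correct and follows essentially the same route as the paper, which proves this lemma simply ``by induction on $\mathbb{C}$ and the denotational semantics'' without spelling out the cases. Your additional observations --- strengthening the existential over $v$ to a universal so that sequential composition can thread the same value through both halves, and splitting the \texttt{local} case on whether the bound variable coincides with $x$ --- are exactly the details such an induction needs, and they all check out against Definition~\ref{def: Denotational semantics of ISL}.
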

\begin{proof}
    By induction on $\mathbb{C}$ and the  operational semantics in Definition \ref{def: Denotational semantics of ISL}.
\end{proof}

\begin{lemma}\label{lma: quantified variables have same store value}
If $\normalfont x \notin \textsf{fv}(\mathbb{C})$, the following holds.
$$
( (s,h) , (s',h')  ) \in \llbracket   \mathbb{C} \rrbracket _\epsilon \Longrightarrow 
s(x) = s'(x)
$$
\end{lemma}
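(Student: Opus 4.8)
The plan is to prove the statement by induction on the structure of $\mathbb{C}$, inspecting each clause of the denotational semantics in Definition~\ref{def: Denotational semantics of ISL}. To avoid a clash with the bound metavariables appearing in the command grammar, I would fix a variable $z$ with $z \notin \textsf{fv}(\mathbb{C})$ and prove that $s(z) = s'(z)$ whenever $((s,h),(s',h')) \in \llbracket \mathbb{C} \rrbracket_\epsilon$. A useful preliminary observation, immediate from Definition~\ref{def: formal definition of mod(C)} and the grammar, is that $\textsf{mod}(\mathbb{C}) \subseteq \textsf{fv}(\mathbb{C})$, so in particular $z \notin \textsf{mod}(\mathbb{C})$.

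For the atomic commands the argument is direct. The commands \texttt{skip}, \texttt{error}, \texttt{assume($B$)}, \texttt{free($w$)} and $[w]:=t$ never modify the store, so $s = s'$; each of $w:=t$, $w:=\texttt{*}$, $w:=\texttt{alloc()}$ and $w:=[y]$ updates the store only at $w$, and since $w \in \textsf{fv}(\mathbb{C})$ we have $z \neq w$, hence the update leaves the value of $z$ untouched. For $\mathbb{C}_1;\mathbb{C}_2$ and $\mathbb{C}_1+\mathbb{C}_2$ we have $\textsf{fv}(\mathbb{C}_i) \subseteq \textsf{fv}(\mathbb{C})$, so the induction hypothesis applies to both subcommands: for sequencing I chain the equalities through the intermediate store (the \textit{er} clause being a union of an \textit{er} run of $\mathbb{C}_1$ and an \textit{ok}-then-\textit{er} composition, both handled the same way), and for choice I take whichever disjunct witnesses the membership. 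The loop $\mathbb{C}_0^\star = \bigcup_{m\in\mathbb{N}} \llbracket \mathbb{C}_0^m \rrbracket_\epsilon$ is treated by an inner induction on $m$: the base $m=0$ is \texttt{skip}, and the step reduces to the sequencing case for $\mathbb{C}_0;\mathbb{C}_0^{m-1}$, using $\textsf{fv}(\mathbb{C}_0^m) \subseteq \textsf{fv}(\mathbb{C}_0^\star)$.

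The one case that needs care is \texttt{local $w$ in $\mathbb{C}_0$}, because $\textsf{fv}(\texttt{local } w \texttt{ in } \mathbb{C}_0) = \textsf{fv}(\mathbb{C}_0)\setminus\{w\}$, so $z \notin \textsf{fv}(\mathbb{C}_0)$ may fail — but only when $z = w$. If $z = w$, the semantic clause explicitly stipulates $s(w) = s'(w)$, which is exactly what we want. If $z \neq w$, then $z \notin \textsf{fv}(\mathbb{C}_0)$, so applying the induction hypothesis to the witnessing run $((s[w\mapsto v],h),(s'[w\mapsto v'],h')) \in \llbracket \mathbb{C}_0 \rrbracket_\epsilon$ yields $s[w\mapsto v](z) = s'[w\mapsto v'](z)$, and since $z \neq w$ both sides collapse to $s(z)$ and $s'(z)$ respectively. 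I expect this shadowing bookkeeping in the \texttt{local} rule to be the only real subtlety; the rest is a routine inspection of the semantic clauses, entirely parallel to the proof of Lemma~\ref{lma: quantified variables do not affect to operational semantics}.
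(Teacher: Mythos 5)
Your proposal is correct and follows exactly the route the paper takes: the paper's proof is a one-line appeal to induction on $\mathbb{C}$ and inspection of the denotational semantics, and your case analysis (including the careful treatment of shadowing in \texttt{local $x$ in $\mathbb{C}$}, where the clause's explicit side condition $s(x) = s'(x)$ handles the $z = w$ case) is precisely the elaboration of that argument.
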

\begin{proof}
    By induction on $\mathbb{C}$ and the  operational semantics in Definition \ref{def: Denotational semantics of ISL}.
\end{proof}

\subsection{Proof of Lemma \ref{lma: quantifier is free in and out of WPO}}
\label{appendix's subsec: proof of quantifier is free in and out of WPO}
Lemma \ref{lma: quantifier is free in and out of WPO} is the following.

\textit{If $x \notin \normalfont\textsf{fv}(\mathbb{C})$,
then the following holds for all 
$(s', h')$, $\psi$, $\mathbb{C}$ and $\epsilon$.}
    $$
(s', h') \in \text{\normalfont{WPO}}\llbracket \exists x. \psi,\mathbb{C},\epsilon\rrbracket  \iff
\exists v \in \textsc{Val} . 
(s' [x \mapsto v] , h') \in \text{\normalfont{WPO}}\llbracket \psi,\mathbb{C},\epsilon\rrbracket  
    $$
\begin{proof}
Firstly, we show $(s', h') \in \text{\normalfont{WPO}}\llbracket \exists x. \psi,\mathbb{C},\epsilon\rrbracket  \Longrightarrow
\exists v \in \textsc{Val} . 
(s' [x \mapsto v] , h') \in \text{\normalfont{WPO}}\llbracket \psi,\mathbb{C},\epsilon\rrbracket $.
\begin{align*}
& (s', h') \in \text{\normalfont{WPO}}\llbracket \exists x. \psi,\mathbb{C},\epsilon\rrbracket   \\
\Leftrightarrow \   &  \exists (s,h) . (s  , h) \models \exists x. \psi \land 
( (s,h) , (s',h')  ) \in \llbracket  \mathbb{C} \rrbracket _\epsilon \\
\Leftrightarrow \   &  \exists (s,h), \exists v \in \textsc{Val} . (s [x \mapsto v] , h) \models \psi \land 
( (s,h) , (s',h')  ) \in \llbracket  \mathbb{C} \rrbracket _\epsilon \\
\Rightarrow \   &  
\exists (s,h), \exists v \in \textsc{Val} . (s [x \mapsto v] , h) \models \psi \land 
( (s[x \mapsto v],h) , (s'[x \mapsto v],h')  ) \in \llbracket  \mathbb{C} \rrbracket _\epsilon 
\ \text{ // Lemma \ref{lma: quantified variables do not affect to operational semantics}}\\
\Rightarrow \   &  
\exists (s,h), \exists v \in \textsc{Val} . (s_1, h_1)  \models \psi \land 
( (s_1, h_1)  , (s'[x \mapsto v],h')  ) \in \llbracket  \mathbb{C} \rrbracket _\epsilon  \\
& \text{ // Let us say that $(s_1, h_1) = ( s[x \mapsto v],h ) $ } \\
\Rightarrow \  & 
 \exists (s,h), \exists v \in \textsc{Val} . (s  , h) \models \psi \land 
( (s,h) , (s' [x \mapsto v] , h')  ) \in \llbracket  \mathbb{C} \rrbracket _\epsilon \\
\Leftrightarrow \  & 
\exists v \in \textsc{Val} . 
(s' [x \mapsto v] , h') \in \text{\normalfont{WPO}}\llbracket \psi,\mathbb{C},\epsilon\rrbracket  
\end{align*}
Next, we show $(s', h') \in \text{\normalfont{WPO}}\llbracket \exists x. \psi,\mathbb{C},\epsilon\rrbracket  \Longleftarrow
\exists v \in \textsc{Val} . 
(s' [x \mapsto v] , h') \in \text{\normalfont{WPO}}\llbracket \psi,\mathbb{C},\epsilon\rrbracket $.
\begin{align*}
     & \exists v \in \textsc{Val} . 
(s' [x \mapsto v] , h') \in \text{\normalfont{WPO}}\llbracket \psi,\mathbb{C},\epsilon\rrbracket   \\
\Leftrightarrow \ & 
 \exists (s,h), \exists v \in \textsc{Val} . (s  , h) \models \psi \land 
( (s,h) , (s' [x \mapsto v] , h')  ) \in \llbracket  \mathbb{C} \rrbracket _\epsilon  \\ 
\Rightarrow \ & 
 \exists (s,h), \exists v \in \textsc{Val} . (s [x \mapsto s'(x)] , h) \models \psi \land 
( (s,h) , (s' [x \mapsto v] , h')  ) \in \llbracket  \mathbb{C} \rrbracket _\epsilon  \\ 
\Rightarrow \ & 
 \exists (s,h), \exists v \in \textsc{Val} . 
 (s [x \mapsto s'(x)] , h) \models \psi \land 
( (s[x \mapsto s'(x)],h) , (s' [x \mapsto v][x \mapsto s'(x)] , h')  ) \in \llbracket  \mathbb{C} \rrbracket _\epsilon  \\ 
& \text{// Lemma \ref{lma: quantified variables do not affect to operational semantics}}\\
\Leftrightarrow \ & 
 \exists (s,h), \exists v \in \textsc{Val} . 
 (s [x \mapsto s'(x)][x \mapsto s'(x)] , h) \models \psi \land 
( (s[x \mapsto s'(x)],h) , (s'  , h')  ) \in \llbracket  \mathbb{C} \rrbracket _\epsilon  \\
\Rightarrow \ & 
 \exists (s,h), \exists v \in \textsc{Val} . 
 (s_2[x \mapsto s'(x)] , h_2) \models \psi \land 
( (s_2, h_2) , (s'  , h')  ) \in \llbracket  \mathbb{C} \rrbracket _\epsilon  \\ 
& \text{// Let us say that $(s_2, h_2) = ( s [x \mapsto s'(x)] , h)$} \\
\Rightarrow \   &  \exists (s,h), \exists v \in \textsc{Val} . (s [x \mapsto v] , h) \models \psi \land 
( (s,h) , (s',h')  ) \in \llbracket  \mathbb{C} \rrbracket _\epsilon \\
\Leftrightarrow \   &  \exists (s,h) . (s  , h) \models \exists x. \psi \land 
( (s,h) , (s',h')  ) \in \llbracket  \mathbb{C} \rrbracket _\epsilon \\
\Leftrightarrow \ &  (s', h') \in \text{\normalfont{WPO}}\llbracket \exists x. \psi,\mathbb{C},\epsilon\rrbracket 
\end{align*}

\end{proof}

\subsection{Proof of Lemma \ref{lma: from bigvee to bigcup}}
\label{appendix's subsec: proof of bigvee to bigcup}
Lemma \ref{lma: from bigvee to bigcup} is the following.
$$
\text{\normalfont{WPO}}\llbracket  \bigvee_{i \in I}  P ,\mathbb{C},\epsilon \rrbracket  \iff 
\bigcup_{i \in I} \text{\normalfont{WPO}}\llbracket  P,\mathbb{C},\epsilon\rrbracket 
$$
\begin{proof}
    \begin{align*}
\text{\normalfont{WPO}}\llbracket  \bigvee_{i \in I} P ,\mathbb{C},\epsilon \rrbracket  &= 
\{ \sigma' \mid \exists \sigma . \sigma \models 
\left(\bigvee_{i \in I} P  \right) \land 
(\sigma, \sigma') \in \llbracket  \mathbb{C}  \rrbracket _\epsilon  \} \\
&= 
\{ \sigma' \mid \exists \sigma .  \bigvee_{i \in I}  \left( \sigma \models 
   P \right) \land 
(\sigma, \sigma') \in \llbracket  \mathbb{C}  \rrbracket _\epsilon  \} \\
&= 
\{ \sigma' \mid \exists \sigma . \bigvee_{i \in I} ( \sigma \models 
 P \land 
(\sigma, \sigma') \in \llbracket  \mathbb{C}  \rrbracket _\epsilon ) \}\\
&= 
\{ \sigma' \mid \bigvee_{i \in I} \exists \sigma . ( \sigma \models 
P \land 
(\sigma, \sigma') \in \llbracket  \mathbb{C}  \rrbracket _\epsilon ) \}\\
&= 
\bigcup_{i \in I} \text{\normalfont{WPO}}\llbracket  P ,\mathbb{C},\epsilon\rrbracket 
\end{align*}
\end{proof}

\subsection{Proof of Lemma \ref{lma: expressiveness of atomic command}}
\label{appendix's subsec: proof of expressiveness of atomic command}
Lemma \ref{lma: expressiveness of atomic command} is the following.

$$ \forall \sigma' . 
\sigma' \in \text{\normalfont{WPO}}\llbracket  \psi,\mathbb{C},\epsilon\rrbracket 
\iff
\sigma' \models  \normalfont \textsf{wpo}_\text{sh} (\psi, \mathbb{C}, \epsilon) $$

Now, we prove Lemma \ref{lma: expressiveness of atomic command}.

\subsubsection{Case {\normalfont\texttt{skip}}}
\begin{itemize}
\item $\epsilon = \textit{ok}$

By the definition of $\llbracket \texttt{skip}\rrbracket _\textit{ok}$,
$\text{\normalfont{WPO}}\llbracket  \psi ,\texttt{skip},\textit{ok}\rrbracket  = \{ \sigma' \mid \exists \sigma. 
\sigma \models    \psi  \land
(\sigma,\sigma') \in \llbracket \texttt{skip}\rrbracket _\textit{ok} \}
= \{ \sigma \mid \sigma \models   \psi \} $ and
$ \textsf{wpo}(\psi ,\texttt{skip},\textit{ok}) = \psi$ hold.
Consequently, Lemma \ref{lma: expressiveness of atomic command} holds in this case.

\item $\epsilon = \textit{er}$

By the definition of $\llbracket \texttt{skip}\rrbracket _\textit{er}$,
$\text{\normalfont{WPO}}\llbracket    \psi ,\texttt{skip},\textit{er}\rrbracket  =
\{ \sigma' \mid \exists \sigma. 
\sigma \models   \psi  \land
(\sigma,\sigma') \in \llbracket \texttt{skip}\rrbracket _\textit{er} \} = \emptyset$ and
$ \textsf{wpo}( \psi ,\texttt{skip},\textit{er})  = \textsf{false}$ hold.
Hence, this lemma holds straightforwardly in this case.
\end{itemize}

\subsubsection{Case {\normalfont$\normalfont x := t$}}
\begin{itemize}
\item $\epsilon = \textit{ok}$

Firstly, we prove $\forall\sigma'. \sigma' \in \normalfont
\text{\normalfont{WPO}}\llbracket    \psi , x := t , \textit{ok}\rrbracket  \Longrightarrow 
\normalfont \sigma' \models  
\textsf{wpo}_\text{sh}( \psi , x := t , \textit{ok})$.
By the definition of $\llbracket x := t\rrbracket _\textit{ok}$,
$\text{\normalfont{WPO}}\llbracket   \psi,x := t,\textit{ok}\rrbracket  = 
\{ \sigma' \mid \exists \sigma. 
\sigma \models  \psi \land
(\sigma,\sigma') \in \llbracket x := t\rrbracket _\textit{ok} \} = 
\{ (s[x \mapsto s(t)], h ) \mid  (s,h) \models  \psi \}
$ holds.
By the definition of $\textsf{wpo}_\text{sh}$,
$  \textsf{wpo}_\text{sh}( \psi , x := t , \textit{ok}) = 
\exists  x'.  \psi [x := x']  * x \approx ( t  [x := x'] )$ holds.
It is enough to show that 
$(s,h) \models  \psi \Longrightarrow ( s[x \mapsto s(t) ], h ) \models 
\exists  x'. \psi [x := x'] * x \approx (t [x := x'])$ holds.
We show the following to prove our statement.
\begin{align*}
& (s[x \mapsto s(t)], h ) \models 
\exists   x'.  \psi [x := x']  * x \approx (t  [x := x'] ) & \\
\Leftrightarrow \ & 
(s,h) \models 
( \exists   x'.  \psi [x := x']  * x \approx (t  [x := x'] )) [x := t]
& \text{Lemma \ref{lma: Substitution for assignment}} \\
\Leftrightarrow \ & 
(s,h) \models 
\exists  x'.  \psi [x := x'] [x := t]  * x [x := t] \approx (t  [x := x'] [x := t])
& \\
\Leftrightarrow \ & 
(s,h) \models 
\exists   x'. \psi [x := x']  * t \approx (t  [x := x'] )
& \\
\Leftrightarrow \ & 
\exists v \in \textsc{Val} .  (s [x' \mapsto v] ,h) \models 
\psi [x := x']  * t \approx (t  [x := x'] )
& \\
\Leftarrow \ & 
  (s [x' \mapsto s(x)] ,h) \models \psi [x := x']  * t \approx (t  [x := x'] )
& \\
& \text{// picking $s(x)$ as $v$} & \\
\Leftrightarrow \ & 
  (s  ,h) \models 
 \psi [x := x'] [x' := x] * t [x' := x] \approx (t  [x := x'][x' := x] )
& \text{Lemma \ref{lma: Substitution for assignment}}\\
\Leftrightarrow \ & 
  (s  ,h) \models 
\psi  * t  \approx t  \  \text{ // Since $x' \notin \textsf{fv}(t) \cup \textsf{fv}(\psi)$}
& \\
\Leftrightarrow \ & (s,h) \models  \psi  \ \text{ // this is the premise} &
\end{align*}

Next, we prove the reverse direction: 
$\forall\sigma'. \sigma' \in \normalfont
\text{\normalfont{WPO}}\llbracket \psi , x := t , \textit{ok}\rrbracket  \Longleftarrow 
\normalfont \sigma' \models \textsf{wpo}_\text{sh}(
\psi , x := t , \textit{ok})$.
$(s',h') \in \text{WPO} \llbracket  \psi, x := t, \textit{ok}  \rrbracket $ can be described as
$\exists (s,h) \models \psi . (s', h') = (s [x \mapsto s(t)] , h)$.
Consequently, it is enough to show that 
$(s', h') \models \exists x'.  \psi [x := x']  * x \approx (t  [x := x'] ) \Longrightarrow
\exists (s,h) \models \psi . (s', h') = (s [x \mapsto s(t)] , h)$ holds.
We show the following to prove our statement.
\begin{align*}
& (s', h') \models \exists x'.  \psi [x := x']  * x \approx (t  [x := x'] ) & \\
\Leftrightarrow \ &  
\exists v \in \textsc{Val} . (s' [x' \mapsto v] , h') \models 
 \psi [x := x']  *  x \approx (t  [x := x'] ) & \\
\Rightarrow \ &  
\exists v \in \textsc{Val} . 
(s' [x' \mapsto v] , h') \models 
 \psi [x := x']  \text{ and }
 s' [x' \mapsto v](x) = s' [x' \mapsto v] (t  [x := x']) & \\
\Leftrightarrow \ &  
\exists v \in \textsc{Val} . 
(s' [x' \mapsto v] [x \mapsto s' [x' \mapsto v] (x') ], h') \models 
 \psi  \text{ and } \ \text{ // Lemma \ref{lma: Substitution for assignment}}\\
 & \  s' [x' \mapsto v](x) = s' [x' \mapsto v] (t  [x := x'])  \\
\Leftrightarrow \ &  
\exists v \in \textsc{Val} . 
(s' [x \mapsto v ], h') \models 
 \psi  \text{ and } \\
 & \ s' (x) = s' [x' \mapsto v] [x \mapsto s' [x' \mapsto v](x')] (t ) \ 
 \text{ // Lemma \ref{lma: Substitution for assignment}} \\
\Leftrightarrow \ &  
\exists v \in \textsc{Val} . 
(s' [x \mapsto v ], h') \models 
 \psi  \text{ and }
 s' (x) = s' [x \mapsto v] (t ) \ \text{ // Since $x'$ is fresh} & \\
\Rightarrow \ &  
\exists v \in \textsc{Val} . 
(s_\psi , h_\psi ) \models 
 \psi  \text{ and }
 s' (x) = s_\psi (t )  & \\
& \text{// Let us say  $s_\psi = s' [x \mapsto v]$   and  $h_\psi = h'$} & \\
\Rightarrow \ &  
\exists (s, h) \models 
 \psi  .
 (s', h') = (s [x \mapsto s(t)] , h)  & 
\end{align*}

\item $\epsilon = \textit{er}$

By the definition of $\llbracket  x := t\rrbracket _\textit{er}$,
$\text{\normalfont{WPO}}\llbracket \psi, x := t,\textit{er}\rrbracket  = \{ \sigma' \mid \exists \sigma. 
\sigma \models \psi \land
(\sigma,\sigma') \in \llbracket   x := t \rrbracket _\textit{er} \} = \emptyset$ holds.
Since $\textsf{wpo}_\text{sh}(\psi, x := t,\textit{er}) = \textsf{false}$,
Lemma \ref{lma: expressiveness of atomic command} holds straightforwardly in this case.
\end{itemize}

\subsubsection{Case $x := \texttt{*}$}  
\begin{itemize}
\item $\epsilon = \textit{ok}$

Firstly, we prove $\forall\sigma'. \sigma' \in \normalfont
\text{\normalfont{WPO}}\llbracket \psi , x := \texttt{*} , \textit{ok}\rrbracket  \Longrightarrow 
\normalfont \sigma' \models 
\textsf{wpo}_\text{sh}(\psi , x := \texttt{*} , \textit{ok})$.
By the definition of $\llbracket x := \texttt{*}\rrbracket _\textit{ok}$,
$\text{\normalfont{WPO}}\llbracket \psi,x := \texttt{*},\textit{ok}\rrbracket  = 
\{ \sigma' \mid \exists \sigma. 
\sigma \models \psi \land
(\sigma,\sigma') \in \llbracket x := \texttt{*}\rrbracket _\textit{ok} \} = 
\{ (s[x \mapsto v], h ) \mid  (s,h) \models \psi \land v \in \textsc{Val} \}
$ holds.
By the definition of $\textsf{wpo}_\text{sh}$,
$\textsf{wpo}_\text{sh}(\psi , x := \texttt{*} , \textit{ok}) = 
\exists x'.  \psi [x := x']  $ holds.
It is enough to show that if
$(s,h) \models \psi  \land v \in \textsc{Val} $, then $
 ( s[x \mapsto v ], h ) \models 
\exists x'. \psi [x := x'] $ holds.
We show the following to prove our statement.
\begin{align*}
& (s[x \mapsto v], h ) \models 
\exists x'.  \psi [x := x']   \\
\Leftrightarrow \ & 
\exists w \in \textsc{Val} .
(s[x \mapsto v] [x' \mapsto w] , h ) \models   \psi [x := x'] 
 \\
\Leftrightarrow \ & 
\exists w \in \textsc{Val} .
(s  [x' \mapsto w] , h ) \models   \psi [x := x'] \ \text{ // Since $\psi [x := x']$ is $x$-free} \\
\Leftarrow \ & 
(s  [x' \mapsto s(x)] , h ) \models   \psi [x := x'] \ 
\text{ // Pick $s(x)$ as $w$} \\
\Leftrightarrow \ & 
(s , h ) \models   \psi [x := x'] [x' := x] \ 
\text{ // Lemma \ref{lma: Substitution for assignment}} \\
\Leftrightarrow \ & 
(s , h ) \models   \psi  \\
\Leftarrow \ & 
(s , h ) \models   \psi \land v \in \textsc{Val}
\end{align*}

Next, we prove the reverse direction: 
$\forall\sigma'. \sigma' \in \normalfont
\text{\normalfont{WPO}}\llbracket \psi , x := \texttt{*} , \textit{ok}\rrbracket  \Longleftarrow 
\normalfont \sigma' \models \textsf{wpo}_\text{sh}(
\psi , x := \texttt{*} , \textit{ok})$.
$(s',h') \in \text{WPO} \llbracket  \psi, x := \texttt{*}, \textit{ok}  \rrbracket $ can be described as
$\exists (s,h) \models \psi , \exists v \in \textsc{Val} .
(s', h') = (s [x \mapsto v] , h)$.
Consequently, it is enough to show that 
$(s', h') \models \exists x'.  \psi [x := x'] \Longrightarrow
\exists (s,h) \models \psi , \exists v \in \textsc{Val}  . (s', h') = (s [x \mapsto v] , h)$ holds.
We show the following to prove our statement.
\begin{align*}
& (s', h') \models \exists x'.  \psi [x := x']   \\
\Leftrightarrow \ &
\exists w \in \textsc{Val} . 
(s' [x' \mapsto w] , h') \models   \psi [x := x']   \\
\Leftrightarrow \ &
\exists w \in \textsc{Val} . 
(s' [x' \mapsto w] [x \mapsto s' [x' \mapsto w](x')] , h')
\models   \psi \ \text{ // Lemma \ref{lma: Substitution for assignment}}  \\
\Leftrightarrow \ &
\exists w \in \textsc{Val} . 
(s' [x' \mapsto w] [x \mapsto w] , h')
\models   \psi  \\
\Leftrightarrow \ &
\exists w \in \textsc{Val} . 
(s' [x \mapsto w] , h')
\models   \psi  \ \text{ // Since $x'$ is fresh} \\
\Rightarrow \ &
\exists w \in \textsc{Val} . 
(s_\psi , h_\psi)
\models   \psi \\
& \text{ // Let us say $s_\psi = s' [x \mapsto w]$ and $h_\psi = h'$} \\
& \text{ // It is evident that $s_\psi [x \mapsto s'(x) ] = s'$ and $s'(x) \in \textsc{Val}$} \\
\Rightarrow \ & 
\exists (s,h) \models \psi , v \in \textsc{Val}  . (s', h') = (s [x \mapsto v] , h) 
\end{align*}

    \item $\epsilon = \text{\textit{er}}$
    
    Similar to \textbf{Case} $x := t$.
\end{itemize}

\subsubsection{Case {\normalfont\texttt{{assume($ B$)}}}}
\begin{itemize}
\item $\epsilon = \textit{ok}$

By the definition of $\llbracket {\normalfont\texttt{{assume($\normalfont B$)}}}\rrbracket _\textit{ok}$,
$\text{\normalfont{WPO}}\llbracket \psi,{\normalfont\texttt{{assume($\normalfont B$)}}},\textit{ok}\rrbracket  = \{ \sigma' \mid \exists \sigma. 
\sigma \models \psi \land
(\sigma,\sigma') \in \llbracket {\normalfont\texttt{{assume($\normalfont B$)}}}\rrbracket _\textit{ok} \} = 
\{ \sigma' \mid \exists \sigma. 
\sigma \models \psi \land
\sigma = \sigma' \land
\sigma = (s,h) \land
s(B) \neq 0\} = 
\{ \sigma \mid \sigma \models \psi \textit{ and } \sigma \models B \} = 
\{ \sigma \mid \sigma \models \psi * B\}$.
$\sigma \models \psi * B \iff \sigma \models \psi \textit{ and } \sigma \models B $ holds since $B$ is not heap formula.
Additionally, by the defintion of $\textsf{wpo}_\text{sh}$,
$\textsf{wpo}_\text{sh}(\psi,{\normalfont\texttt{{assume($\normalfont B$)}}},\textit{ok}) = \psi * B$ holds. 
Consequently, Lemma \ref{lma: expressiveness of atomic command} holds in this case.

\item $\epsilon = \textit{er}$

By the definition of $\llbracket \texttt{{assume($\normalfont B$)}}\rrbracket _\textit{er}$, $\text{\normalfont{WPO}}\llbracket \psi ,\texttt{{assume($\normalfont B$)}},\textit{er}\rrbracket  = \{ \sigma' \mid \exists \sigma. 
\sigma \models \psi  \land
(\sigma,\sigma') \in \llbracket \texttt{{assume($\normalfont B$)}}\rrbracket _\textit{er} \} 
= \emptyset$ holds.
Since
$\textsf{wpo}(\psi ,\texttt{{assume($\normalfont B$)}},\textit{er}) = \textsf{false}$,
Lemma \ref{lma: expressiveness of atomic command} holds straightforwardly in this case.
\end{itemize}

\subsubsection{Case $x := \texttt{alloc()}$} 
\begin{itemize}
\item $\epsilon = \textit{ok}$

By the definition of $\llbracket  x := \texttt{alloc()}  \rrbracket _\textit{ok}$,    
$\text{\normalfont{WPO}}\llbracket \psi, x := \texttt{alloc()} ,\textit{ok}\rrbracket  
= \{ \sigma' \mid \exists \sigma. 
\sigma \models \psi \land
(\sigma,\sigma') \in \llbracket  x := \texttt{alloc()}  \rrbracket _\textit{ok} \} = 
\{ (s [x \mapsto l], h [l \mapsto v] ) \mid 
(s,h) \models \psi \land (l \notin \textsf{dom}(h) \lor h(l) = \bot )
\land v \in \textsc{Val}
\}$.

By the definition of $\textsf{wpo}_\text{sh}$,
$\textsf{wpo}_\text{sh} (\psi,  x := \texttt{alloc()} ,\textit{ok}) = 
\exists x'.  (  \psi[x := x'] * x \mapsto - ) \lor 
\bigvee_{j=1}^n ( 
\left(\bigast_{i=1}^n y_i \not\mapsto \right) 
[y_j \not\mapsto := y_j \mapsto -  ] *
x \approx y_j  * \psi' [x := x'] )$, where 
$\psi = \left(\bigast_{i=1}^n y_i \not\mapsto \right) * \psi'$ (here, $\psi'$ does not have atoms with $\not\mapsto$).

Firstly, we prove $\forall\sigma'. \sigma' \in \normalfont
\text{\normalfont{WPO}}\llbracket \psi , x := \texttt{alloc()} , \textit{ok}\rrbracket  
\Longrightarrow 
\normalfont \sigma' \models 
\textsf{wpo}_\text{sh}(\psi , x := \texttt{alloc()} , \textit{ok})$.

(1) We show that if $  (s,h) \models \psi  \text{ and }
l \notin \textsf{dom}(h) \text{ and } v \in \textsc{Val}
$, then
$ (s [x \mapsto l], h [l \mapsto v] ) \models 
\exists x'.  \psi[x := x'] * x \mapsto - $ holds.

By the following, this case is true.
\begin{align*}
& (s [x \mapsto l], h [l \mapsto v] ) \models 
\exists x'.  \psi[x := x'] * x \mapsto - \\ 
\Leftrightarrow \ & 
\exists w \in \textsc{Val} . 
(s [x \mapsto l] [x' \mapsto w] , h [l \mapsto v] ) \models 
 \psi[x := x'] * x \mapsto - \\
\Leftarrow \ & 
(s [x \mapsto l] [x' \mapsto s(x)] , h [l \mapsto v] ) \models 
 \psi[x := x'] * x \mapsto - \\
& \text{// Pick $s(x)$ as $w$} \\
\Leftarrow \ & 
(s ' , h \circ (l \mapsto v) ) \models 
 \psi[x := x'] * x \mapsto - \\
& \text{// Let us say $s' = s [x \mapsto l] [x' \mapsto s(x)] $ and $l \notin \textsf{dom}(h)$} \\
\Leftrightarrow \ & 
(s ' , h  ) \models 
 \psi[x := x'] 
\text{ and }
(s ' ,  (l \mapsto v) ) \models 
 x \mapsto -  \ \text{ // Since $s'(x) = l$} \\
\Leftrightarrow \ & 
(s ' [x \mapsto s' (x') ] , h  ) \models   \psi  
\text{ and  \ // By Lemma \ref{lma: Substitution for assignment}} \\
& (l \mapsto v) (s'(x)) = (l \mapsto v) (l) = v
\text{ \ // $s'(x) = l$ }  \\
\Leftrightarrow \ & 
(s  [x \mapsto s (x) ] , h  ) \models   \psi  \text{ and } (l \mapsto v) (s'(x)) = (l \mapsto v) (l) = v\\
& \text{// Since 
$s [x \mapsto l] [x' \mapsto s(x)] [x \mapsto s' (x') ] = 
s [x' \mapsto s(x)] [x \mapsto s' (x') ]$} \\
& \text{// $s' (x') = s(x)$ and $x' \neq x$ and $x' \notin \textsf{fv}(\psi)$} \\
\Leftrightarrow \ & 
(s   , h  ) \models   \psi  
\text{ and }  (l \mapsto v) (s'(x)) = (l \mapsto v) (l) = v \\
\Leftarrow \ &
(s,h) \models \psi  \text{ and }
l \notin \textsf{dom}(h) \text{ and } v \in \textsc{Val}
\end{align*}

(2) We show that if $(s,h) \models \psi$  and
$h(l) = \bot$ and
$v \in \textsc{Val}$, then the following holds.
$$(s [x \mapsto l], h [l \mapsto v] ) \models \exists x'. 
\bigvee_{j=1}^n ( 
\left(\bigast_{i=1}^n y_i \not\mapsto \right) 
[y_j \not\mapsto := y_j \mapsto -  ] *
x \approx y_j  * \psi' [x := x'] )$$
Here, $(s,h) \models \psi  \text{ and }
h(l) = \bot \land v \in \textsc{Val} \iff 
(s,h) \models \psi  \text{ and }
h(l) = \bot \land v \in \textsc{Val} \text{ and }
l = s(y_i) \text{ (for some $i$)}
$ holds.

We only show the case for the $j$-th disjunct ($1 \leq j \leq n$). 
Specifically, we aim to show that if 
$(s,h) \models \psi  \text{ and }
h(l) = \bot  \text{ and } v \in \textsc{Val} \text{ and }  l = s(y_j)$, then 
$(s [x \mapsto l], h [l \mapsto v] ) \models \exists x'. 
\left(\bigast_{i=1}^n y_i \not\mapsto \right) 
[y_j \not\mapsto := y_j \mapsto -  ] *
x \approx y_j  * \psi' [x := x'] $
 holds through the following. 
\begin{align*}
& (s [x \mapsto l], h [l \mapsto v] ) \models 
\exists x'. 
\left(\bigast_{i=1}^n y_i \not\mapsto \right) 
[y_j \not\mapsto := y_j \mapsto -  ] *
x \approx y_j  * \psi' [x := x']     \\
\Leftrightarrow \ &
\exists w \in \textsc{Val} . 
(s [x \mapsto l] [x' \mapsto w ] , h [l \mapsto v] ) \models 
\left(\bigast_{i=1}^n y_i \not\mapsto \right) 
[y_j \not\mapsto := y_j \mapsto -  ] *
x \approx y_j  * \psi' [x := x']     \\
\Leftarrow \ &
(s [x \mapsto l] [x' \mapsto s(x) ] , h [l \mapsto v] ) \models 
\left(\bigast_{i=1}^n y_i \not\mapsto \right) 
[y_j \not\mapsto := y_j \mapsto -  ] *
x \approx y_j  * \psi' [x := x']     \\
& \text{// We pick   $s(x)$ as $w$}  \\
\Leftarrow \ &
(s [x \mapsto l] [x' \mapsto s(x) ] ,
h [l \mapsto v] [s [x \mapsto l] [x' \mapsto s(x) ](y_j) \mapsto \bot ]  ) \models 
\left(\bigast_{i=1}^n y_i \not\mapsto \right)  *
x \approx y_j  * \psi' [x := x']     \\
\Leftrightarrow \ &
(s [x \mapsto l] [x' \mapsto s(x) ] ,
h [l \mapsto v] [s (y_j) \mapsto \bot ]  ) \models 
\left(\bigast_{i=1}^n y_i \not\mapsto \right)  *
x \approx y_j  * \psi' [x := x']     \\
\Leftrightarrow \ &
(s [x \mapsto l] [x' \mapsto s(x) ] ,
h [l \mapsto v] [s (y_j) \mapsto \bot ]  ) \models 
\left(\bigast_{i=1}^n y_i \not\mapsto \right)  *  \psi' [x := x']   \\
&\text{and } (s [x \mapsto l] [x' \mapsto s(x) ] , h_0  ) \models x \approx y_j      \\
& \text{// Here, $h_0$ means empty heap, i.e., $\textsf{dom} (h_0) = \emptyset $} \\
\Leftrightarrow \ &
(s [x \mapsto l] [x' \mapsto s(x) ] [x \mapsto s [x \mapsto l] [x' \mapsto s(x) ](x')] ,
h [l \mapsto v] [s (y_j) \mapsto \bot ]  ) \models 
\left(\bigast_{i=1}^n y_i \not\mapsto \right)  *  \psi'    \\
&\text{// Lemma \ref{lma: Substitution for assignment}} \\
&\text{and } l = s(y_j)  \text{ and }  \textsf{dom} (h_0) = \emptyset    \\
\Leftrightarrow \ &
(s  [x' \mapsto s(x) ] [x \mapsto s (x)] ,
h [l \mapsto v] [s (y_j) \mapsto \bot ]  ) \models  \psi  \text{ and } l = s(y_j)  \text{ and }  \textsf{dom} (h_0) = \emptyset    \\
\Leftrightarrow \ &
(s   ,
h [s (y_j) \mapsto v] [s (y_j) \mapsto \bot ]  ) \models  \psi  \text{ and } l = s(y_j)  \text{ and }  \textsf{dom} (h_0) = \emptyset    \\
&\text{// By $x' \notin \textsf{fv}(\psi)$ and Lemma \ref{lma: Substitution for assignment}} \\
\Leftarrow \ &
(s   ,h   ) \models  \psi  \text{ and } l = s(y_j)  \text{ and }  \textsf{dom} (h_0) = \emptyset    \\
&\text{// Since $h [s (y_j) \mapsto \bot ] (s (y_j)) = h(s (y_j)) = \bot$ if
$s,h \models \psi$ holds} \\
\Leftarrow \ & 
(s,h) \models \psi  \text{ and }
h(l) = \bot  \text{ and } v \in \textsc{Val} \text{ and }  l = s(y_j)
\end{align*}

Next, we prove $\forall\sigma'. \sigma' \in \normalfont
\text{\normalfont{WPO}}\llbracket \psi , x := \texttt{alloc()} , \textit{ok}\rrbracket  \Longleftarrow 
\normalfont \sigma' \models 
\textsf{wpo}_\text{sh}(\psi , x := \texttt{alloc()} , \textit{ok})$.

(3) We show the case for $(s',h') \models \exists x'.  \psi[x := x'] * x \mapsto -$.
The goal is showing that 
if $(s',h') \models \exists x' . 
 \psi [x:=x'] * x\mapsto - $, then
$\exists ( s,h  ) \models \psi , l \notin \textsf{dom}(h) , v \in \textsc{Val} .
(s', h') = (s[x \mapsto l], h[l \mapsto v]) $ holds.
We show the following to prove our statement.
\begin{align*}
    & (s',h') \models \exists x' . 
\psi [x:=x'] * x\mapsto - \\
\Leftrightarrow \ & 
\exists v, w \in \textsc{Val} . 
(s' [x' \mapsto w] ,h') \models 
\psi [x:=x'] * x\mapsto - \text{ and } h' (s' [x' \mapsto w] (x)) =h'(s'(x)) = v \\
\Rightarrow \ & 
\exists v, w \in \textsc{Val} . 
(s' [x' \mapsto w] ,h_\psi \circ ( s' [x' \mapsto w] (x) \mapsto v ) ) \models 
 \psi [x:=x'] * x\mapsto - \text{ and } h'(s'(x)) = v \\
& \text{ // Let us say that $h' = h_\psi \circ (s' [x' \mapsto w] (x) \mapsto v) = 
h_\psi \circ ( s'  (x) \mapsto v)$} \\
\Rightarrow \ & 
\exists v, w \in \textsc{Val} . 
(s' [x' \mapsto w] ,h_\psi  ) \models  \psi [x:=x'] 
\text{ and } \\
& (s' [x' \mapsto w] ,( s' (x) \mapsto v ) ) \models 
  x\mapsto - \text{ and } h'(s'(x)) = v \\
\Leftrightarrow \ & 
\exists v, w \in \textsc{Val} . 
(s' [x' \mapsto w] [x \mapsto s' [x' \mapsto w](x') ] ,h_\psi  ) \models  \psi 
\text{ and } \ \text{ // Lemma \ref{lma: Substitution for assignment}} \\
& (s' [x' \mapsto w] ,( s' (x) \mapsto v ) ) \models 
  x\mapsto - \text{ and } h'(s'(x)) = v  \\
\Leftrightarrow \ & 
\exists v, w \in \textsc{Val} . 
(s'  [x \mapsto w ] ,h_\psi  ) \models  \psi 
\text{ and }  (s'  ,( s'  (x) \mapsto v ) ) \models   x\mapsto - 
\text{ and } h'(s'(x)) = v \\
&  \text{// Since $x'$ is fresh}\\
\Rightarrow \ & 
\exists v, w \in \textsc{Val} . 
( s_\psi ,h_\psi  ) \models  \psi 
\text{ and }  (s'  , ( s'  (x) \mapsto v ) ) \models 
  x\mapsto -  \text{ and } h'(s'(x)) = v \\
& \text{ // Let us say that $s_\psi = s'  [x \mapsto w ]$}\\
\Rightarrow \ & 
\exists v \in \textsc{Val} . 
( s_\psi ,h_\psi  ) \models  \psi 
\text{ and } s'(x) \notin \textsf{dom} (h_\psi) 
\text{ and } h' = h_\psi \circ (s'(x) \mapsto v)   \text{ and } h'(s'(x)) = v  \\
\Rightarrow \ & 
\exists ( s,h  ) \models \psi , l \notin \textsf{dom}(h) , v \in \textsc{Val} .
(s', h') = (s[x \mapsto l], h[l \mapsto v]) \\
& \text{// Let us say that $s'(x) = l$} \\
&\text{// By $ s_\psi [x \mapsto l] = s' [x \mapsto w] [x \mapsto l] = 
s' [x \mapsto l ] = s'$} \\
&\text{// By $ h' = h_\psi \circ (s'(x) \mapsto v) = h_\psi [l \mapsto v]  $}
\end{align*}

(4) We show the case for $(s',h') \models \exists x' . 
\left(\bigvee_{i=1}^n 
(
\left(\bigast_{i=1}^n y_i \not\mapsto \right) 
[y_j \not\mapsto := y_j \mapsto -  ] *
x \approx y_j  * \psi' [x := x']  )  
\right)$.
The goal is showing that 
if $(s',h') \models \exists x' . 
\left(\bigvee_{i=1}^n 
(
\left(\bigast_{i=1}^n y_i \not\mapsto \right) 
[y_j \not\mapsto := y_j \mapsto -  ] *
x \approx y_j  * \psi' [x := x']  )  
\right)$, then
$ \exists ( s,h  ) \models \psi , h(l) = \bot , v \in \textsc{Val} , l=s(y_j) .
(s', h') = (s[x \mapsto l], h[l \mapsto v])
$ holds.
We show the following to prove our statement.
Here, we only show a case for $j$-th ($1 \leq j \leq n$) disjunct, i.e.,
$\exists x' . 
\left(\bigast_{i=1}^n y_i \not\mapsto \right) 
[y_j \not\mapsto := y_j \mapsto -  ] *
x \approx y_j  * \psi' [x := x']    $.
\begin{align*}
& (s',h') \models 
\exists x' . 
\left(\bigast_{i=1}^n y_i \not\mapsto \right) 
[y_j \not\mapsto := y_j \mapsto -  ] *
x \approx y_j  * \psi' [x := x']
\\
\Leftrightarrow \ & 
\exists v, w \in \textsc{Val} . 
(s' [x' \mapsto w] ,h') \models 
\left(\bigast_{i=1}^n y_i \not\mapsto \right) 
[y_j \not\mapsto := y_j \mapsto -  ] *
x \approx y_j  * \psi' [x := x'] \\
& \text{ and }  h'(s' [x' \mapsto w](y_j)) = h'(s'(y_j)) = v \\
\Rightarrow \ & 
\exists v, w \in \textsc{Val} . 
(s' [x' \mapsto w] ,h' [ s' [x' \mapsto w] (y_j) \mapsto \bot  ] ) \models 
\left(\bigast_{i=1}^n y_i \not\mapsto \right)  *
x \approx y_j  * \psi' [x := x'] 
\text{ and }   h'(s'(y_j)) = v \\
\Leftrightarrow \ & 
\exists v, w \in \textsc{Val} . 
(s' [x' \mapsto w] ,h' [ s'  (y_j) \mapsto \bot  ] ) \models 
\left(\bigast_{i=1}^n y_i \not\mapsto \right)  *
x \approx y_j  * \psi' [x := x']  \text{ and }   h'(s'(y_j)) = v  \\
\Rightarrow \ & 
\exists v, w \in \textsc{Val} . 
s' [x' \mapsto w] (x) = s' [x' \mapsto w] (y_j) = 
s'(x) = s'(y_j)
\text{ and }   h'(s'(y_j)) = v \text{ and }  \\
& (s' [x' \mapsto w] ,h' [ s'  (y_j) \mapsto \bot  ] ) \models 
\left(\bigast_{i=1}^n y_i \not\mapsto \right)  *  \psi' [x := x'] \\
\Leftrightarrow \ & 
\exists v, w \in \textsc{Val} . 
s' [x' \mapsto w] (x) = s' [x' \mapsto w] (y_j) = 
s'(x) = s'(y_j)
\text{ and }   h'(s'(y_j)) = v \text{ and }  \\
& (s' [x' \mapsto w] ,h' [ s'  (y_j) \mapsto \bot  ] ) \models 
\left( \left(\bigast_{i=1}^n y_i \not\mapsto \right)  *  \psi' \right) [x := x'] \ \text{ // Since $x \notin \textsf{fv} \left(\bigast_{i=1}^n y_i \not\mapsto \right) $} \\
\Leftrightarrow \ & 
s'(x) = s'(y_j)
\text{ and } \exists v \in \textsc{Val} . h'(s'(y_j)) = v \text{ and } \\
&\exists w \in \textsc{Val} . 
  (s' [x' \mapsto w] [x \mapsto s' [x' \mapsto w](x') ]
,h' [ s'  (y_j) \mapsto \bot  ] ) \models 
\left(\bigast_{i=1}^n y_i \not\mapsto \right)  *  \psi'  \\
&\text{// Lemma \ref{lma: Substitution for assignment}} \\ 
\Leftrightarrow \ & 
s'(x) = s'(y_j)
\text{ and } \exists v \in \textsc{Val} . h'(s'(y_j)) = v \text{ and } \\
&\exists w \in \textsc{Val} . 
  (s'  [x \mapsto w ]
,h' [ s'  (y_j) \mapsto \bot  ] ) \models 
\left(\bigast_{i=1}^n y_i \not\mapsto \right)  *  \psi' \ 
\text{ // $x'$ is fresh}\\
\Rightarrow \ &
s'(x) = s'(y_j) \text{ and } \exists v \in \textsc{Val} . h'(s'(y_j)) = v 
\text{ and }  (s_\psi, h_\psi)  \models \psi  \\ 
& \text{ // Let us say that $s_\psi = s'[x \mapsto w]$ and 
$h_\psi = h'[s'(y_j) \mapsto \bot ] = h'[s_\psi(y_j) \mapsto \bot ] $}\\
\Rightarrow \ &
\exists ( s,h  ) \models \psi , h(l) = \bot , v \in \textsc{Val} , l=s(y_j) .
(s', h') = (s[x \mapsto l], h[l \mapsto v])\\
& \text{// By $s'(x) = s'(y_j) = s_\psi (y_j)  = l $ and 
$\exists v \in \textsc{Val} . h'(s'(y_j)) = h'(l) = v  $}
\end{align*}

Consequently,  Lemma \ref{lma: expressiveness of atomic command} holds in this case.

\item $\epsilon = \textit{er}$

By the definition of $\llbracket    x := \texttt{alloc()} \rrbracket _\textit{er} $, $\text{\normalfont{WPO}}\llbracket \psi,  x := \texttt{alloc()} ,\textit{er}\rrbracket  = \{ \sigma' \mid \exists \sigma. 
\sigma \models \psi \land
(\sigma,\sigma') \in \llbracket    x := \texttt{alloc()} \rrbracket _\textit{er} \} = \emptyset$.
Since 
$\textsf{wpo}(\psi, x := \texttt{alloc()} ,\textit{er}) = \textsf{false}$, 
Lemma \ref{lma: expressiveness of atomic command} holds straightforwardly in this case.

\end{itemize}


\subsubsection{Case \texttt{free($x$)}}

\begin{itemize}
\item $\epsilon = \textit{ok}$ 
\begin{itemize}
    \item If $y \in \textsf{Aliases}(x, \psi)$ and 
$\psi = \psi'  * y \mapsto t $

Firstly, we prove $\forall\sigma'. \sigma' \in \normalfont
\text{\normalfont{WPO}}\llbracket \psi , \texttt{free($x$)} , \textit{ok}\rrbracket  \Longrightarrow 
\normalfont \sigma' \models 
\textsf{wpo}_\text{sh}(\psi , \texttt{free($x$)} , \textit{ok})$.
By the definition of 
$\llbracket   \texttt{free($x$)}  \rrbracket _\textit{ok}$,
$\text{\normalfont{WPO}}\llbracket \psi, \texttt{free($x$)} ,\textit{ok}\rrbracket  = \{ \sigma' \mid \exists \sigma. 
\sigma \models \psi \text{ and }
(\sigma,\sigma') \in \llbracket   \texttt{free($x$)}  \rrbracket _\textit{ok}
\} = 
\{ (s, h [s(x) \mapsto \bot] ) \mid (s,h) \models \psi \text{ and } s(x) \in \textsf{dom}_+ (h)  \}$ holds. 
Additionally, by the definition of $\textsf{wpo}_\text{sh}$,
$\textsf{wpo}_\text{sh} (\psi, \texttt{free($x$)},\textit{ok}) = 
\psi' *  y \not\mapsto $ holds. 
It is enough to show that if $(s,h) \models \psi \text{ and } s(x) \in \textsf{dom}_+ (h)$, 
then $(s, h [s(x) \mapsto \bot]) \models \psi'  * y \not\mapsto$.
We show the following to prove our statement.

\begin{align*}
& (s,h) \models \psi \text{ and }  s(x) \in \textsf{dom}_+ (h)  \\
\Rightarrow \ &  (s,h) \models \psi \\
\Leftrightarrow \ & 
 (s, h[s(x) \mapsto s(t)]) \models \psi'  * y \mapsto t \\
 & \text{// Since $y \in \textsf{Aliases}(x, \psi)$} \\
\Rightarrow \ &  (s, h[s(x) \mapsto \bot]) \models \psi'  * y \not\mapsto 
\end{align*}

Next, we prove $\forall\sigma'. \sigma' \in \normalfont
\text{\normalfont{WPO}}\llbracket \psi , \texttt{free($x$)} , \textit{ok}\rrbracket  \Longleftarrow 
\normalfont \sigma' \models 
\textsf{wpo}_\text{sh}(\psi , \texttt{free($x$)} , \textit{ok})$.
We show that if $(s', h') \models \psi'  * y\not\mapsto$, then
$\exists (s,h) \models \psi , s(x) \in \textsf{dom}_+ (h) . 
(s', h') = (s, h[s(x) \mapsto \bot])$ holds.
\begin{align*}
    & (s', h') \models \psi' *  y\not\mapsto \\
\Rightarrow \ & (s', h' [s'(x) \mapsto s'(t) ]) \models \psi' *  y \mapsto t \\
& \text{// Since $y \in \textsf{Aliases}(x, \psi)$} \\
\Leftrightarrow \ & (s', h' [s'(x) \mapsto s'(t) ]) \models \psi  \\
\Rightarrow \ & (s_\psi , h_\psi ) \models \psi  \\
& \text{// Let us say that $s_\psi = s'$ and  
$h_\psi = h'[s'(x) \mapsto s'(t)]  = h'[s_\psi (x) \mapsto s_\psi(t)]$}  \\
\Rightarrow \ & \exists (s,h) \models \psi , s(x) \in \textsf{dom}_+ (h) . 
(s', h') = (s, h[s(x) \mapsto \bot]) \\
&\text{// By $h_\psi (s_\psi (x)) =  
h'[s_\psi(x) \mapsto s_\psi(t)] (s_\psi (x)) = s_\psi(t) \in \textsc{Val} $  } \\
&\text{// By $h'(s'(x)) = h'(s_\psi(x)) = \bot $ if $(s', h') \models \psi' * x \approx y * y\not\mapsto$ holds}
\end{align*}

Consequently,
Lemma \ref{lma: expressiveness of atomic command} holds in this case.

\item Otherwise 

By the definition of 
\(\llbracket \texttt{free($x$)} \rrbracket _\textit{ok}\), we have:
\[
\begin{aligned}
\text{\normalfont{WPO}}\llbracket \psi, \texttt{free($x$)}, \textit{ok} \rrbracket 
&= \{ \sigma' \mid \exists \sigma.\ 
\sigma \models \psi \text{ and } 
(\sigma, \sigma') \in \llbracket \texttt{free($x$)} \rrbracket _\textit{ok} \} \\
&= \{ (s, h[s(x) \mapsto \bot]) \mid 
(s, h) \models \psi \text{ and } s(x) \in \textsf{dom}_+(h) \}.
\end{aligned}
\]

Assume \((s,h) \models \psi\).  
If \(s(x) \in \textsf{dom}_+ (h)\), then there exists \(v \in \textsf{Aliases}(x, \psi)\) such that
\(v \mapsto t \in \psi\), since \(\psi\) is canonical.  
This contradicts the assumption that there is no such \(v\) in \(\textsf{Aliases}(x, \psi)\).  
Hence, \((s,h) \models \psi\) implies \(s(x) \notin \textsf{dom}_+ (h)\), and therefore
\[
\text{\normalfont{WPO}}\llbracket \psi, \texttt{free($x$)} ,\textit{ok}\rrbracket = \emptyset.
\]

By the definition of \(\textsf{wpo}_\text{sh}\),  
\(\textsf{wpo}(\psi, \texttt{free($x$)},\textit{ok}) = \textsf{false}\).  
Therefore, this lemma straightforwardly holds in this case.

\end{itemize}

\item $\epsilon = \textit{er}$
\begin{itemize}
    \item If 
      $\displaystyle\forall v \in \textsf{Aliases}(x,\psi).\;(v \mapsto t)\notin\psi$
      \medskip

By the definition of 
    $\llbracket   \texttt{free($x$)}  \rrbracket _\textit{er}
    = \left\{ ((s,h), (s,h)) \mid s(x) \notin \textsf{dom}_+(h) \right\} $,
the following holds:
    \[
    \text{\normalfont{WPO}}\llbracket \psi, \texttt{free($x$)} ,\textit{er}\rrbracket  =
    \left\{ (s,h) \mid (s,h) \models \psi \land s(x) \notin \textsf{dom}_+(h) \right\}
    = \left\{ (s,h) \mid (s,h) \models \psi  \right\} .
    \]

Furthermore, 
        \[
          \mathsf{wpo}_\text{sh}(\psi,\texttt{free}(x),\textit{er})=\psi.
        \]
        Hence,  the  WPO set coincides with the  weakest
        postcondition, and the lemma holds for the error branch.

    \item Otherwise

    If there exists $v \in \textsf{Aliases}(x,\psi)$ such that $(v \mapsto t) \in \psi$,
    then for every $(s,h) \models \psi$, $s(x) \in \textsf{dom}_+ (h)$ and $s(x) \neq \textit{null}$.
    Hence, $(s,h) \notin \llbracket   \texttt{free($x$)}  \rrbracket _\textit{er}$.

    Thus,
    \[
    \text{\normalfont{WPO}}\llbracket \psi, \texttt{free($x$)} ,\textit{er}\rrbracket  = \emptyset,
    \quad
    \textsf{wpo}_\text{sh}(\psi, \texttt{free($x$)},\textit{er}) = \textsf{false}.
    \]
    Therefore,
    Lemma \ref{lma: expressiveness of atomic command} also holds in this case.

\end{itemize}
\end{itemize}

\subsubsection{Case $ x := [y] $}

\begin{itemize}
\item $\epsilon = \textit{ok}$ 
\begin{itemize}
    \item If $z \in \textsf{Aliases}(y, \psi)$ and 
$\psi = \psi'  * z \mapsto t$

Firstly, we prove $\forall\sigma'. \sigma' \in \normalfont
\text{\normalfont{WPO}}\llbracket \psi ,  x := [y] , \textit{ok}\rrbracket  \Longrightarrow 
\normalfont \sigma' \models 
\textsf{wpo}_\text{sh}(\psi ,  x := [y] , \textit{ok})$.
By the definition of $\llbracket    x := [y]  \rrbracket _\textit{ok}$,
$\text{\normalfont{WPO}}\llbracket \psi,  x := [y] ,\textit{ok}\rrbracket  = \{ \sigma' \mid \exists \sigma. 
\sigma \models \psi \land
(\sigma,\sigma') \in \llbracket    x := [y]  \rrbracket _\textit{ok}
\} = 
\{ (s [x \mapsto h(s(y))] , h  ) \mid (s,h) \models \psi \land h(s(y))  \in \textsc{Val}  \}
= 
\{ (s [x \mapsto h(s(y))] , h  ) \mid (s,h) \models \psi \land h(s(y)) = h(s(z)) = s(t) \in \textsc{Val}  \}
= 
\{ (s [x \mapsto s(t)] , h  ) \mid (s,h) \models \psi \land h(s(y)) = h(s(z)) = s(t)\in \textsc{Val}  \}$ holds.
Consequently, $\sigma' \in \text{\normalfont{WPO}}\llbracket \psi,  x := [y] ,\textit{ok}\rrbracket  \iff \exists (s,h) \models \psi , h(s(y)) = s(t)  \in \textsc{Val} .
\sigma' = (s [x \mapsto s(t)] , h  )$ holds.

Additionally, by the definition of $\textsf{wpo}_\text{sh}$,
$\textsf{wpo}_\text{sh}(\psi, x := [y] ,\textit{ok}) = 
\exists x'. (\psi' *   z \mapsto t )   [x := x'] * 
x \approx t [x := x'] $ holds. 
It is enough to show that if $  (s,h) \models \psi \land
h(s(y)) = s(t)  \in \textsc{Val}$, then $(s [x \mapsto s(t)] , h  ) 
\models  \textsf{wpo}_\text{sh} (\psi, x := [y] ,\textit{ok}) $ holds.
To prove our statement, we show the following.

\begin{align*}
& (s,h) \models \psi \land
h(s(y)) = s(t)  \in \textsc{Val} \\
\Rightarrow \ & 
  (s  , h  ) \models 
\psi' *  z \mapsto t   \\
\Leftrightarrow \ & 
  (s  , h  ) \models 
\psi' *  z \mapsto t * t \approx t  \\
\Leftrightarrow \ & 
  (s  , h  ) \models 
(\psi' *   z \mapsto t) [x := x'] [x' := x]
* (t [x' := x] \approx t [x := x'] [x' := x])  \\
& \text{// Since $x'$ is fresh, we have $s(t[x' := x]) = s(t)$} \\
\Leftrightarrow \ & 
 (s  , h  ) \models 
( (\psi' *   z \mapsto t )   [x := x'] * 
t \approx t [x := x'] ) [x' := x]  \\
\Leftrightarrow \ & 
 (s [x' \mapsto s(x)] , h  ) \models 
(\psi' *   z \mapsto t )   [x := x'] * 
t \approx t [x := x'] \\
 & \text{// Lemma \ref{lma: Substitution for assignment}} \\
\Rightarrow \ & 
\exists w \in \textsc{Val} . (s [x' \mapsto w] , h  ) \models 
(\psi' *   z \mapsto t )   [x := x'] * 
t \approx t [x := x']\\
\Leftrightarrow \ & 
(s  , h  ) \models 
\exists x'. (\psi' *   z \mapsto t )   [x := x'] * 
t \approx t [x := x']\\
\Leftrightarrow \ & 
 (s  , h  ) \models 
(\exists x'. (\psi' *   z \mapsto t )   [x := x'] * 
x \approx t [x := x']  ) [x:=t] \\
\Leftrightarrow \ & 
(s [x \mapsto s(t)] , h  ) \models 
\exists x'. (\psi' *   z \mapsto t )   [x := x'] * 
x \approx t [x := x'] \\
 & \text{// Lemma \ref{lma: Substitution for assignment}}  \\
 \Leftrightarrow \ & 
(s [x \mapsto s(t)] , h  ) \models 
\exists x'. \psi   [x := x'] * 
x \approx (t [x := x']) \\
\end{align*}

Next, we prove $\forall\sigma'. \sigma' \in \normalfont
\text{\normalfont{WPO}}\llbracket \psi ,  x := [y]  , \textit{ok}\rrbracket  \Longleftarrow 
\normalfont \sigma' \models 
\textsf{wpo}_\text{sh}(\psi ,  x := [y]  , \textit{ok})$.
We show that if $(s', h') \models \textsf{wpo}_\text{sh}(\psi ,  x := [y]  , \textit{ok})$, then
$\exists (s,h) \models \psi ,  h(s(y)) = s(t)  \in \textsc{Val}  . 
(s', h') = ( s [x \mapsto s(t)] , h)  $ holds.

\begin{align*}
& (s', h') \models 
\exists x'. (\psi' *   z \mapsto t )   [x := x'] * 
x \approx t [x := x']\\
\Leftrightarrow \ & \exists w \in \textsc{Val} . 
(s' [x' \mapsto w] , h') \models (\psi' *   z \mapsto t )   [x := x'] * 
x \approx t [x := x'] \\
\Rightarrow \ & \exists w \in \textsc{Val} . 
(s' [x' \mapsto w] , h') \models (\psi' *   z \mapsto t )   [x := x'] 
\text{ and }
s' [x' \mapsto w] (x) = s' [x' \mapsto w] ( t [x := x'] ) \\
\Leftrightarrow \ & \exists w \in \textsc{Val} . 
(s' [x' \mapsto w] [x \mapsto w] , h') \models \psi' *   z \mapsto t 
\text{ and }
s' [x' \mapsto w] (x) = s' [x' \mapsto w][x \mapsto w] (t  ) \\
&\text{// Lemma \ref{lma: Substitution for assignment} and $s'[x' \mapsto w](x') = w$} \\
\Leftrightarrow \ & \exists w \in \textsc{Val} . 
(s' [x \mapsto w] , h') \models \psi' *  z \mapsto t 
\text{ and }
s'  (x) = s' [x \mapsto w] (t  ) \\
&\text{// $x'$ is fresh} \\
\Rightarrow \ & \exists w \in \textsc{Val} . 
(s_\psi , h_\psi ) \models \psi' *   z \mapsto t 
\text{ and }
s'  (x) = s_\psi ( t  ) \\
&\text{// Let us say that $s_\psi = s' [x \mapsto w]$ and $h_\psi = h'$}\\
\Rightarrow \ & 
\exists (s,h) \models \psi ,  h(s(y)) = s(t)  \in \textsc{Val}  . 
(s', h') = ( s [x \mapsto s(t) ] , h) 
\end{align*}

Consequently,
Lemma \ref{lma: expressiveness of atomic command} holds in this case.

    \item Otherwise

Similar to the case of \texttt{free($x$)}.

\end{itemize}

\item $\epsilon = \textit{er}$
\begin{itemize}
    \item If $\displaystyle\forall v \in \textsf{Aliases}(y,\psi).\;(v \mapsto t)\notin\psi$

Similar to the case of \texttt{free($x$)}.

    \item Otherwise

Similar to the case of \texttt{free($x$)}.

\end{itemize}

\end{itemize}


\subsubsection{Case $[x] := t$  }

\begin{itemize}
\item $\epsilon = \textit{ok}$
\begin{itemize}
    \item If $z \in \textsf{Aliases}(x, \psi)$ and 
$\psi = \psi' * z \mapsto t'$

Firstly, we prove $\forall\sigma'. \sigma' \in \normalfont
\text{\normalfont{WPO}}\llbracket \psi ,  [x] := t , \textit{ok}\rrbracket  \Longrightarrow 
\normalfont \sigma' \models 
\textsf{wpo}_\text{sh}(\psi ,  [x] := t , \textit{ok})$.
By the definition of $\llbracket    [x] := t  \rrbracket _\textit{ok}$,
$\text{\normalfont{WPO}}\llbracket \psi, [x] := t ,\textit{ok}\rrbracket  = \{ \sigma' \mid \exists \sigma. 
\sigma \models \psi \land
(\sigma,\sigma') \in \llbracket    [x] := t  \rrbracket _\textit{ok}
\} = 
\{ (s  , h [s(x) \mapsto s(t)] ) \mid 
(s,h) \models \psi \land h(s(x))  \in \textsc{Val}  \}$ holds.
Consequently, $\sigma' \in \text{\normalfont{WPO}}\llbracket \psi,  [x] := t ,\textit{ok}\rrbracket  \iff \exists (s,h) \models \psi , s(x) \in \textsf{dom}_+ (h) .
\sigma' = (s  , h [s(x) \mapsto s(t)] )$ holds.

Additionally, by the definition of $\textsf{wpo}_\text{sh}$,
$\textsf{wpo}_\text{sh}(\psi, [x] := t ,\textit{ok}) = 
\psi'  * z \mapsto t $ holds. 
It is enough to show that if $  (s,h) \models \psi \land h(s(x))  \in \textsc{Val}$, then $(s  , h [s(x) \mapsto s(t)] ) \models  \textsf{wpo}_\text{sh} (\psi, [x] := t,\textit{ok}) $ holds.
To prove our statement, we show the following.

\begin{align*}
& (s  , h  ) \models \psi \land h(s(x))  \in \textsc{Val} \\
\Rightarrow \ &
(s  , h  ) \models \psi  \\
\Leftrightarrow \ &
(s  , h [s(x) \mapsto s(t')] ) \models \psi \\
& \text{// Since $h(s(x)) = h[s(x) \mapsto s(t')](s(x)) = s(t')$ when $(s, h) \models \psi$} \\
\Leftrightarrow \ &
(s  , h [s(x) \mapsto s(t')] ) \models  \psi' * z \mapsto t' \\
\Rightarrow \ &
(s  , h [s(x) \mapsto s(t)] ) \models  \psi' * z \mapsto t \\
\end{align*}

Next, we prove $\forall\sigma'. \sigma' \in \normalfont
\text{\normalfont{WPO}}\llbracket \psi ,  [x] := t  , \textit{ok}\rrbracket  \Longleftarrow 
\normalfont \sigma' \models 
\textsf{wpo}_\text{sh}(\psi , [x] := t   , \textit{ok})$.
We show that if $(s', h') \models \textsf{wpo}_\text{sh}(\psi ,  [x] := t  , \textit{ok})$, then
$\exists (s,h) \models \psi ,  h(s(x))  \in \textsc{Val}   . 
(s', h') =  (s  , h [s(x) \mapsto s(t)] )  $ holds.
\begin{align*}
& (s', h') \models  \psi' *  z \mapsto t \\
\Rightarrow \ & 
(s', h' [ s'(x) \mapsto s'(t') ] ) \models  \psi' *  z \mapsto t' \\
\Leftrightarrow \ & 
(s', h' [ s'(x) \mapsto s'(t') ] ) \models  \psi \\
\Rightarrow \ & 
(s_\psi , h_\psi ) \models  \psi \\
& \text{// Let us say that $s_\psi = s'$ and 
$ h_\psi  = h' [s'(x) \mapsto s'(t')] = h' [s_\psi (x) \mapsto s_\psi(t') ]  $} \\
\Rightarrow \ & 
\exists (s,h) \models \psi ,  h(s(x))  \in \textsc{Val}   . 
(s', h') =  (s  , h [s(x) \mapsto s(t)] )  \\
&\text{// By $h_\psi (s_\psi (x)) = h' [s'(x) \mapsto s'(t')] (s' (x)) = s'(t')
\in \textsc{Val} $} \\
&\text{// Since $h' (s' (x)) = s'(t) = h'(s_\psi (x)) = s_\psi (t) $ when 
$(s', h') \models  \psi' *  z \mapsto t $} \\
\end{align*}

Consequently,
Lemma \ref{lma: expressiveness of atomic command} holds in this case.

    \item Otherwise

Similar to the case of \texttt{free($x$)}.

\end{itemize}

\item $\epsilon = \textit{er}$
\begin{itemize}
    \item If $\displaystyle\forall v \in \textsf{Aliases}(x,\psi).\;(v \mapsto t')\notin\psi$

Similar to the case of \texttt{free($x$)}.

    \item Otherwise

Similar to the case of \texttt{free($x$)}.

\end{itemize}

\end{itemize}

\subsubsection{Case {${\normalfont{\texttt{local $x$ in $\mathbb{C}$}}}$ }}

By the definition, 
$\text{\normalfont{WPO}}\llbracket \psi , \texttt{local $x $ in $\mathbb{C}$}, \epsilon\rrbracket  \\
=\{ (s', h') \mid \exists (s , h ), \exists v,v' \in \textsc{Val} . 
(s , h ) \models \psi \land 
( (s[x \mapsto v] ,h) , (s' [x \mapsto v'] ,h') ) \in 
\llbracket  \mathbb{C}  \rrbracket _\epsilon 
 \land  s(x) = s'(x) \}  $ and 
$\textsf{wpo}_\text{sh} ( \psi , \texttt{local $x $ in $\mathbb{C}$} ,\epsilon)$ is equivalent to 
$  \exists x'' . 
\textsf{wpo} (  \psi [x:=x'] ,  \mathbb{C}  ,\epsilon)
[x:=x''] [x' := x] $.
In this proof, we use an induction hypothesis (IH), assuming that  Lemma \ref{lma: expressiveness of atomic command} for $\mathbb{C}$ holds.
By the definition of $\normalfont\text{WPO}$ and $\textsf{wpo}_\text{sh}$, we have the following.
\begin{align*}
&(s', h')  \in \text{\normalfont{WPO}}\llbracket \psi , \texttt{local $x$ in $\mathbb{C}$}, \epsilon\rrbracket  \\
  \Leftrightarrow \  & \exists (s , h ), \exists v, v'  . 
(s , h ) \models \psi \land  s(x) = s'(x) 
 \land  ( (s[x \mapsto v] ,h) , (s' [x \mapsto v'] ,h') ) \in 
\llbracket  \mathbb{C}  \rrbracket _\epsilon 
\end{align*}

\begin{align*}
    & (s' , h') \models \textsf{wpo}_\text{sh} ( \psi , \texttt{local $x $ in $\mathbb{C}$} ,\epsilon) \\
\Leftrightarrow \ & (s' , h') \models
\exists x'' . 
( \textsf{wpo} (  \psi [x:=x'] ,  \mathbb{C}  ,\epsilon)
[x:=x''] [x' := x] ) \\
\Leftrightarrow \ & \exists v' \in \textsc{Val} . 
(s' [x'' \mapsto v'] , h')  \models
 \textsf{wpo} (  \psi [x:=x'] ,  \mathbb{C}  ,\epsilon)
[x:=x''] [x' := x] \\
\Leftrightarrow \ & \exists v' \in \textsc{Val} . 
(s' [x'' \mapsto v'] [x' \mapsto s'(x)] [x \mapsto v'] , h')  \models
 \textsf{wpo} (  \psi [x:=x'] ,  \mathbb{C}  ,\epsilon)  \\
 & \text{// By Lemma \ref{lma: Substitution for assignment} (Note that 
 $s' [x'' \mapsto v'] [x' \mapsto s'(x)] (x'') = v '$)} \\
  \Leftrightarrow \ & \exists v'  . 
(s' [x'' \mapsto v'] [x' \mapsto s'(x)] [x \mapsto v']  , h')  \in
  \text{WPO} 
 \llbracket   \psi [x:=x']  ,  \mathbb{C}  ,\epsilon \rrbracket   
 \ \ \text{ // IH and Lemma \ref{lma: sensei's one shot lemma}}\\
   \Leftrightarrow \ &
    \exists (s,h) , \exists v' . (s,h) \models \psi [x:=x']  
   \land ( (s,h) , (s' [x'' \mapsto v'] [x' \mapsto s'(x)] [x \mapsto v']  , h')  ) \in \llbracket  \mathbb{C} \rrbracket _\epsilon  \\
   \Leftrightarrow \ &
   \exists (s,h)  , \exists v' . (s [x \mapsto s(x')] ,h) \models \psi  \land ( (s,h) , 
   (s' [x'' \mapsto v'] [x' \mapsto s'(x)] [x \mapsto v']  , h')  ) \in \llbracket  \mathbb{C} \rrbracket _\epsilon  
\end{align*}
Firstly, we show $(s', h')  \in \text{\normalfont{WPO}}\llbracket \psi , \texttt{local $x  $ in $\mathbb{C}$}, \epsilon\rrbracket   \Longrightarrow
(s' , h') \models \textsf{wpo}_\text{sh} ( \psi , \texttt{local $x $ in $\mathbb{C}$} ,\epsilon)$ holds.
Let us say that $s_1 =  s [x \mapsto v] [x' \mapsto s(x)] [x'' \mapsto v'] $ and
$h_1 =  h  $.
\begin{align*}
& (s', h')  \in \text{\normalfont{WPO}}\llbracket \psi , \texttt{local $x $ in $\mathbb{C}$}, \epsilon\rrbracket  \\
 \Leftrightarrow \ & \exists (s , h ), \exists v , v' . 
(s , h ) \models \psi \land  s(x) = s'(x) 
 \land  ( (s[x \mapsto v] ,h) , (s' [x \mapsto v'] ,h') ) \in 
\llbracket  \mathbb{C}  \rrbracket _\epsilon  \\
 \Rightarrow \ & \exists (s , h ), \exists v  , v' . 
(s_1 [x \mapsto s_1(x')] , h ) \models \psi \land  s(x) = s'(x) 
 \land  ( (s[x \mapsto v] ,h) , (s' [x \mapsto v'] ,h') ) \in 
\llbracket  \mathbb{C}  \rrbracket _\epsilon  \\
&\text{// $ s_1 [x \mapsto s_1(x')] =
s [x\mapsto v] [x' \mapsto s(x)] [x'' \mapsto v'] [x \mapsto s(x)] $, note that 
$s_1(x')= s(x)$}\\
&\text{// Since $x'$ and $x''$ are fresh, the following holds.}\\ 
&\text{// $s,h \models \psi \iff s[x\mapsto s(x)],h \models \psi \iff
s_1 [x \mapsto s_1(x')],h \models \psi $}\\
 \Rightarrow \ & \exists (s , h ), \exists v , v'  . 
(s_1 [x \mapsto s_1(x')] , h ) \models \psi \land  s(x) = s'(x)  \\
 & \land  ( (s[x \mapsto v] [x' \mapsto s(x)] [x'' \mapsto v'] ,h_1) ,
 (s' [x \mapsto v']  [x' \mapsto s(x)] [x'' \mapsto v']  , h')  ) \in 
\llbracket  \mathbb{C}  \rrbracket _\epsilon  \\
&\text{// By Lemma \ref{lma: quantified variables do not affect to operational semantics}}\\
 \Rightarrow \ & \exists (s , h ), \exists v , v'  . 
(s_1 [x \mapsto s_1(x')] , h ) \models \psi \land  s(x) = s'(x)  \\
 & \land  ( (s_1 ,h_1) ,
 (s' [x'' \mapsto v' ] [x' \mapsto s'(x)] [x \mapsto v' ]  , h')  ) \in 
\llbracket  \mathbb{C}  \rrbracket _\epsilon  \\
&\text{// Using $s(x) = s'(x)$} \\
 \Rightarrow \ & \exists (s , h ), \exists v , v'  . 
(s_1 [x \mapsto s_1(x')] , h ) \models \psi     \land  ( (s_1 ,h_1) ,
 (s' [x'' \mapsto v' ] [x' \mapsto s'(x)] [x \mapsto v' ]  , h')  ) \in 
\llbracket  \mathbb{C}  \rrbracket _\epsilon  \\
\Rightarrow \ &
    \exists (s,h) ,\exists v'  . (s [x \mapsto s(x')] ,h) \models \psi  \land ( (s,h) , 
   (s' [x'' \mapsto v'] [x' \mapsto s'(x)] [x \mapsto v']  , h')  ) \in \llbracket  \mathbb{C} \rrbracket _\epsilon  \\
\Leftrightarrow \ & (s' , h') \models \textsf{wpo}_\text{sh} ( \psi , \texttt{local $x$ in $\mathbb{C}$} ,\epsilon) 
\end{align*}

Secondly, we show $(s', h')  \in \text{\normalfont{WPO}}\llbracket \psi , \texttt{local $x $ in $\mathbb{C}$}, \epsilon\rrbracket   \Longleftarrow
(s' , h') \models \textsf{wpo}_\text{sh} ( \psi , \texttt{local $x$ in $\mathbb{C}$} ,\epsilon)$ holds.
Let us say that $s_2 =  s [x \mapsto s(x')] [x' \mapsto s'(x')] [x'' \mapsto s'(x'')] $ and
$h_2 =  h  $.

\begin{align*}
&
(s' , h') \models \textsf{wpo}_\text{sh} ( \psi , \texttt{local $x$ in $\mathbb{C}$} ,\epsilon) \\
\Leftrightarrow \ &
   \exists (s,h) , \exists v' . (s [x \mapsto s(x')] ,h) \models \psi 
   \land ( (s,h) , 
   (s' [x'' \mapsto v'] [x' \mapsto s'(x)] [x \mapsto v']  , h')  ) \in \llbracket  \mathbb{C} \rrbracket _\epsilon  \\
\Rightarrow \ &
   \exists (s,h) , \exists v' . (s_2 ,h_2) \models \psi \land ( (s,h) , 
   (s' [x'' \mapsto v' ] [x' \mapsto s'(x)] [x \mapsto v' ]  , h')  ) \in \llbracket  \mathbb{C} \rrbracket _\epsilon  \\
   & \text{// Since $x'$ and $x''$ are fresh} \\
\Rightarrow \ &
   \exists (s,h) , \exists v' . (s_2 ,h_2) \models \psi \land
   s_2(x) = s'(x)
   \land ( (s,h) , 
   (s' [x'' \mapsto v'] [x' \mapsto s'(x)] [x \mapsto v']  , h')  ) \in \llbracket  \mathbb{C} \rrbracket _\epsilon  \\
&\text{// $s_2(x) = s(x')$ holds since, by Lemma \ref{lma: quantified variables have same store value}, $s(x') = s' [x'' \mapsto v'] [x' \mapsto s'(x)] [x \mapsto v'] (x') = s'(x)$ holds} \\
\Rightarrow \ &
   \exists (s,h) , \exists v' . (s_2 ,h_2) \models \psi \land  s_2(x) = s'(x) \\
&   \land ( (s [x' \mapsto s'(x')] [x'' \mapsto s'(x'')]  ,h) , 
   (s' [x'' \mapsto v'] [x' \mapsto s'(x)] [x \mapsto v'] 
   [x' \mapsto s'(x')] [x'' \mapsto s'(x'')], h')  ) \in \llbracket  \mathbb{C} \rrbracket _\epsilon  \\
& \text{// Lemma \ref{lma: quantified variables do not affect to operational semantics}} \\
\Rightarrow \ &
   \exists (s,h) , \exists v' . (s_2 ,h_2) \models \psi  \land s_2(x) = s'(x) \\
&   \land ( (s [x' \mapsto s'(x')] [x'' \mapsto s'(x'')]  ,h) , 
   (s'  [x \mapsto v'] [x' \mapsto s'(x')] [x'' \mapsto s'(x'')], h')  ) \in \llbracket  \mathbb{C} \rrbracket _\epsilon  \\
\Rightarrow \ &
   \exists (s,h) , \exists v , v' . (s_2 ,h_2) \models \psi \land
     s_2(x) = s'(x)   \land ( (s_2[x \mapsto v]   ,h) , 
   (s'  [x \mapsto v'] , h')  ) \in \llbracket  \mathbb{C} \rrbracket _\epsilon  \\
&\text{// $ \exists v . s_2[x \mapsto v] = 
\exists v . s [x \mapsto s(x')] [x' \mapsto s'(x')] [x'' \mapsto s'(x'')] [x \mapsto v] = 
s [x' \mapsto s'(x')] [x'' \mapsto s'(x'')]  $} \\
\Rightarrow \ & \exists (s_2 , h_2 ), \exists v, v'  . 
(s_2 , h_2 ) \models \psi \land  s_2(x) = s'(x) 
 \land  ( (s_2[x \mapsto v] ,h_2) , (s' [x \mapsto v'] ,h') ) \in 
\llbracket  \mathbb{C}  \rrbracket _\epsilon 
\end{align*}

Therefore, Lemma \ref{lma: expressiveness of atomic command} holds in this case.

\subsubsection{Case $\mathbb{C}_1 ; \mathbb{C}_2$ } 

As an induction hypothesis (IH), Lemma \ref{lma: expressiveness of atomic command} holds in $\mathbb{C}_1$ and $\mathbb{C}_2$.
\begin{itemize}
\item $\epsilon = \textit{ok}$ 

By the definition, 
$\text{\normalfont{WPO}}\llbracket \psi , \mathbb{C}_1 ; \mathbb{C}_2, \textit{ok}\rrbracket  = \{ \sigma' \mid \exists \sigma. 
\sigma \models \psi\land
(\sigma, \sigma') \in \llbracket  \mathbb{C}_1 ; \mathbb{C}_2 \rrbracket _\textit{ok} \} = 
\{ \sigma' \mid \exists \sigma,  \sigma''. \sigma \models \psi\land 
(\sigma, \sigma'') \in \llbracket  \mathbb{C}_1 \rrbracket _\textit{ok} \land 
(\sigma'', \sigma') \in \llbracket  \mathbb{C}_2 \rrbracket _\textit{ok} 
\}$
and
$\textsf{wpo}_\text{sh} (\psi , \mathbb{C}_1 ; \mathbb{C}_2, \textit{ok}) = 
\textsf{wpo}(\textsf{wpo}_\text{sh} (\psi , \mathbb{C}_1, \textit{ok}), \mathbb{C}_2, \textit{ok})$
hold. 
Note that we pick an $\epsilon = \textit{ok}$ case in $\llbracket  \mathbb{C}_1 ; \mathbb{C}_2 \rrbracket _\textit{ok}$ and we denote $\sigma''$ as $(s'', h'')$.

We can use 
$\forall \sigma''. \sigma'' \in \{ \sigma'' \mid \exists \sigma.
\sigma \models \psi\land
(\sigma, \sigma'') \in \llbracket  \mathbb{C}_1 \rrbracket _\textit{ok} \} \iff
\sigma'' \models \textsf{wpo}_\text{sh} (\psi , \mathbb{C}_1, \textit{ok})$
by the induction hypothesis (IH) with respect to $\mathbb{C}_1$.
Consequently, it is enough to show that 
$\forall \sigma'. \sigma' \in 
\text{\normalfont{WPO}}\llbracket \textsf{wpo}_\text{sh} (\psi, \mathbb{C}_1, \textit{ok})
, \mathbb{C}_2, \textit{ok}\rrbracket  =
\{ \sigma' \mid  \exists \sigma''. \sigma'' \models  \textsf{wpo}_\text{sh} (\psi, \mathbb{C}_1, \textit{ok}) \land
(\sigma'', \sigma') \in \llbracket  \mathbb{C}_2 \rrbracket _\textit{ok}
\} \iff \sigma' \models \textsf{wpo}(\textsf{wpo}_\text{sh} (\psi, \mathbb{C}_1, \textit{ok}), \mathbb{C}_2, \textit{ok})$ holds.
Suppose that $\textsf{cano}(\textsf{wpo}_\text{sh}(\psi, \mathbb{C}_1, \textit{ok})
, \mathbb{C}_2) = \exists \overrightarrow{x_i} . \bigvee_{i \in I} \phi_i $, then
$\textsf{wpo} ( \textsf{wpo}_\text{sh}(\psi, \mathbb{C}_1, \textit{ok}) , 
\mathbb{C}_2, \textit{ok}) =
 \bigvee_{i \in I} \exists \overrightarrow{x_i} .\textsf{wpo}_\text{sh} ( \phi_i, \mathbb{C}_2, \textit{ok})  $ and $\text{\normalfont{WPO}}\llbracket \textsf{wpo}_\text{sh} (\psi, \mathbb{C}_1, \textit{ok})
, \mathbb{C}_2, \textit{ok}\rrbracket  = 
\bigcup_{i \in I} \text{\normalfont{WPO}}\llbracket  \exists \overrightarrow{x_i}  . \phi_i,\mathbb{C}_2 , \textit{ok} \rrbracket $ holds by Lemma \ref{lma: from bigvee to bigcup}.
Then, the statement also holds  by IH w.r.t. $\mathbb{C}_2$ and Lemma \ref{lma: sensei's one shot lemma}.
(Note that this procedure is similar to our proof of Proposition \ref{prop: expressiveness of WPO calculus}.)
Consequently, 
Lemma \ref{lma: expressiveness of atomic command} also holds in this case.

\item $\epsilon = \textit{er}$

By the definition of $\textsf{wpo}_\text{sh}$,  
$\textsf{wpo}_\text{sh}(\psi , \mathbb{C}_1;\mathbb{C}_2, \textit{er}) = 
\textsf{wpo}_\text{sh}(\psi , \mathbb{C}_1, \textit{er}) \lor
\textsf{wpo}(\textsf{wpo}_\text{sh}(\psi , \mathbb{C}_1, \textit{ok}), \mathbb{C}_2, \textit{er})$ holds.
Additionally,
by the definition of $\llbracket  \mathbb{C}_1 ; \mathbb{C}_2 \rrbracket _\textit{er}$ and IH w.r.t. $\mathbb{C}_1$, 
$\text{\normalfont{WPO}}\llbracket \psi , \mathbb{C}_1 ; \mathbb{C}_2, \textit{er}\rrbracket  = 
\{ \sigma' \mid \exists \sigma,  \sigma''. \sigma \models \psi\land 
( (\sigma, \sigma') \in  \llbracket  \mathbb{C}_1 \rrbracket _\textit{er}  \lor
(\sigma, \sigma'') \in \llbracket  \mathbb{C}_1 \rrbracket _\textit{ok} \land 
(\sigma'', \sigma') \in \llbracket  \mathbb{C}_2 \rrbracket _\textit{er} )  \}= 
\{ \sigma' \mid \exists \sigma,  \sigma''. 
(\sigma \models \psi \land 
(\sigma, \sigma') \in  \llbracket  \mathbb{C}_1 \rrbracket _\textit{er}  ) \lor
(\sigma \models \psi \land (\sigma, \sigma'') \in \llbracket  \mathbb{C}_1 \rrbracket _\textit{ok} \land 
(\sigma'', \sigma') \in \llbracket  \mathbb{C}_2 \rrbracket _\textit{er} )  \}
= \text{\normalfont{WPO}}\llbracket \psi , \mathbb{C}_1 , \textit{er}\rrbracket  
\cup \text{\normalfont{WPO}}\llbracket \textsf{wpo}_\text{sh} (\psi, \mathbb{C}_1, \textit{ok})
, \mathbb{C}_2, \textit{er}\rrbracket  
$ holds.

IH w.r.t. $\mathbb{C}_1$ confirms the equivalence of following.
\[
\forall \sigma'.
 \sigma' \in \text{\normalfont{WPO}}\llbracket \psi , \mathbb{C}_1 , \textit{er}\rrbracket  
\iff \sigma' \models \textsf{wpo}_\text{sh} (\psi , \mathbb{C}_1, \textit{er}),
\] 
Furthermore, the IH w.r.t. $\mathbb{C}_2$ confirms the following in a manner similar to the case of $\epsilon = \textit{ok}$ before.
\[
\forall \sigma'.
\sigma' \in \text{\normalfont{WPO}}\llbracket \textsf{wpo}_\text{sh} (\psi, \mathbb{C}_1, \textit{ok})
, \mathbb{C}_2, \textit{er}\rrbracket  \iff
\sigma' \models 
\textsf{wpo}(\textsf{wpo}_\text{sh}(\psi , \mathbb{C}_1, \textit{ok}), \mathbb{C}_2, \textit{er})
\]
Consequently, Lemma \ref{lma: expressiveness of atomic command} holds in this case.

\end{itemize}

\subsubsection{Case $\mathbb{C}_1 + \mathbb{C}_2$} 
\begin{itemize}
    \item In this case, we do not make a division between the \textit{ok} and \textit{er} cases. 
    As an induction hypothesis (IH), let us say that  Lemma \ref{lma: expressiveness of atomic command} holds in $\mathbb{C}_1$ and $\mathbb{C}_2$.

    The goal is to prove $\forall \sigma'. \sigma' \in \text{\normalfont{WPO}}
    \llbracket \psi , \mathbb{C}_1 + \mathbb{C}_2 ,\epsilon \rrbracket  \iff \sigma' \models \textsf{wpo}_\text{sh} (\psi, \mathbb{C}_1 + \mathbb{C}_2 ,\epsilon)$. We can generate the following equivalent to the above goal by the definition of $\text{\normalfont{WPO}}\llbracket \psi , \mathbb{C}_1 + \mathbb{C}_2 ,\epsilon\rrbracket $ and $\textsf{wpo}_\text{sh} (\psi, \mathbb{C}_1 + \mathbb{C}_2 , \epsilon)$.
    
    \[
    \begin{aligned}
        \forall \sigma'. &\sigma' \in \{ \sigma' \mid \exists\sigma. \sigma \models \psi\land (\sigma, \sigma') \in \llbracket \mathbb{C}_1 \rrbracket _\epsilon \} \text{ or} \\
        &\sigma' \in \{ \sigma' \mid \exists\sigma. \sigma \models \psi\land (\sigma, \sigma') \in \llbracket \mathbb{C}_2 \rrbracket _\epsilon \} \\
        &\iff  \sigma' \models \textsf{wpo}_\text{sh} ( \psi, \mathbb{C}_1  , \epsilon) \text{ or }
        \sigma' \models \textsf{wpo}_\text{sh} ( \psi, \mathbb{C}_2  , \epsilon)
    \end{aligned}
    \]
    
    The above statement can be proved by the IH for each $\mathbb{C}_1$ and $\mathbb{C}_2$, consequently, Lemma \ref{lma: expressiveness of atomic command} holds in this case.
\end{itemize}

\subsubsection{Case $\mathbb{C}^\star$} 

Before proving this case, we establish the case of finite iteration, i.e., $\mathbb{C}^m$ where $m \geq 1$ and $m \in \mathbb{N}$.
\[
\textsf{wpo}_\text{sh} (\psi , \mathbb{C}^m, \textit{ok}) = \Upsilon(m) \text{ and } 
\textsf{wpo}_\text{sh} (\psi , \mathbb{C}^m, \textit{er}) = \bigvee_{i=0}^{m-1} \textsf{wpo}(\Upsilon(i), \mathbb{C}, \textit{er})
\]
Here, $\Upsilon(0) = \psi $, and for $n \geq 1$, $\Upsilon(n+1) = \textsf{wpo}(\Upsilon(n), \mathbb{C}, \textit{ok})$.

As an induction hypothesis (IH), let us say that Lemma \ref{lma: expressiveness of atomic command} holds in $\mathbb{C}$. 
Then, by the IH and the fact that Lemma \ref{lma: expressiveness of atomic command} holds in the \textbf{Case} $\mathbb{C}_1 ; \mathbb{C}_2$, 
Lemma \ref{lma: expressiveness of atomic command} extends to finite iteration. This is because $\mathbb{C}^m = \overbrace{\mathbb{C} ; \ldots ; \mathbb{C}}^\text{for $m$ times}$ holds.

Now, we prove the case of $\mathbb{C}^\star$.
By the IH and the fact that Lemma \ref{lma: expressiveness of atomic command} holds in the case $\mathbb{C}_1 + \mathbb{C}_2$ and the case of $\mathbb{C}^m$, 
    Lemma \ref{lma: expressiveness of atomic command} extends to the $\mathbb{C}^\star$ case.
    Since by Definition \ref{def: Denotational semantics of ISL}, $\mathbb{C}^\star = \bigplus_{m \in \mathbb{N}} C^m$ holds.

\section{Proof of Lemma \ref{thm: for all p, c, epsilon, we have wpo}}\label{appendix: proof of thm for all we have wpo}
We prove this lemma by induction on $\mathbb{C}$.
In this proof, we omit cases where the return value of $\textsf{wpo}_\text{sh}$, given input $\psi$, is either $\textsf{false}$ or $\psi$.
When it returns $\textsf{false}$, the triple is trivially valid since the postcondition is unsatisfiable.
When it returns $\psi$, we can directly apply the corresponding proof rules, so the lemma holds in this case as well.

\hfill

\noindent{\textbf{Case} $\normalfont \mathbb{C} = x := \texttt{alloc()}$ 
and $\epsilon = \textit{ok}$} 

Here, $\psi = \left(\bigast_{i=1}^n y_i \not\mapsto \right) * \psi'$,
where $\psi'$ contains no atoms of the form $\not\mapsto$.
For space reasons, we abbreviate
$A = \bigvee_{j=1}^n \exists x'. \left(
\left(\bigast_{i=1}^n y_i \not\mapsto \right)[y_j \not\mapsto := y_j \mapsto - ] *
x \approx y_j * \psi'[x := x'] \right)$.

\[
\inferrule*[right=\textsc{Disj}]
{
{\inferrule*[right=\textsc{Alloc1}]
{ \ }
{[\psi]  \ x := \texttt{alloc()} \ 
[\textit{ok} : \exists x'.  
(  \psi[x := x'] * x \mapsto - )   ] }}
\ \ \ 
{\inferrule*[right=\textsc{Disj}]
{ {\normalfont\text{Please refer the below}} }
{ [\psi]  \ x := \texttt{alloc()} \ 
[\textit{ok} : A  ] } } 
}
{ [\psi]  \ x := \texttt{alloc()} \ 
[\textit{ok} : \exists x'.  
(  \psi[x := x'] * x \mapsto - ) \lor A  ] } 
\]

For space reasons, we abbreviate $B = \left( \mathop{\bigast}\limits_{\substack{i = 1 \\ i \ne j}}^{n} y_i \not\mapsto \right) * \psi'$.
The remainder of the proof is given below.

\[\scriptsize
{\inferrule*[right=\textsc{Disj}]
{ 
\inferrule*[right=\textsc{Alloc2}]
{ \ }
{\inferrule*[right=\textsc{Cons}]
{ 
[y_j \not\mapsto * B ]  \ x := \texttt{alloc()} \ 
[\textit{ok} : 
  \exists x'.   ( x \mapsto - * x \approx y_j * B [x := x'] )  ]  
  \ \ \ \text{(for any $1 \leq j \leq n$)} 
  }
{\inferrule*[right=\textsc{Cons}]
{ 
[\psi]  \ x := \texttt{alloc()} \ 
[\textit{ok} : 
  \exists x'.   ( y_j \mapsto - * x \approx y_j *
\left(\bigast_{\substack{i=1 \\ i \neq j }}^n y_i \not\mapsto \right) *
 \psi' [x := x'] )  ] 
 \ \ \ \text{(for any $1 \leq j \leq n$)} 
 }
{
[\psi]  \ x := \texttt{alloc()} \ 
[\textit{ok} : 
  \exists x'.   ( 
\left(\bigast_{i=1}^n y_i \not\mapsto \right) [y_j \not\mapsto := y_j \mapsto - ] *
x \approx y_j * \psi' [x := x'] )  ] 
\ \ \ \text{(for any $1 \leq j \leq n$)} }
}}
}
{ [\psi]  \ x := \texttt{alloc()} \ 
[\textit{ok} : 
 \bigvee_{j=1}^n \exists x'.   ( 
\left(\bigast_{i=1}^n y_i \not\mapsto \right) [y_j \not\mapsto := y_j \mapsto - ] *
x \approx y_j * \psi' [x := x'] )  ] } } 
\]

\noindent{\textbf{Case} $\normalfont \mathbb{C} = x := \texttt{alloc()}$ 
and $\epsilon = \textit{er}$}

This case is straightforward.

\noindent{\textbf{Case} $\normalfont \mathbb{C} = \mathbb{C}_1 ; \mathbb{C}_2$ and $\epsilon = \textit{er}$}

\[
\inferrule*[right=\scriptsize\textsc{Disj}]{
\inferrule*[right=\scriptsize\textsc{Seq1}]{
{\inferrule*[]
{}
{%
\stackrel{\text{(Induction Hypothesis)}}%
{[\psi]  \ \mathbb{C}_1  \ 
[\textit{er} :
\textsf{wpo}_\text{sh}(\psi ,\mathbb{C}_1, \textit{er}) ]}
}
}
}
{[\psi]  \ \mathbb{C}_1 ; \mathbb{C}_2 \ 
[\textit{er} :
\textsf{wpo}_\text{sh}(\psi ,\mathbb{C}_1, \textit{er}) ]} \ \ \ 
\inferrule*[right=\scriptsize\textsc{Seq2}]
{  \text{Please refer the below}
}
{[\psi]  \ \mathbb{C}_1 ; \mathbb{C}_2 \ 
[\textit{er} :
\textsf{wpo}(\textsf{wpo}_\text{sh} (\psi,\mathbb{C}_1, \textit{ok}), \mathbb{C}_2,\textit{er})]}
}
{\inferrule*[right=\scriptsize\normalfont\text{By the definition}]
{ [\psi]  \ \mathbb{C}_1 ; \mathbb{C}_2 \ 
[\textit{er} :
\textsf{wpo}_\text{sh}(\psi ,\mathbb{C}_1, \textit{er})  \lor
\textsf{wpo} (\textsf{wpo}_\text{sh}(\psi,\mathbb{C}_1, \textit{ok}), \mathbb{C}_2,\textit{er})] }
{ [\psi]  \ \mathbb{C}_1 ; \mathbb{C}_2 \ 
[\textit{er} : \textsf{wpo}_\text{sh}(\psi, \mathbb{C}_1 ; \mathbb{C}_2 , \textit{er} )] }} 
\]

For space reasons, we provide the remainder of the proof below.

\[
\inferrule*[right=\scriptsize\textsc{Seq2}]
{
\inferrule*[]
{}
{
\stackrel{\text{(Induction Hypothesis)}}%
{[\psi]  \ \mathbb{C}_1  \ 
[\textit{ok} : \textsf{wpo}_\text{sh}(\psi, \mathbb{C}_1 , \textit{ok})]}
}
\and
\inferrule*[]
{}
{
\stackrel{\text{(Induction Hypothesis and Lemma \ref{lma: for all derivation if and then})}}%
{[ \textsf{wpo}_\text{sh} (\psi, \mathbb{C}_1 , \textit{ok}) ]  \ \mathbb{C}_2 \ 
[ \textit{er} : \textsf{wpo} (\textsf{wpo}_\text{sh}(\psi, \mathbb{C}_1  , \textit{ok} ) 
, \mathbb{C}_2 , \textit{er} ) ]}
}
}
{[\psi]  \ \mathbb{C}_1 ; \mathbb{C}_2 \ 
[\textit{er} :
\textsf{wpo}(\textsf{wpo}_\text{sh} (\psi,\mathbb{C}_1, \textit{ok}), \mathbb{C}_2,\textit{er})]}
\]

\noindent{\textbf{Case} $\normalfont \mathbb{C} = \mathbb{C}_1 ; \mathbb{C}_2$ and $\epsilon = \textit{ok}$}

\[
\inferrule*[right=\textsc{Seq2}]
{  
\inferrule*[]
{}
{
\stackrel{\text{(Induction Hypothesis)}}%
{[\psi]  \ \mathbb{C}_1  \ 
[\textit{ok} : \textsf{wpo}_\text{sh}(\psi, \mathbb{C}_1 , \textit{ok})] }
}
\and
\inferrule*[]
{}
{
\stackrel{\text{(Induction Hypothesis and Lemma \ref{lma: for all derivation if and then})}}%
{[ \textsf{wpo}_\text{sh} (\psi, \mathbb{C}_1 , \textit{ok}) ]  \ \mathbb{C}_2 \ 
[ \textit{ok} :  \textsf{wpo} (\textsf{wpo}_\text{sh}(\psi, \mathbb{C}_1  , \textit{ok} ) 
, \mathbb{C}_2 , \textit{ok} ) ] }
}
}
{\inferrule*[right=\normalfont\text{By the definition}]
{ [ \psi]  \ \mathbb{C}_1 ; \mathbb{C}_2 \ 
[\textit{ok} :  \textsf{wpo}(\textsf{wpo}_\text{sh}(\psi, \mathbb{C}_1  , \textit{ok} ) 
, \mathbb{C}_2 , \textit{ok} )] }
{ [  \psi]  \ \mathbb{C}_1 ; \mathbb{C}_2 \ 
[\textit{ok} :  \textsf{wpo}_\text{sh} (\psi, \mathbb{C}_1 ; \mathbb{C}_2 , \textit{ok} )] }} 
\]

\noindent{\textbf{Case} $\normalfont \mathbb{C} = \mathbb{C}_1 + \mathbb{C}_2$ }

\[\inferrule*[right=\textsc{Disj}]
{
\inferrule*[left=Choice]
{\inferrule*[]
{}
{
\stackrel{\text{(Induction Hypothesis)}}%
{[\psi]  \ \mathbb{C}_1  \ 
[\epsilon : \textsf{wpo}_\text{sh} (\psi, \mathbb{C}_1  , \epsilon ) ]}
}
}
{[\psi]  \ \mathbb{C}_1 + \mathbb{C}_2 \ 
[\epsilon : \textsf{wpo}_\text{sh} (\psi, \mathbb{C}_1  , \epsilon ) ]} 
\and
\inferrule*[right=Choice]
{\inferrule*[]
{}
{
\stackrel{\text{(Induction Hypothesis)}}%
{[\psi]  \ \mathbb{C}_2  \ 
[\epsilon : \textsf{wpo}_\text{sh} (\psi, \mathbb{C}_2  , \epsilon ) ]}
}
}
{[\psi]  \ \mathbb{C}_1 + \mathbb{C}_2 \ 
[\epsilon : \textsf{wpo}_\text{sh} (\psi, \mathbb{C}_2  , \epsilon ) ]}  }
{\inferrule*[right=\normalfont\text{By the definition}]
{ [\psi]  \ \mathbb{C}_1 + \mathbb{C}_2 \ 
[\epsilon : \textsf{wpo}_\text{sh} (\psi, \mathbb{C}_1  , \epsilon ) \lor
\textsf{wpo}_\text{sh} (\psi,   \mathbb{C}_2 , \epsilon )] }
{ [\psi]  \ \mathbb{C}_1 + \mathbb{C}_2 \ 
[\epsilon : \textsf{wpo}_\text{sh} (\psi, \mathbb{C}_1 + \mathbb{C}_2 , \epsilon )] }} 
\]

\noindent{\textbf{Case} $\normalfont \mathbb{C} = \texttt{local $x  $ in $\mathbb{C}$}$}

Please note that 
$\textsf{wpo}_\text{sh} ( \psi , \texttt{local $x$ in $\mathbb{C}$} ,\epsilon) = \bigvee_{j \in J} \exists x'' , \overrightarrow{x_j}   . \varphi_j
= \exists x'' . 
( \textsf{wpo} (  \psi [x:=x'] ,  \mathbb{C}  ,\epsilon)
[x:=x''] [x' := x] )$
(where $
\textsf{wpo} (  \psi [x:=x'] ,  \mathbb{C}  ,\epsilon)
[x:=x''] [x' := x] = \bigvee_{j \in J} \exists \overrightarrow{x_j} . \varphi_j  $).

\[
\inferrule*[right=\scriptsize\normalfont{\textsc{Cons }(Renaming $x$ with $x''$)}]
{  
\inferrule*[right=\textsc{Local}]
{  \inferrule*[]
{} 
{
\stackrel{\text{(Induction Hypothesis and Lemma \ref{lma: for all derivation if and then})}}%
{[\psi ]  \ \mathbb{C} \ 
[\epsilon :  
 \textsf{wpo} (  \psi  ,  \mathbb{C}  ,\epsilon)  ] \ \ \ 
 x \notin \textsf{fv}(\psi)}
}
 }
{[\psi ]  \ \texttt{local $x $ in $\mathbb{C}$} \ 
[\epsilon :  \exists x . 
 \textsf{wpo} (  \psi   ,  \mathbb{C}  ,\epsilon)  ] \ \ \ 
 x \notin \textsf{fv}(\psi)} }
{\inferrule*[right=\normalfont\text{By the definition}]
{[\psi ]  \ \texttt{local $x $ in $\mathbb{C}$} \ 
[\epsilon :  \exists x'' . 
( \textsf{wpo} (  \psi [x:=x']  ,  \mathbb{C}  ,\epsilon)
[x:=x''] [x' := x] )  ]}
{[ \psi]  \ \texttt{local $x$ in $\mathbb{C}$} \ [\epsilon : 
\textsf{wpo}_\text{sh} (\psi, \texttt{local $x$ in $\mathbb{C}$}, \epsilon) ] }}
\]

\noindent{\textbf{Case} $\normalfont \mathbb{C} = \mathbb{C}^\star$ and $\epsilon = \textit{ok}$}

To prove this case, we employ a version of the \textsc{Backwards Variant}, originally outlined in the IL paper \cite{o2019incorrectness}, with slight modifications.
\[
\inferrule[\textsc{Backwards Variant} {\normalfont\text{(where $n \in \mathbb{N}$ is fresh)}}]
{  \forall n \in \mathbb{N} .  [P(n) ] \ \mathbb{C} \ [\textit{ok}:P(n+1)]  }
{  [P(0) ] \ \mathbb{C}^\star \ [\textit{ok}:\bigvee_{n \in \mathbb{N}} P(n)] }
\]
We demonstrate that $\textsc{Backwards Variant}$ is derivable using our ISL proof rules.
At first, we show that $[P(0) ] \ \mathbb{C}^\star \ [\textit{ok}:P(n)] $ is derivable by induction on $n$.
Here, $P(n)$ denotes the loop variant and $P(0)=P$.

As a base case ($n = 0$), we show the following.
\[
\inferrule*[right=\textsc{Loop zero}]
{ \ }
{ [P(0)] \ \mathbb{C}^\star \ [\textit{ok}: P(n)] }
\]

Next, we assume that $[P(0)] \ \mathbb{C}^\star \ [\textit{ok}: P(n)]$ is derivable as an induction hypothesis. Additionally, we utilize the premise $[P(n)] \ \mathbb{C} \ [\textit{ok}: P(n+1)]$, which is a premise of the \textsc{Backwards Variant}.
\[
\inferrule*[right=\textsc{Loop non-zero}]
{\inferrule*[right=\textsc{Seq2}]
{ [P(0)] \ \mathbb{C}^\star \ [\textit{ok}: P(n)] \  \  \  [P(n)] \ \mathbb{C} \ [\textit{ok}: P(n+1)]}
{ [P(0)] \ \mathbb{C}^\star ; \mathbb{C} \ [\textit{ok}: P(n+1)]}}
{ [P(0)] \ \mathbb{C}^\star \ [\textit{ok}: P(n+1)] }
\]
By above, we show that $\forall n \in \mathbb{N} . [P(0)] \ \mathbb{C}^\star \ [\textit{ok}: P(n)]$ is derivable.
Then, we will finally prove that \textsc{Backwards Variant} is derivable in our system.

\[
\inferrule*[right=\textsc{Cons}]
{\inferrule*[right=\textsc{Disj}]
{ \forall n \in \mathbb{N}  . [P(0)] \ \mathbb{C}^\star \ [\textit{ok}: P(n)]}
{ [\bigvee_{n \in \mathbb{N}} P(0)] \ \mathbb{C}^\star \ [\textit{ok}: \bigvee_{n \in \mathbb{N}} P(n)] }}
{ [P(0)] \ \mathbb{C}^\star \ [\textit{ok}: \bigvee_{n \in \mathbb{N}} P(n)] }
\]

Now, we prove Lemma \ref{thm: for all p, c, epsilon, we have wpo} in this case using \textsc{Backwards Variant}.
(In this context, IH denotes the induction hypothesis.)

\[
\inferrule*[right=\textsc{Backwards Variant}]
{
{\inferrule*[]
{}
{\inferrule*[right=\normalfont\text{By the definition}]
{
\stackrel{\text{(Induction Hypothesis and Lemma \ref{lma: for all derivation if and then})}}%
{\forall  n \in \mathbb{N}. [ \Upsilon(n) ]  \ \mathbb{C} \ 
[\textit{ok} :  \textsf{wpo} ( \Upsilon(n) , \mathbb{C} , \textit{ok} ) ]}
}
{\forall  n \in \mathbb{N}. [ \Upsilon(n) ]  \ \mathbb{C} \ 
[\textit{ok} :  \Upsilon(n+1) ] } }
}}
{\inferrule*[right=\normalfont\text{By the definition}]
{[ \Upsilon(0) ]  \ \mathbb{C}^\star  \ 
[\textit{ok} : \bigvee_{n \in \mathbb{N}} \Upsilon(n) ] }
{ [\psi ]  \ \mathbb{C}^\star  \ 
[ \textit{ok} : \textsf{wpo}_\text{sh} (\psi , \mathbb{C}^\star , \textit{ok})] }
}
\]

\noindent{\textbf{Case} $\normalfont \mathbb{C} = \mathbb{C}^\star$ and $\epsilon = \textit{er}$}

For space reasons, we abbreviate \textsc{BV} for \textsc{Backwards Variant}.

\[ 
\inferrule* [right=Seq2]
{  
    {\inferrule* [right=BV]
    {
    {
\stackrel{\text{(Induction Hypothesis and Lemma \ref{lma: for all derivation if and then})}}%
{\forall  n \in \mathbb{N}. [ \Upsilon(n) ]  \ \mathbb{C} \ 
[\textit{ok} :  \textsf{wpo} ( \Upsilon(n) , \mathbb{C} , \textit{ok} ) ]}
}
    }
    {
    [\psi ]  \ \mathbb{C}^\star  \ 
    [ \textit{ok} : \textsf{wpo}_\text{sh} (\psi , \mathbb{C}^\star , \textit{ok})] 
    } }
    \and
  \inferrule*[]
   {}
   {\inferrule* [right=Disj]
   {\inferrule* []
   {}
   {
   \stackrel{\text{(Induction Hypothesis and Lemma \ref{lma: for all derivation if and then})}}%
   {\forall n \in \mathbb{N}  . [\Upsilon (n)] \ \mathbb{C} \ 
    [\textit{er}:\textsf{wpo} (\Upsilon (n), \mathbb{C}, \textit{er})]}
    }
    }
   {[\bigvee_{n \in \mathbb{N}  } \Upsilon (n)] \ \mathbb{C} \ 
    [\textit{er}:\bigvee_{n \in \mathbb{N}    } \textsf{wpo} (\Upsilon (n), \mathbb{C}, \textit{er})]}}
    }
{\inferrule* [right=Loop non-zero]
   {[\psi ] \ \mathbb{C}^\star;\mathbb{C} \ 
    [\textit{er}:\bigvee_{n \in \mathbb{N}   } \textsf{wpo} (\Upsilon (n), \mathbb{C}, \textit{er})]} 
    { [\psi ] \ \mathbb{C}^\star \ 
    [\textit{er}:\bigvee_{n \in \mathbb{N}   } \textsf{wpo} (\Upsilon (n), \mathbb{C}, \textit{er})] } }
\]

\end{document}